\newcommand{\ket}[1]{\ensuremath{|#1\rangle}}
\newcommand{\bra}[1]{\ensuremath{\langle #1|}}
\newcommand{\proj}[1]{\ket{#1}\bra{#1}}
\newcommand{\be}{\begin{equation}}
\newcommand{\ee}{\end{equation}}
\newcommand{\ba}{\begin{eqnarray}}
\newcommand{\ea}{\end{eqnarray}}
\newcommand{\tket}[1]{\ket{\overline{#1}}}
\newcommand{\tbra}[1]{\bra{\overline{#1}}}
\newcommand{\tE}[1]{\overline{E}_{#1}}
\newcommand{\tproj}[1]{\ket{\overline{#1}}\bra{\overline{#1}}}
\newcommand{\tqx}[1]{\overline{Q}^{\rm X}_{#1}}
\newcommand{\E}{\mathcal{E}}
\newcommand{\bip}{\mathbb{C}^d\otimes\mathbb{C}^d}
\newcommand{\norm}[1]{\left\|#1\right\|}
\newcommand{\rIso}{\rho_{\rm iso}}
\newcommand{\id}{\mathbb{I}}
\newtheorem{theorem}{Theorem}
\newtheorem{alemma}{Lemma}[section]
\newtheorem{aproposition}[alemma]{Proposition}
\newtheorem{afact}[alemma]{Fact}
\newtheorem{acorollary}[alemma]{Corollary}
\newtheorem{atheorem}[alemma]{Theorem}
\newtheorem{adefinition}[alemma]{Definition}
\newtheorem{question}{Question}
\newtheorem{definition}{Definition}
\newcommand{\supA}{\overline{\sup_{\rm A}}\;}
\newcommand{\wt}[1]{\widetilde{#1}}
\newcommand{\mE}{\mathcal{E}}
\newcommand{\wtL}{\widetilde{\Lambda}}
\newcommand{\ep}{\epsilon}
\definecolor{nred}{rgb}{0.9,0.1,0.1}
\definecolor{nblack}{rgb}{0,0,0}
\definecolor{nblue}{rgb}{0.2,0.2,0.8}
\definecolor{ngreen}{rgb}{0.2,0.6,0.2}
\definecolor{ublue}{rgb}{0,0,0.5}
\definecolor{OliveGreen}{cmyk}{0.64,0,0.95,0.40}
\definecolor{pur}{rgb}{0.75,0,0.75}
\definecolor{nngrn}{rgb}{0,0.5,0.5}
\newcommand{\blu}{\color{nblue}}
\begin{document}
\title{Resource Preservability}

\author{Chung-Yun Hsieh}
\email{chung-yun.hsieh@icfo.eu}
\affiliation{ICFO - Institut de Ci\`encies Fot\`oniques, The Barcelona Institute of Science and Technology, 08860 Castelldefels, Spain}

\date{\today}

\begin{abstract} 
Resource theory is a general, model-independent approach aiming to understand the qualitative notion of resource quantitatively.
In a given resource theory, free operations are physical processes that do not create the resource and are considered zero-cost.
This brings the following natural question: For a given free operation, what is its ability to preserve a resource?
We axiomatically formulate this ability as the {\em resource preservability}, which is constructed as a channel resource theory induced by a state resource theory. 
We provide two general classes of resource preservability monotones: One is based on state resource monotones, and another is based on channel distance measures.
Specifically, the latter gives the robustness monotone, which has been recently found to have an operational interpretation.
As examples, we show that athermality preservability of a Gibbs-preserving channel can be related to the smallest bath size needed to thermalize all its outputs, and it also bounds the capacity of a classical communication scenario under certain thermodynamic constraints.
We further apply our theory to the study of entanglement preserving local thermalization (EPLT) and provide a new family of EPLT which admits arbitrarily small nonzero entanglement preservability and free entanglement preservation at the same time.
Our results give the first systematic and general formulation of the resource preservation character of free operations.
\end{abstract}

\maketitle

\section{Introduction}

An important goal in the study of physics is to understand and identify different {\em resources}: It may be an effect, an object, or a phenomenon, which enables us to achieve something that can never be achieved in its absence.
Before consuming the resource and triggering the advantages, one needs to make sure the given systems {\em have} the resource.
Hence, the first important question is: {\em How to probe it?}
Tremendous efforts have been made in this line of research for various resources.
For instance, the positive partial transpose criterion for entanglement is a representative result for entanglement detection~\cite{Ent-RMP,Peres1996,Horodecki1996}.
Also, various Bell inequalities and steering inequalities provide alternative ways of probing different quantum resources~\cite{Bell,Bell-RMP,Wiseman2007,Jones2007,steering-review,Steering-RMP}.

Knowing merely the existence of the resource is, however, insufficient for all applications.
This is because one may not only need the resource, but also need it to be strong enough: 
To demonstrate quantum advantages in teleportation~\cite{Bennett1993,Horodecki1999-2}, to witness a stronger than classical heat back-flow~\cite{Jennings2010}, or to violate a Bell/steering inequality, strong enough quantum correlations are necessary.
A quantitative understanding of qualitative resources is therefore crucial.
This question can be answered by a generic approach called {\em resource theory}, aiming to provide a general strategy to quantitatively formulate a given resource.

A resource theory can be interpreted as a triplet, consisting of the resource itself (e.g. entanglement), quantities without the resource (e.g. separable states), and physical processes that will not create the resource (e.g. local operation and classical communication channels~\cite{QCI-book}).
A resource theory provides a method to {\em quantify} the resource: With reasonable postulates, a {\em resource monotone} can be introduced, which can be interpreted as a quantifier attributing numbers to the resource content.
This important feature allows us to know more than just whether the resource exists or not: It also enables us to know the amount of resource.
Many results have been established in (but not limited to) the resource theories of entanglement~\cite{Ent-RMP,Vedral1997}, coherence ~\cite{Coherence-RMP,Baumgratz2014}, nonlocality~\cite{Bell-RMP,Wolfe2019}, steering~\cite{steering-review,Skrzypczyk2014,Piani2015,Gallego2015,Steering-RMP}, asymmetry~\cite{Gour2008,Marvian2016}, and athermality~\cite{Brandao2013,Brandao2015,Horodecki2013,Lostaglio2018,Serafini2019,Narasimhachar2019}.
Also, several general features of resource theories have been reported~\cite{Horodecki2013-2,Brandao2015-2,del_Rio2015,Coecke2016,Gour2017,Anshu2018,Regula2018,Bu2018,Liu2017,Lami2018,RT-RMP,Takagi2019,Takagi2019-2,Liu2019,Fang2019}. 
Notably, resource theories of {\em quantum channels} (or simply {\em channels}, which are also known as {\em completely-positive trace-preserving} maps~\cite{QCI-book}) and related topics have drawn much attention recently~\cite{Hsieh2017,Kuo2018,Pirandola2017,Dana2017,Bu2018,Wilde2018,Diaz2018,Zhuang2018,Gour2019-3,Theurer2019,Seddon2019,Rosset2018,LiuWinter2019,LiuYuan2019, Gour2019,Gour2019-2,Bauml2019,Wang2019,Berk2019,Takagi2019-3}.

\begin{center}
\begin{figure*}
\scalebox{0.8}{\includegraphics{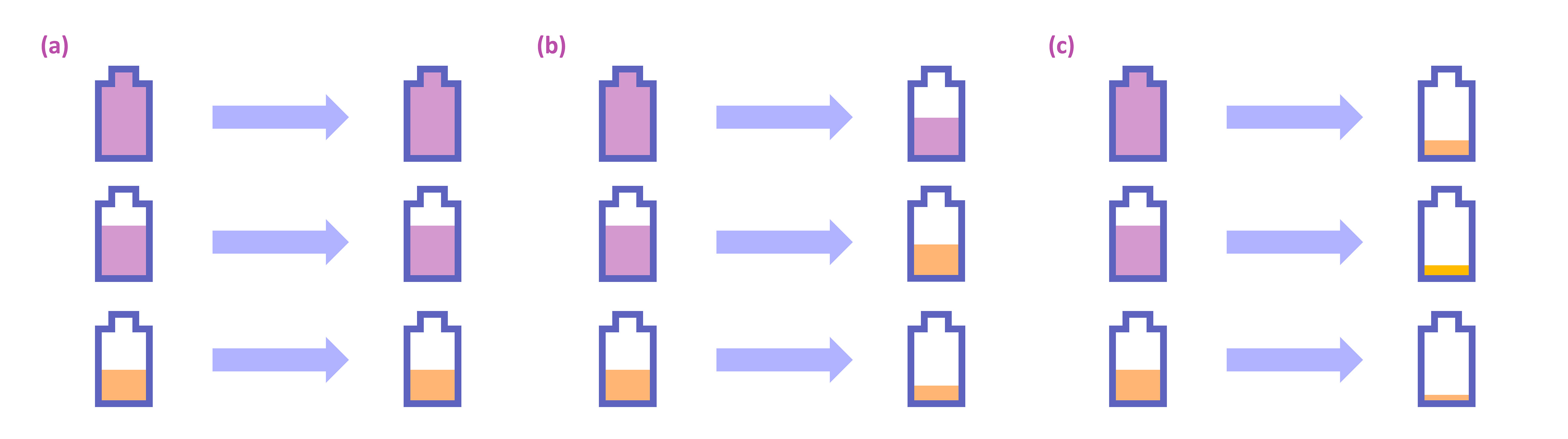} }
\caption{
In this work, we say a channel preserves a resource if it does not completely destroy the resource for every inputs.
To illustrate this, consider the above figure.
Each battery icon represents a state, and the height of the colored column is a resource measure.
The purple color indicates the existence of the resource, and the yellow means the state is free.
Then we have three examples: (a) The channel maintains the resourcefulness for every input.
(b) The channel partially degrades the resource content but can still output the resource for certain inputs. 
(c) The channel totally destroys resource for every input.
In this work, channels in (a) and (b) are considered to have the ability to preserve the resource, but not the one in (c).
}
\label{Fig:R-Preservability} 
\end{figure*}
\end{center}

One important ingredient in a resource theory is the allowed physical processes that will not create the resource, which are called {\em free operations}.
An ultimate goal for a resource theory is to identify under which conditions a quantity can be transformed into another via free operations.
A proper answer can tell us how resourceful the output quantities can be after free operations, giving useful information for both theoretical and practical purposes.
This is conceptually related to channel's ability to preserve a resource, which is a phenomenon lacking a quantitative understanding.
To be precise, we say a channel preserves a resource if it does not completely destroy the resource for every input.
In other words, it can partially degrade the resource, while there must be certain output states that are resourceful (see Fig.~\ref{Fig:R-Preservability}).
This motivates us to ask the following question:
\begin{center}
{\em
Given a free operation, how to quantify its ability to preserve the given resource?
}
\end{center}
In other words, we are asking for a quantitative study of the qualitative behavior (i.e.\;the ability to preserve the resource) of free operations, which can be interpreted as a resource theory inherited from the given resource theory.
With a rigorous answer, one will be able to identify the efficiency of the given free operation to protect the resource, which will clarify the fundamental structure of free operations in a general resource theory.
This question is also motivated by other purposes: For example, a suitable measure of the ability of a given dynamics to preserve entanglement can provide new insights to the study of the interplay between entanglement and thermalization~\cite{Hsieh2019}.
Also, some previous results have addressed similar issues for entanglement~\cite{Moravcíkova2010}, while a general treatment for free operations with arbitrary state resources is still unknown.

In this work, we axiomatically formulate the ability of free operations to preserve a resource of quantum states.
This ability, termed {\em resource preservability}, is formulated as a {\em channel} resource theory induced by the given {\em state} resource theory.
We provide general assumptions of the formulation, discussing the corresponding free operation, and introducing axioms on the resource preservability monotones.

Two classes of resource preservability monotones are provided: One is induced by the resource monotones of the given state resource theory, with the intuition behind as the maintained resource during the process; another is based on the channel distance from the set of free operations that will destroy the resource.
Moreover, the one based on channel distance will induce a robustness-like monotone, with an operational interpretation as the erasure cost of resource preservability due to Ref.~\cite{LiuWinter2019}.

As an example, we consider the resource theory of athermality and show that an resource preservability monotone of a given Gibbs-preserving channel is related to the smallest bath size needed to thermalize all its output states.
We further show that the robustness-like monotone serves as an upper bound of the classical capacity of a classical communication scenario subject to certain thermodynamic constraints.
These connect thermodynamics and classical communication to our current study.
As another application, we apply our theory to the study of {\em entanglement preserving local thermalizations} (EPLTs)~\cite{Hsieh2019}, which are local operation plus shared randomness channels that can locally thermalize subsystems for arbitrary inputs, while keep the global entanglement for certain inputs.
We show that EPLTs can admit arbitrarily small entanglement preservability at finite temperatures and preservation of free entanglement~\cite{Horodecki1998} simultaneously.
This reveals the fact that EPLT is a concept compatible with arbitrarily small ability of entanglement preservation, and can still preserve distillable entanglement at the same time.

This work is structured as follows.
We start with basic notions of a general state resource theory and general setup of resource preservability in Sec.~\ref{Sec:ResourcePreservability}.
After the formal setup, we formulate free super-channel in Sec.~\ref{Sec:FreeSuperChannel}, and in Sec.~\ref{Sec:ResourcePreservabilityMonotone} we axiomatically introduce resource preservability monotones.
In Sec.~\ref{Sec:All-Application}, we consider examples with the resource theory of athermality and apply the theory of resource preservability to EPLT.
Finally, we conclude in Sec.~\ref{Sec:Conclusion}.

\section{Setup and Assumptions}\label{Sec:ResourcePreservability}
A {\em resource theory} of quantum states, or simply a {\em state resource theory}, can be understood as a combination of the following three ingredients: The resource itself (denoted by $R$), states without the resource (the {\em free states}; denote the set of all free states by $\mathcal{F}_R$), and channels that can be applied freely and cannot create the resource (the {\em free operations}; denote the set of all free operations by $\mathcal{O}_R$).
Hence, a state resource theory can be written as the triplet
$
(R,\mathcal{F}_R,\mathcal{O}_R).
$
A {\em channel resource theory} can be defined in a similar way with a state resource theory by replacing states by channels, and the corresponding free operations ($\mathcal{O}_R$) will be super-channels~\cite{Chiribella2008,Chiribella2008-2}.

In this work, the only class of channel resource theories will be the one of resource preservability induced by different state resource theories. 
Hence, for convenience, from now on {\em $R$-theory} means the resource theory of the given resource $R$ of {\em quantum states}.
The corresponding channel resource theory of resource preservability (abbreviated as {\em $R$-preservability}) will be called an {\em $R$-preservability theory}.

To formulate $R$-preservability as a channel resource theory inherited from a given $R$-theory, the first thing is to identify the free channels.
To this end, we consider free operations of the given $R$-theory that cannot preserve resource for {\em every} input:
\begin{align}
\mathcal{O}_R^N\coloneqq\{\mathcal{E}\in \mathcal{O}_R\;|\;\mathcal{E}(\rho)\in \mathcal{F}_R\;\forall\,\rho\}.
\end{align}
Channels of this kind will be called {\em resource-annihilating channels} (abbreviated as {\em $R$-annihilating channels}) which is inspired by the name of entanglement-annihilating channel~\cite{Moravcíkova2010}.
This set gives the free channels of the $R$-preservability theory.
In view of this notion, every element in $\mathcal{O}_R\setminus\mathcal{O}_R^N$ will be understood to have certain ability to preserve the given resource\footnote{We remark that the setting here is consistent while not the same with the channel resource theory introduced in Ref.~\cite{LiuYuan2019}: Since in the study of $R$-preservability the identity channel will be the most resourceful one, some results of Refs.~\cite{LiuYuan2019,LiuWinter2019} cannot apply.
Also, our approach is genuinely different from the resource destroying maps~\cite{Liu2017}, which leave free states invariant and map resourceful states to some free states.} (see also Fig.~\ref{Fig:R-Preservability}).

It remains to specify the corresponding free operations and quantifiers of $R$-preservability, which are the main tasks of this work.
Before that, we need to impose some basic assumptions and constraints on the given $R$-theory in order to have a reasonable study.

At the beginning of the formulation, one may wonder whether we should assume the following property in a bipartite system ${\rm SS'}$:
\begin{center}
{\em
$\Lambda_{\rm S}\otimes\Lambda_{\rm S'}\in\mathcal{O}_R^N$ if $\Lambda_{\rm S},\Lambda_{\rm S'}\in\mathcal{O}_R^N$?
}
\end{center}
This property forbids any possibility to activate the $R$-preservability.
This is, however, not true due to the existence of activation properties of certain resources~\cite{Palazuelos2012,Hsieh2016,Quintino2016,Masanes2008,Liang2012}.
More precisely, in Appendix~\ref{Sec:Activation} we show that in some $R$-theories, one can construct a free operation $\wt{\mathcal{T}}\in\mathcal{O}_R^N$ such that $\wt{\mathcal{T}}^{\otimes k}\notin\mathcal{O}_R^N$ for some integer $k>0$.
This means if we want to formulate $R$-preservability theory in a general way applicable to different $R$-theories, we need to respect certain properties such as the activation of the $R$-preservability.
To impose basic assumptions on $R$-theory, we need the following concept first:
\begin{definition}
{\em (Absolutely Free State)}
A free state $\widetilde{\eta}$ is said to be an {\em absolutely free state} for the given $R$-theory if 
\begin{align}
\widetilde{\eta}\otimes\eta\in\mathcal{F}_R\quad\forall\;\eta\in\mathcal{F}_R.
\end{align}
We denote the set of all absolutely free states by $\widetilde{\mathcal{F}}_R$.
\end{definition}

In other words, absolutely free states are those without hidden resource~\cite{Masanes2008,Liang2012}.
For example, in the $R$-theory of entanglement, all the separable states are absolutely free states.
However, as we have mentioned, there also exist $R$-theories with states that are not absolutely free: This can be seen by the superactivation of nonlocality~\cite{Palazuelos2012} and steering~\cite{Hsieh2016,Quintino2016}.
We remark that $\wt{\mathcal{F}}_R$ is closed under tensor product; that is, $\wt{\eta}_1\otimes\wt{\eta}_2\in\wt{\mathcal{F}}_R$ if $\wt{\eta}_1,\wt{\eta}_2\in\wt{\mathcal{F}}_R$.

With the above notion, we consider $R$-theories with the following properties in this work:
\begin{enumerate}[label = (R\arabic*)]
\item\label{Def:Proper:Nonempty} $\wt{\mathcal{F}}_R\neq\emptyset$ and $\mathcal{F}_R$ is convex.
\item\label{Def:Proper:FreeIdentity} Identity and partial trace are free operations.
\item\label{Def:ProperQR-Tensor} Tensoring with absolutely free states [i.e.\,$(\cdot)\mapsto(\cdot)\otimes\wt{\eta}$ for a given $\wt{\eta}\in\wt{\mathcal{F}}_R$] are free operations.
\item\label{Def:Proper:Tensor} Free operations are closed under tensor products, convex sums, and compositions:
If $\mathcal{E}_1,\mathcal{E}_2\in\mathcal{O}_R$, then $\mathcal{E}_1\otimes\mathcal{E}_2\in\mathcal{O}_R$, $p\mathcal{E}_1 + (1-p)\mathcal{E}_2\in\mathcal{O}_R$ $\forall\;p\in[0,1]$, and $\mathcal{E}_1\circ\mathcal{E}_2\in\mathcal{O}_R$.
\end{enumerate}
Let us briefly comment on the above properties.
We assume property~\ref{Def:Proper:Nonempty} because we aim to study $R$-preservability, which is a comparison of  resourceless states and resourceful states.
Also, we expect genuinely resourceless states exist and convex sums of resourceless states will not be resourceful, which are common features shared by many $R$-theories.
Property~\ref{Def:Proper:FreeIdentity} is assumed because in an $R$-theory, identity map and partial trace can never increase the amount of resource and will usually fulfill other conditions of a free operation: Conceptually, it means ``doing noting'' and ``ignoring part of the system'' are both free and costless.
Property~\ref{Def:ProperQR-Tensor} makes sure the resource content will not increase after an extension with an absolutely free state $\wt{\eta}$.
Property~\ref{Def:Proper:Tensor} is imposed because, for two channels which cannot create the resource, we expect 
their simultaneous applications (tensor product), classical mixture (convex sum), and sequential applications (composition) still will not have the ability to create the resource.
As expected, it is a common property possessed by many choices of free operations in $R$-theories such as the ones of entanglement~\cite{Ent-RMP}, nonlocality~\cite{Rosset2018,Wolfe2019}, and athermality~\cite{Brandao2013,Lostaglio2018} (note that there do exist examples which cannot satisfy this property\footnote{To see a counterexample, consider the $R$-theory of nonlocality with the nonlocality non-generating channels as free operations.
Suppose $\rho_0$ is a local state such that $\rho_0^{\otimes 2}$ is nonlocal~\cite{Palazuelos2012}. 
Then the state preparation channel $\Phi_{\rho_0}:(\cdot)\mapsto\rho_0$ is a nonlocality non-generating channel, while $\Phi_{\rho_0}\otimes\Phi_{\rho_0}$ will always have nonlocal output, thereby being able to generate nonlocality.
This again shows the activation property may lead to unexpected results.}).
This also implies that in this work the set $\mathcal{O}_R^N$ is always convex.

Before the formulation of $R$-preservability, it is important to introduce the following analog concept of absolutely free states for channels.

\begin{definition}
{\em (Absolutely $R$-Annihilating Channel)}
We say $\widetilde{\Lambda}\in\mathcal{O}_R^N$ is an {\em absolutely $R$-annihilating channel} if
\begin{align}
\widetilde{\Lambda}\otimes\Lambda\in\mathcal{O}_R^N\quad\forall\,\Lambda\in\mathcal{O}_R^N.
\end{align}
We denote the set of all such channels by $\widetilde{\mathcal{O}}_R^N$.
\end{definition}
This definition means the $R$-preservability of absolutely $R$-annihilating channels cannot be activated.
As an example of an absolutely $R$-annihilating channel, consider again the $R$-theory of entanglement.
Then every {\em local operation and classical communication} (LOCC) channel that is entanglement-annihilating~\cite{Moravcíkova2010} and entanglement-breaking~\cite{Horodecki2003} will be absolutely $R$-annihilating channels (see Appendix~\ref{App:EntExample} for the detailed explanation).

We also remark the following facts for a given $R$-theory:
\begin{align}\label{Fact:wtL}
&\wtL\circ\mE\in\wt{\mathcal{O}}_R^N\quad\&\quad\mE\circ\wtL\in\wt{\mathcal{O}}_R^N\quad\forall\;\mE\in\mathcal{O}_R;\nonumber\\
&\wtL_{\rm S}\otimes\wtL_{\rm S'}\in\wt{\mathcal{O}}_R^N\quad\forall\;\wtL_{\rm S},\wtL_{\rm S'}\in\wt{\mathcal{O}}_R^N.
\end{align}
According to the first line in Eq.~\eqref{Fact:wtL}, a sequential application of free operations cannot preserve any resource (even with the assistance of ancillary $R$-annihilating channels) if one has already added one absolutely $R$-annihilating channel in the sequence.
Also, since absolutely $R$-annihilating channels forbid activation, the second line in Eq.~\eqref{Fact:wtL} means simultaneous applications of two such channels still do not allow activation.

Before introducing the main results, we specify notations.
In this work we ignore the dependency of system size of the notations $\mathcal{O}_R^N$ and $\mathcal{O}_R$.
To emphasize the contrast between the main systems and ancillary systems, we use subscripts ${\rm S,S'}$ for the main systems and A,B for the ancillary systems.
When only bipartition needs to be addressed, we use the common notations A,B for subsystems.
The meaning of subscripts will be clear from the context.

\section{Free Operation of Resource Preservability}\label{Sec:FreeSuperChannel}
To specify the free operation of $R$-preservability, we need to know first how to map a channel in $\mathcal{O}_R$ into another channel in $\mathcal{O}_R$.
The general structure of such mapping (which maps channels to channels) is shown to take the following form~\cite{Chiribella2008}:
\begin{align}\label{Eq:Super-Channel}
\mathcal{E}\mapsto\mathcal{M}\circ(\mathcal{E}\otimes\mathcal{I}_{\rm A})\circ\mathcal{N},
\end{align}
where ${\rm A}$ stands for the ancillary system, and $\mathcal{M}$, $\mathcal{N}$ are some quantum channels.
Such mappings are called {\em super-channels}~\cite{Chiribella2008,Chiribella2008-2,RT-RMP,LiuWinter2019}.
One potential way to introduce free operations of $R$-preservability, or simply {\em free super-channels}, is to consider all super-channels that will not increase $R$-preservability.
This gives the largest possible set of free super-channels, while it may not always have intuitive and clear physical interpretation (see Ref.~\cite{Takagi2019-3} for an exception).
Also, whether all such mappings can always map elements of $\mathcal{O}_R$ into $\mathcal{O}_R$ is still unclear\footnote{\label{App:Well-defined}We remark that the structure of free supper channels depends on the structure of $\mathcal{O}_R$. For example, in the $R$-theory of entanglement, if we set $\mathcal{O}_R$ to be all LOSR channels (see Appendix~\ref{App:LOSR} for the definition), then LOCC channels are outside our consideration.
This means a super-channel that maps some LOSR channels into LOCC channels will not be a suitable free super-channel in this case.
Hence, the set of all super-channels that will not generate $R$-preservability may not always be a well-defined set of free super-channels of $R$-preservability.}.
Hence, in this work we prefer a different approach: We try to impose conditions on Eq.~\eqref{Eq:Super-Channel} and focus on free super-channels with clear physical meanings.

To this end, we interpret Eq.~\eqref{Eq:Super-Channel} as a three-step process consisting of a pre-processing ($\mathcal{N}$), an ancillary process ($\mathcal{I}_{\rm A}$), and a post-processing ($\mathcal{M}$).
The first condition to be imposed is that free super-channels should be realized freely in the given $R$-theory, since we expect them to be implementable without the assistance of the resource $R$.
This suggests that all steps in Eq.~\eqref{Eq:Super-Channel} should be free operations of the given $R$-theory; that is, $\mathcal{N},\mathcal{M}\in\mathcal{O}_R$.
The second condition to be imposed is that free super-channels cannot create $R$-preservability.
However, since identity map has the best $R$-preservability, this may fail if one uses identity map for the ancillary process in Eq.~\eqref{Eq:Super-Channel}.
This suggests that the ancillary system should perform certain processes to ensure it is impossible to create $R$-preservability.
Concerning the existence of activation properties discussed in Appendix~\ref{Sec:Activation}, we ask the ancillary system to perform only absolutely $R$-annihilating channels.
The above discussions motivate us to consider the following notion as the free operation of an $R$-preservability theory in this work:
\begin{definition} \label{Def:Free-Super-Channel}
{\rm (Free Super-Channel of $R$-Preservability)}
In this work, the free operation of $R$-preservability, or say the {\em free super-channel} $F:\mathcal{O}_R\to\mathcal{O}_R$, is of the form
\begin{align}\label{Eq:Free_Super-Channel}
F_\mathcal{E}\coloneqq\Lambda_+\circ(\mathcal{E}\otimes\widetilde{\Lambda}_{\rm A})\circ\Lambda_-,
\end{align}
where $\Lambda_+,\Lambda_-\in\mathcal{O}_R$ are free operations of the $R$-theory and $\widetilde{\Lambda}_{\rm A}\in\widetilde{\mathcal{O}}_R^N$ is an absolutely $R$-annihilating channel.
\end{definition}
For the generality of the $R$-preservability theory, we allow different input/output dimensions of the free super-channels\footnote{One can also formulate the theory with the fixed system dimension and forbid the ancillary systems, while in our approach we prefer a more general version.
This is similar to the case of channel discrimination: One can use either trace norm or diamond norm.
The trace norm gives an intuitive description of channel discrimination with the focus only on the given system, while the performance can be improved when one switches to the diamond norm.
In this work, we try to capture the spirit of the latter.}, which means the $R$-preservability of the given channel on the main system S may be assisted by channels acting on ancillary systems, while the ancillary channels need to obey the rules: They cannot provide additional $R$-preservability, and they cannot be assisted by the given state resource $R$.

Note that if one simply assumes $\Lambda_+,\Lambda_-$ to possess zero $R$-preservability, then the output will only be $R$-annihilating channels.
Hence, we allow $\Lambda_+,\Lambda_-$ to be arbitrary free operations.
Also, we have $F_\Lambda\in\mathcal{O}_R^N$ if $\Lambda\in\mathcal{O}_R^N$, which is because $\widetilde{\Lambda}_{\rm A}\in\widetilde{\mathcal{O}}_R^N$. 
This ensures that Eq.~\eqref{Eq:Free_Super-Channel} is a suitable free operation even with the activation property of $R$-preservability (Appendix~\ref{Sec:Activation}).

\section{Resource Preservability Monotone}\label{Sec:ResourcePreservabilityMonotone}
An important feature of a resource theory is that it provides a way to quantify the resource~\cite{RT-RMP}. Let $\mathbb{Q}$ be the set of all states or all channels.
Then a {\em resource monotone} of the given resource $R$ is a function \mbox{$Q_R:\mathbb{Q}\to[0,\infty]$} satisfying properties~\ref{P:Positivity} and~\ref{P:Monotonicity}:
\begin{enumerate}[label = (M\arabic*)]
\item\label{P:Positivity} $Q_R(q)\ge0\quad\forall\,q\in\mathbb{Q}$ and $Q_R(q)=0$ if $q\in\mathcal{F}_R$. 
\item\label{P:Monotonicity} $Q_R[\Lambda(q)]\le Q_R(q)\quad\forall\,q\in\mathbb{Q}\quad\&\quad\forall\Lambda\in\mathcal{O}_R$.
\item\label{P:Convexity} $Q_R[pq_1 + (1-p)q_2]\le pQ_R(q_1) + (1-p)Q_R(q_2)$ $\forall\,q_1,q_2\in\mathbb{Q}\quad\&\quad\forall p\in[0,1]$.
\item\label{P:Faithful} $Q_R(q)=0$ if and only if $q\in\mathcal{F}_R$.
\end{enumerate}
It is called {\em convex} if it also satisfies property~\ref{P:Convexity}, and it is called {\em faithful} if it also satisfies property~\ref{P:Faithful}.
To avoid trivial case, we always assume $Q_R(q)>0$ for some $q$ in this work.
With the above notions, we are now in position to introduce the $R$-preservability monotones.
\begin{definition}\label{Def:R-Preservability}
{\rm (Resource Preservability Monotone)}
In an $R$-preservability theory,
an {\em $R$-preservabil-}{\em ity monotone} $P_R$ is a channel resource monotone satisfying the following additional property:
\begin{align}\label{P:Tensor} 
P_R(\mathcal{E}\otimes\mathcal{E}')\ge P_R(\mathcal{E})\quad\forall\;\mathcal{E},\mathcal{E}'\in\mathcal{O}_R,
\end{align}
and the equality holds if $\mathcal{E}'\in\widetilde{\mathcal{O}}_R^N$.
\end{definition}
This additional property illustrates the basic expectation of a good quantifier of $R$-preservability: $R$-preservability will not decrease under tensor product, and it will not increase under tensor product with absolutely $R$-annihilating channels.
Note again that we do not impose the property $P_R(\mathcal{E}_{\rm S}\otimes\Lambda_{\rm S'})\le P_R(\mathcal{E}_{\rm S})\quad\forall\,\Lambda_{\rm S'}\in\mathcal{O}_R^N\;\&\;\mathcal{E}_{\rm S}\in\mathcal{O}_R$ due to the existence of the activation property discussed in Appendix~\ref{Sec:Activation}.
It is still possible for an $R$-preservability monotone to satisfy this property, which simply means that monotone cannot witness activated $R$-preservability.

We introduce two classes of $R$-preservability monotones, whose underlying intuitions are stated as follows:
\begin{itemize}
\item Interpret $R$-preservability as {\em the ability to maintain resource during the operation}.
\item Interpret $R$-preservability as {\em the channel distance from the set of $R$-annihilating channels}.
\end{itemize}
While they originate from different concepts, in the following sections we will show that both of them admit $R$-preservability monotones.

\subsection{Resource Preservability Monotone: The Maintained Resource}\label{Sec:PQ_R}
For a given resourceful state $\rho$ and a given state resource monotone $Q_R$, an intuitive way to quantify the ability of a free operation $\mathcal{E}_{\rm S}$ to preserve the resource $R$ of $\rho$ is to compare the difference between $Q_R(\rho)$ and $Q_R[\mathcal{E}_{\rm S}(\rho)]$; that is, $\frac{Q_R[\mathcal{E}_{\rm S}(\rho)]}{Q_R(\rho)}$.
This proposes the following general candidate induced by $Q_R$: (we use subscript to denote the corresponding subsystems)
\begin{align}\label{Eq:GeneralMonotone}
P_{Q_R}^{(f,g)}(\mathcal{E}_{\rm S})\coloneqq\overline{\sup_{{\rm A}}}\;\frac{(f\circ Q_R)[(\mathcal{E}_{\rm S}\otimes \widetilde{\Lambda}_{\rm A})(\rho_{\rm SA})]}{(g\circ Q_R)(\rho_{\rm SA})},
\end{align}
where $f$ is a finite-valued strictly increasing function with $f(0) = 0$, $g$ is a non-decreasing function satisfying $g^{-1}(\{0\})\subseteq\{0\}$ [this means the only $x$ that may achieve $g(x)=0$ is $x=0$].
Here we use the following abbreviation:
\begin{align}\label{Eq:supA}
\overline{\sup_{\rm A}}\coloneqq\sup_{{\rm A};\widetilde{\Lambda}_{\rm A}\in\widetilde{\mathcal{O}}_R^N;\rho_{\rm SA}},
\end{align}
where the maximization is taken over all possible finite dimensional ancillary systems ${\rm A}$, all absolutely $R$-annihilating channels $\widetilde{\Lambda}_{\rm A}\in\widetilde{\mathcal{O}}_R^N$ on the ancillary system {\rm A}, and all states $\rho_{\rm SA}$ on the composite system ${\rm SA}$.
In the maximization we allow the ancillary system to have zero dimension, corresponding to the original system ${\rm S}$.
We stress that the maximization in Eq.~\eqref{Eq:GeneralMonotone} is restricted to $\rho_{\rm SA}$ achieving non-zero $Q_R$ values.
This makes sure the value is always finite.

The idea behind Eq.~\eqref{Eq:GeneralMonotone} is to consider a general ratio between the input and the output of the given free operation.
By considering particular combinations of $f$ and $g$, we have the following candidates:
\begin{align}
\overline{\sup_{{\rm A}}}\,\frac{Q_R[(\mathcal{E}\otimes\widetilde{\Lambda}_{\rm A})(\rho_{\rm SA})]}{Q_R(\rho_{\rm SA})}\;;\;\overline{\sup_{{\rm A}}}\,Q_R[(\mathcal{E}\otimes\widetilde{\Lambda}_{\rm A})(\rho_{\rm SA})].
\end{align}
The first one can be interpreted as the optimal maintained resource during the process $\mathcal{E}$, and the second one can be understood as the optimal remaining amount of resource in the end of the process $\mathcal{E}$.

Note that we do not use identity map $\mathcal{I}_{\rm A}$ for the ancillary systems in the above definition.
This is because identity channel is the {\em most resourceful} channel, and considering ancillary system with it may create ``artificial $R$-preservability''.
For example, if one uses identity for the ancillary systems in the $R$-theory of entanglement, then one will have non-zero $R$-preservability for entanglement-annihilating channels that are not entanglement-breaking~\cite{Moravcíkova2010}.
Merely using $R$-annihilating channels $\mathcal{O}_R^N$ for the extension is still not enough due to the existence of the activation property (Appendix~\ref{Sec:Activation}).
This explains the need of introducing absolutely $R$-annihilating channels.

We now present the first main result, whose proof is given in Appendix~\ref{App:Proof_Result:Maintain}. 
Recall that $R$-theory represents a state resource theory with resource $R$.
\begin{theorem}\label{Result:Maintain}
Given an $R$-theory and a state resource monotone $Q_R$.
Then $P_{Q_R}^{(f,g)}$ defined by Eq.~\eqref{Eq:GeneralMonotone} is an \mbox{$R$-preservability} monotone. 
Moreover, It is faithful if $Q_R$ is faithful, and it is convex if $f\circ Q_R$ is convex.
\end{theorem}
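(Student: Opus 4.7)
The plan is to verify in sequence the two channel resource monotone axioms (M1) and (M2), the additional tensor bound Eq.~\eqref{P:Tensor}, and then the two optional clauses about convexity and faithfulness. All steps reduce to axioms of the given state monotone $Q_R$ combined with the structural assumptions~\ref{Def:Proper:Nonempty}--\ref{Def:Proper:Tensor}, the structural fact Eq.~\eqref{Fact:wtL}, and the explicit form of free super-channels in Def.~\ref{Def:Free-Super-Channel}.

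For (M1), non-negativity is immediate because $f\circ Q_R\ge 0$ and $g\circ Q_R>0$ on the admissible states (by the hypotheses on $f,g$ and the constraint in the sup). If $\mathcal{E}\in\mathcal{O}_R^N$ and $\widetilde{\Lambda}_{\rm A}\in\widetilde{\mathcal{O}}_R^N$, the definition of absolutely $R$-annihilating forces $(\mathcal{E}\otimes\widetilde{\Lambda}_{\rm A})(\rho_{\rm SA})\in\mathcal{F}_R$, so $Q_R$ and hence $f\circ Q_R$ vanish, giving $P_{Q_R}^{(f,g)}(\mathcal{E})=0$. For monotonicity under a free super-channel $F_\mathcal{E}=\Lambda_+\circ(\mathcal{E}\otimes\widetilde{\Lambda}_{{\rm A}_0})\circ\Lambda_-$, the key is to rewrite $F_\mathcal{E}\otimes\widetilde{\Lambda}'_{\rm A}=(\Lambda_+\otimes\mathcal{I})\circ(\mathcal{E}\otimes(\widetilde{\Lambda}_{{\rm A}_0}\otimes\widetilde{\Lambda}'_{\rm A}))\circ(\Lambda_-\otimes\mathcal{I})$. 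Properties~\ref{Def:Proper:FreeIdentity} and~\ref{Def:Proper:Tensor} place $\Lambda_\pm\otimes\mathcal{I}$ in $\mathcal{O}_R$, while the second line of Eq.~\eqref{Fact:wtL} puts the combined ancillary channel in $\widetilde{\mathcal{O}}_R^N$. Applying the state-level monotonicity of $Q_R$ under the outer $\Lambda_+\otimes\mathcal{I}$ in the numerator and the outer $\Lambda_-\otimes\mathcal{I}$ in the denominator, and then using that $f$ is strictly increasing while $g$ is non-decreasing, bounds every ratio in $P_{Q_R}^{(f,g)}(F_\mathcal{E})$ by a ratio admissible in $P_{Q_R}^{(f,g)}(\mathcal{E})$; taking the sup yields (M2).

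The tensor inequality Eq.~\eqref{P:Tensor} is where I expect the main conceptual work. For the ``$\ge$'' direction I test $P_{Q_R}^{(f,g)}(\mathcal{E}\otimes\mathcal{E}')$ at the product state $\sigma=\rho_{\rm SA}\otimes\widetilde{\eta}$, where $\widetilde{\eta}\in\widetilde{\mathcal{F}}_R$ is absolutely free on the input of $\mathcal{E}'$ (available via~\ref{Def:Proper:Nonempty} and the tensor-closure of $\widetilde{\mathcal{F}}_R$). The crucial identity is $Q_R(\rho_{\rm SA}\otimes\widetilde{\eta})=Q_R(\rho_{\rm SA})$: ``$\le$'' follows because $(\cdot)\mapsto(\cdot)\otimes\widetilde{\eta}\in\mathcal{O}_R$ by~\ref{Def:ProperQR-Tensor}, and ``$\ge$'' by freeness of the partial trace that discards $\widetilde{\eta}$ (from~\ref{Def:Proper:FreeIdentity}). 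Hence the denominator is unchanged, while the numerator can only increase since $(\mathcal{E}\otimes\widetilde{\Lambda}_{\rm A})(\rho_{\rm SA})$ is the partial trace of $(\mathcal{E}\otimes\mathcal{E}'\otimes\widetilde{\Lambda}_{\rm A})(\sigma)$ and $f$ is strictly increasing. For the equality case $\mathcal{E}'\in\widetilde{\mathcal{O}}_R^N$, I absorb $\mathcal{E}'$ into a larger ancillary channel: Eq.~\eqref{Fact:wtL} yields $\mathcal{E}'\otimes\widetilde{\Lambda}_{\rm A}\in\widetilde{\mathcal{O}}_R^N$, so every ratio admitted in the sup defining $P_{Q_R}^{(f,g)}(\mathcal{E}\otimes\mathcal{E}')$ also appears in the sup for $P_{Q_R}^{(f,g)}(\mathcal{E})$, giving the reverse ``$\le$''.

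The remaining clauses are short. Convexity transfers from $f\circ Q_R$: because $\mathcal{E}\mapsto(\mathcal{E}\otimes\widetilde{\Lambda}_{\rm A})(\rho)$ is affine in $\mathcal{E}$, convexity of $f\circ Q_R$ controls the numerator; the denominator is $\mathcal{E}$-independent, and a standard sup-of-convex-combinations argument closes the estimate. For faithfulness, restricting the sup in $P_{Q_R}^{(f,g)}(\mathcal{E})=0$ to trivial ancilla forces $f(Q_R(\mathcal{E}(\rho)))=0$ for every resourceful $\rho$; since $f$ is strictly increasing with $f(0)=0$, this gives $Q_R(\mathcal{E}(\rho))=0$, and faithfulness of $Q_R$ together with $\mathcal{E}(\mathcal{F}_R)\subseteq\mathcal{F}_R$ (because $\mathcal{E}\in\mathcal{O}_R$) yields $\mathcal{E}\in\mathcal{O}_R^N$. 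The hardest step is the ``$\ge$'' side of Eq.~\eqref{P:Tensor}, where one must decouple the appended absolutely free state from the resource content of $\rho_{\rm SA}$; this is the one place where~\ref{Def:ProperQR-Tensor} is genuinely needed.
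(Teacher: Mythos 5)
Your proposal is correct and follows essentially the same route as the paper's Appendix~C proof: positivity from the definition of $\widetilde{\mathcal{O}}_R^N$, monotonicity by absorbing the super-channel structure via properties~\ref{Def:Proper:FreeIdentity}, \ref{Def:Proper:Tensor} and Eq.~\eqref{Fact:wtL}, the tensor inequality by restricting to $\rho_{\rm SA}\otimes\widetilde{\eta}$ and invoking partial-trace and tensoring freeness, equality by absorbing $\mathcal{E}'\in\widetilde{\mathcal{O}}_R^N$ into the ancillary channel, and the same convexity/faithfulness arguments. The only point the paper treats more carefully is the admissibility of $(\Lambda_-\otimes\mathcal{I}_{\rm A})(\rho_{\rm S'A})$ in the monotonicity step (its $Q_R$ value could vanish, which is handled by noting the bound is trivial unless the numerator is nonzero), but this is a minor technicality that your argument patches in the same way.
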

As a remark, the assumption $\wt{\mathcal{F}}_R\neq\emptyset$ is only used in the proof of Eq.~\eqref{P:Tensor}, and this assumption can be dropped when $g$ is a positive constant.
We state this special case in Corollary~\ref{Coro:g-constant}.
Also, it will be an interesting future research topic to study specific operational interpretations of different combinations of $f,g$ with different $R$-theories.

\subsection{Resource Preservability Monotone: The Channel Distance}
One intuitive way to quantify a resource is to consider the distance away from the set consisting of quantities without the resource.
Here we use the similar way to interpret $R$-preservability.
To this end, we consider a general distance measure on states defined as a function $D:\mathcal{S}\times\mathcal{S}\to[0,\infty]$ satisfying $D(\rho,\sigma)\ge0$ and equality holds if and only if $\rho = \sigma$ ($\mathcal{S}$ is the set of quantum states).
Now, we introduce the following candidates induced by $D$ to quantify $R$-preservability:
\begin{align}\label{Eq:P_D}
&P_{D}(\mathcal{E})\coloneqq\inf_{\Lambda_{\rm S}\in\mathcal{O}_R^N}\supA D\left[(\mathcal{E}\otimes\widetilde{\Lambda}_{\rm A})(\rho_{\rm SA}),(\Lambda_{\rm S}\otimes\widetilde{\Lambda}_{\rm A})(\rho_{\rm SA})\right];\\
\label{Eq:P_Dbar}
&\bar{P}_{D}(\mathcal{E})\coloneqq\inf_{\Lambda_{\rm S}\in\mathcal{O}_R^N}\sup_{{\rm A};\rho_{\rm SA}} D\left[(\mathcal{E}\otimes\mathcal{I}_{\rm A})(\rho_{\rm SA}),(\Lambda_{\rm S}\otimes\mathcal{I}_{\rm A})(\rho_{\rm SA})\right].
\end{align}
Again, we use the abbreviation introduced in Eq.~\eqref{Eq:supA}, and $\sup_{{\rm A};\rho_{\rm SA}}$ means the maximization taken over all the ancillary systems ${\rm A}$ and the states $\rho_{\rm SA}$ on ${\rm SA}$.
Note that unlike the previous section, since now we only compare the {\em distance} between two channels, using identity to extend the system is allowed, and this is the reason why we list two candidates here.
Before introducing the main result, we say a set {\em $\mathcal{A}$ is closed under the distance measure $D$} if for every sequence $\{\Lambda_k\}_{k=1}^\infty\subseteq\mathcal{A}$ satisfying $\lim_{k\to\infty}\sup_\rho D[\mathcal{E}(\rho),\Lambda_k(\rho)] = 0$, we will have $\mathcal{E}\in\mathcal{A}$.
We now provide the following result, whose proof is given in Appendix~\ref{App:Proof_Result:DistanceMonotone}.
\begin{theorem}\label{Result:DistanceMonotone}
Given an $R$-theory and a distance measure $D$ satisfying the property 
\begin{align}\label{Def:ProperDistanceMeasure}
D[\Lambda(\rho),\Lambda(\sigma)]\le D(\rho,\sigma)\quad\forall\;\rho,\sigma\;\&\;\forall\;\Lambda\in\mathcal{O}_R.
\end{align}
Then $P_D$ and $\bar{P}_{D}$ are \mbox{$R$-preservability} monotones. 
Moreover, they are faithful if $\mathcal{O}_R^N$ is closed under $D$.
\end{theorem}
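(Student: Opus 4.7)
The plan is to verify the four requirements of an $R$-preservability monotone---(M1) nonnegativity with vanishing on $\mathcal{O}_R^N$, (M2) monotonicity under every free super-channel, the tensor property from Eq.~\eqref{P:Tensor}, and (under the stated closure hypothesis) faithfulness---for both $P_D$ and $\bar P_D$. The main workhorse is the contractivity assumption~\eqref{Def:ProperDistanceMeasure}, combined with the observations in~\eqref{Fact:wtL} that tensor products of absolutely $R$-annihilating channels are again absolutely $R$-annihilating and that sandwiching any channel in $\mathcal{O}_R^N$ between free operations stays in $\mathcal{O}_R^N$. Nonnegativity is automatic from the defining property of $D$, and (M1) follows by plugging $\Lambda_S=\mathcal{E}$ into the infimum whenever $\mathcal{E}\in\mathcal{O}_R^N$, since $D(\sigma,\sigma)=0$.

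For monotonicity under $F_\mathcal{E}=\Lambda_+\circ(\mathcal{E}\otimes\widetilde{\Lambda}_B)\circ\Lambda_-$, I will pair each infimum candidate for $\mathcal{E}$ with one for $F_\mathcal{E}$: given $\Lambda\in\mathcal{O}_R^N$, set $\Lambda':=\Lambda_+\circ(\Lambda\otimes\widetilde{\Lambda}_B)\circ\Lambda_-$, which lies in $\mathcal{O}_R^N$ by \ref{Def:Proper:FreeIdentity}--\ref{Def:Proper:Tensor} plus the definition of $\widetilde{\mathcal{O}}_R^N$. For $P_D$, factor $F_\mathcal{E}\otimes\widetilde{\Lambda}_A$ and $\Lambda'\otimes\widetilde{\Lambda}_A$ so the common outer layer is the free operation $\Lambda_+\otimes\mathcal{I}_A$, apply contractivity to strip it off, and regroup the middle block as $\mathcal{E}\otimes(\widetilde{\Lambda}_B\otimes\widetilde{\Lambda}_A)$ versus $\Lambda\otimes(\widetilde{\Lambda}_B\otimes\widetilde{\Lambda}_A)$ on the enlarged ancilla $BA$, whose ancillary channel is again absolutely $R$-annihilating by~\eqref{Fact:wtL}. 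The resulting supremum is no larger than the one defining $P_D(\mathcal{E})$, so taking the infimum over $\Lambda$ yields $P_D(F_\mathcal{E})\le P_D(\mathcal{E})$. For $\bar P_D$ the same factorization works, with one additional contraction under $\mathcal{I}\otimes\widetilde{\Lambda}_B\otimes\mathcal{I}_A$ (free by \ref{Def:Proper:FreeIdentity} and \ref{Def:Proper:Tensor}) to eliminate $\widetilde{\Lambda}_B$ and restore an identity on the enlarged ancilla.

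For the tensor property $P_R(\mathcal{E}\otimes\mathcal{E}')\ge P_R(\mathcal{E})$, I pick an absolutely free state $\widetilde{\eta}\in\widetilde{\mathcal{F}}_R$ (nonempty by \ref{Def:Proper:Nonempty}) and restrict the supremum in $P_D(\mathcal{E}\otimes\mathcal{E}')$ to product inputs $\rho_{S_1S_2A}=\sigma_{S_1A}\otimes\widetilde{\eta}_{S_2}$; contracting under the free partial trace $\mathrm{tr}_{S_2}$ collapses the first output to $(\mathcal{E}\otimes\widetilde{\Lambda}_A)(\sigma_{S_1A})$ and the second to $(\widetilde{\Lambda}_{S_1}\otimes\widetilde{\Lambda}_A)(\sigma_{S_1A})$ with $\widetilde{\Lambda}_{S_1}(\cdot):=\mathrm{tr}_{S_2}[\Lambda_{S_1S_2}(\cdot\otimes\widetilde{\eta}_{S_2})]$ a bona fide element of $\mathcal{O}_R^N$, whence the inner supremum dominates $P_D(\mathcal{E})$ uniformly in $\Lambda_{S_1S_2}$. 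For the equality case $\mathcal{E}'=\widetilde{\Lambda}'\in\widetilde{\mathcal{O}}_R^N$ I probe the infimum with the ansatz $\Lambda\otimes\widetilde{\Lambda}'\in\mathcal{O}_R^N$: for $P_D$ I regroup $\widetilde{\Lambda}'\otimes\widetilde{\Lambda}_A\in\widetilde{\mathcal{O}}_R^N$ as an absolutely $R$-annihilating channel on the enlarged ancilla $S_2A$, while for $\bar P_D$ I instead contract the common factor $\mathcal{I}\otimes\widetilde{\Lambda}'\otimes\mathcal{I}_A$ off both sides, in either case obtaining $P_R(\mathcal{E}\otimes\widetilde{\Lambda}')\le P_R(\mathcal{E})$.

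Faithfulness under the closure hypothesis is the easiest piece: if $P_D(\mathcal{E})=0$, a near-optimal sequence $\Lambda_k\in\mathcal{O}_R^N$ specialized to trivial ancilla satisfies $\sup_\rho D[\mathcal{E}(\rho),\Lambda_k(\rho)]\to 0$, so by closure $\mathcal{E}\in\mathcal{O}_R^N$; the argument for $\bar P_D$ is identical. I expect the most delicate step to be the $\ge$ direction of the tensor inequality, since it requires decoupling the two main subsystems via an absolutely free state, verifying that the effective channel obtained by tracing out $S_2$ remains inside $\mathcal{O}_R^N$, and exploiting the absolute freeness of $\widetilde{\eta}$ so that the partial-trace argument survives the presence of a nontrivial ancilla $A$; everything else reduces to bookkeeping with contractivity of $D$ under free operations and the tensor closure facts of $\widetilde{\mathcal{O}}_R^N$.
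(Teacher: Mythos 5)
Your proposal is correct and follows essentially the same route as the paper's proof: the same candidate $\Lambda_+\circ(\Lambda\otimes\widetilde{\Lambda}_{\rm B})\circ\Lambda_-$ and contractivity argument for monotonicity, the same absolutely-free-state/partial-trace decoupling for the tensor inequality with the same effective channel ${\rm tr}_{\rm S_2}[\Lambda_{\rm S_1S_2}((\cdot)\otimes\widetilde{\eta}_{\rm S_2})]\in\mathcal{O}_R^N$, the same sub-optimal ansatz $\Lambda\otimes\widetilde{\Lambda}'$ for the equality case, and the same trivial-ancilla closure argument for faithfulness. The only (cosmetic) difference is that you collapse the paper's two-stage restriction of the infimum into a single substitution.
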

Note that Eq.~\eqref{Def:ProperDistanceMeasure} is a relaxed version of the data-processing inequality.
As a remark, Eq.~\eqref{Def:ProperDistanceMeasure} and condition~\ref{Def:Proper:Tensor} imply the ordering $P_D\le\bar{P}_{D}$.

\subsection{Resource Preservability Monotone: The Robustness}\label{Sec:Robustness}
We will provide a detailed example in this section to illustrate Theorem~\ref{Result:DistanceMonotone}.
In short, with a specific distance measure, a robustness-like monotone can be obtained.
To start with, consider the {\em max-relative entropy} defined by~\cite{Datta2009}:
\begin{align}\label{Eq:max-relative-entropy}
D_{\rm max}(\rho\|\sigma)\coloneqq \log_2\inf\{\lambda\,|\,\rho\le\lambda\sigma\},
\end{align}
where the minimization is taken over all non-negative integer $\lambda$, and in this work we always consider logarithm to the base 2.
$D_{\rm max}$ fulfills~\cite{Datta2009} (1) $D_{\rm max}(\rho\|\sigma)\ge0$ and the equality holds if and only if $\rho=\sigma$, (2) (data-processing inequality) $D_{\rm max}[\mathcal{E}(\rho)\|\mathcal{E(\sigma)}]\le D_{\rm max}(\rho\|\sigma)$ for all channels $\mathcal{E}$ and states $\rho,\sigma$.
Hence, it satisfies Eq.~\eqref{Def:ProperDistanceMeasure}.
Theorem~\ref{Result:DistanceMonotone} means $P_{D_{\rm max}}$ and $\bar{P}_{D_{\rm max}}$ are both $R$-preservability monotone, and they are faithful if $\mathcal{O}_R^N$ is closed under $D_{\rm max}$.

It turns out that this fact implies a direct robustness form and the corresponding operational interpretation based on Ref.~\cite{LiuWinter2019}.
To see this, define the {\em $R$-preservability log-robustness} according to Ref.~\cite{LiuWinter2019}:
\begin{align}\label{Eq:Robustness}
L_R(\mathcal{E}) = -\log_2\sup\{p\in[0,1]\,|\,p\mathcal{E} + (1-p)\mathcal{C}\in\mathcal{O}_R^N\},
\end{align}
where the optimization is taken over all channels $\mathcal{C}$.
This quantity depicts how robust the $R$-preservability of $\mathcal{E}$ is when it is interrupted by another channel.
From Ref.~\cite{LiuWinter2019} we learn that $\bar{P}_{D_{\rm max}} = L_R$.
This means $\bar{P}_{D_{\rm max}}$ may have the same operational interpretation with $L_R$.
To formally illustrate this, we now translate the Definition 9 in Ref.~\cite{LiuWinter2019} into the following version for $R$-preservability (in what follows, the {\em diamond norm} is defined by $\norm{\mE_{\rm S}}_\diamond\coloneqq\sup_{{\rm A};\rho_{\rm SA}}\norm{(\mE_{\rm S}\otimes\mathcal{I}_{\rm A})(\rho_{\rm SA})}_1$, where the maximization is taken over all ancillary systems ${\rm A}$ and states $\rho_{\rm SA}$ on the system ${\rm SA}$, and $\norm{\rho}_1\coloneqq{\rm tr}|\rho| = {\rm tr}\sqrt{\rho^\dagger\rho}$ is the {\em trace norm}):
\begin{definition}\label{Def:PreservabilityDestructionCost}
{\rm($R$-Preservability Destruction Cost)}
For a given channel $\mathcal{E}_{\rm S}\in\mathcal{O}_R$ and $0<\epsilon\le1$, we say a channel $\bar{\Lambda}_{\rm S'}\in\wt{\mathcal{O}}_R^N$ together with an ensemble of reversible unitary free operations $\{\mathcal{U}_i,\mathcal{V}_i,p_i\}_{i=1}^k$ (i.e.\,$\mathcal{U}_i,\mathcal{V}_i\in\mathcal{O}_R$ and also their inverses are in $\mathcal{O}_R$) form an {\em $\epsilon$-destruction process of $R$-preservability} for $\mathcal{E}_{\rm S}$ if for some $\Lambda_{\rm SS'}\in\mathcal{O}_R^N$ we have 
\begin{align}\label{Eq:Destruction}
\frac{1}{2}\norm{\sum_{i=1}^kp_i\mathcal{U}_i\circ(\mathcal{E}_{\rm S}\otimes\bar{\Lambda}_{\rm S'})\circ \mathcal{V}_i - \Lambda_{\rm SS'}}_\diamond\le\epsilon.
\end{align}
The {\em $\epsilon$-destruction cost for $R$-preservability} is defined by $C_R^\epsilon(\mathcal{E}_{\rm S})\coloneqq\log_2\min k$, where the minimization is taken over all $\epsilon$-destruction process of $R$-preservability for $\mathcal{E}_{\rm S}$.
\end{definition}
Definition~\ref{Def:PreservabilityDestructionCost} is slightly different from the Definition 9 in Ref.~\cite{LiuWinter2019}. 
In Ref.~\cite{LiuWinter2019}, $\mathcal{U}_i$ and $\mathcal{V}_i$ are asked to be free channels, which will correspond to $\mathcal{O}_R^N$ in our current study.
While this will always lead to zero $R$-preservability for the output channel, we relax this condition in this work.
Also, we require the ancillary channel $\bar{\Lambda}_{\rm S'}$ to be an absolutely $R$-annihilating channel.

To state the result, we also consider the smooth version of $P_{D_{\rm max}}$ and $\bar{P}_{D_{\rm max}}$~\cite{LiuWinter2019}:
\begin{align}
P_{D_{\rm max}}^\epsilon(\mathcal{E})&\coloneqq\inf_{\frac{1}{2}\norm{\mathcal{E}' - \mathcal{E}}_\diamond\le\epsilon}P_{D_{\rm max}}(\mathcal{E}');\\
\label{Eq:SmoothP_Dmaxbar}
\bar{P}_{D_{\rm max}}^\epsilon(\mathcal{E})&\coloneqq\inf_{\frac{1}{2}\norm{\mathcal{E}' - \mathcal{E}}_\diamond\le\epsilon}\bar{P}_{D_{\rm max}}(\mathcal{E}').
\end{align}
Now we state the following result when the given $R$-theory admits no activation of $R$-preservability.
We note that although we write it as a theorem, conceptually this result is a corollary of Theorem 10 in Ref.~\cite{LiuWinter2019}. 
We give the proof in Appendix~\ref{App:Proof_Coro:OperationalMeaning} for the completeness of this work.

\begin{theorem}\label{Coro:OperationalMeaning}
Given an $R$-theory satisfying the following three conditions:
\begin{enumerate}[label=(\roman*)]
\item $\mathcal{O}_R^N = \wt{\mathcal{O}}_R^N\neq\emptyset$.
\item $\mathcal{O}_R$ is closed in the diamond norm topology for all possible input/output dimensions.
\item In a multipartite case, the pair-wise permutation unitaries between two local systems are in $\mathcal{O}_R$.
\end{enumerate}
Then for a given $\mathcal{E}\in\mathcal{O}_R$ and for any $0<\eta\le\epsilon<1$, we have
\begin{align}\label{Eq:DstructionCostBounds}
\bar{P}_{D_{\rm max}}^{\sqrt{\epsilon(2-\epsilon)}}(\mE)\le C^\epsilon_R(\mathcal{E})\le\bar{P}_{D_{\rm max}}^{\epsilon - \eta}(\mE) + 2\log_2\frac{1}{\eta} - 1.
\end{align}
\end{theorem}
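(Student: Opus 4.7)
The plan is to establish the two inequalities separately, following the strategy of Theorem 10 in Ref.~\cite{LiuWinter2019} and leveraging conditions (i)--(iii) to transport that argument into the present $R$-preservability framework. Condition (i) collapses the distinction between $\mathcal{O}_R^N$ and $\widetilde{\mathcal{O}}_R^N$ in Definition~\ref{Def:PreservabilityDestructionCost}, so the ancilla $\bar{\Lambda}_{S'}$ is simply a free channel of the induced channel theory and the identity $\bar{P}_{D_{\rm max}}=L_R$ remains available. Condition (ii) guarantees that the infima defining $L_R$ and its smoothed variant in Eq.~\eqref{Eq:SmoothP_Dmaxbar} are attained (or approached within arbitrary precision). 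Condition (iii) makes pair-wise permutation unitaries on composite systems free in $\mathcal{O}_R$, which is exactly what is needed to build the random-unitary ensembles below.

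For the upper bound $C^\epsilon_R(\mathcal{E})\le\bar{P}_{D_{\rm max}}^{\epsilon-\eta}(\mathcal{E})+2\log_2(1/\eta)-1$, I would first pick $\mathcal{E}'$ with $\tfrac{1}{2}\norm{\mathcal{E}-\mathcal{E}'}_\diamond\le\epsilon-\eta$ that nearly attains $\bar{P}_{D_{\rm max}}^{\epsilon-\eta}(\mathcal{E})$. Using $\bar{P}_{D_{\rm max}}=L_R$ and setting $\mu=2^{L_R(\mathcal{E}')}$, invoke Eq.~\eqref{Eq:Robustness} to decompose $\mathcal{E}'=\mu\Lambda_0-(\mu-1)\mathcal{C}$ with $\Lambda_0\in\mathcal{O}_R^N$ and some channel $\mathcal{C}$. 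Then, using a flag ancilla of dimension $k\asymp\mu/\eta$ together with the pair-wise permutation unitaries (free by~(iii)) to shuffle a single distinguished slot through the ancillary register, construct an ensemble $\{\mathcal{U}_i,\mathcal{V}_i,p_i\}_{i=1}^k$ whose averaged action on $\mathcal{E}\otimes\bar{\Lambda}_{S'}$ reproduces $\Lambda_0$ up to diamond-norm error $\eta$. The triangle inequality then bounds the total deviation by $(\epsilon-\eta)+\eta=\epsilon$, verifying Eq.~\eqref{Eq:Destruction}, and counting $\log_2 k$ yields the claimed bound.

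For the lower bound $\bar{P}_{D_{\rm max}}^{\sqrt{\epsilon(2-\epsilon)}}(\mathcal{E})\le C^\epsilon_R(\mathcal{E})$, take any $\epsilon$-destruction process $\{\mathcal{U}_i,\mathcal{V}_i,p_i\}_{i=1}^k$ with $k=2^{C^\epsilon_R(\mathcal{E})}$ and $\Lambda\in\mathcal{O}_R^N$ as in Eq.~\eqref{Eq:Destruction}. The averaged channel $\bar{\mathcal{E}}=\sum_i p_i\,\mathcal{U}_i\circ(\mathcal{E}\otimes\bar{\Lambda}_{S'})\circ\mathcal{V}_i$ already has $L_R(\bar{\mathcal{E}})\le\log_2 k$, because $\Lambda$ can be written as $p_{i_0}\,\mathcal{U}_{i_0}\circ(\mathcal{E}\otimes\bar{\Lambda}_{S'})\circ\mathcal{V}_{i_0}$ plus a residual channel for any fixed $i_0$, and by reversibility of the unitaries (their inverses are in $\mathcal{O}_R$) one can pull this back to a decomposition of $\mathcal{E}\otimes\bar{\Lambda}_{S'}$ of the form $p\,\mathcal{E}'+(1-p)\mathcal{C}'$ with $\mathcal{E}'\in\mathcal{O}_R^N$ and $p=1/k$. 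Tensor-equality of $L_R=\bar{P}_{D_{\rm max}}$ under $\bar{\Lambda}_{S'}\in\widetilde{\mathcal{O}}_R^N$ (Definition~\ref{Def:R-Preservability}) then transfers this to $\mathcal{E}$ itself. Finally, the $\epsilon\mapsto\sqrt{\epsilon(2-\epsilon)}$ conversion in the smoothing is obtained via the standard Fuchs--van de Graaf-type inequality linking the diamond-norm closeness used in Eq.~\eqref{Eq:Destruction} to the purified/fidelity distance natural for the smooth max-relative entropy.

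The main obstacle will be the lower-bound direction: one must verify that the unitaries $\mathcal{U}_i,\mathcal{V}_i$, which in Definition~\ref{Def:PreservabilityDestructionCost} are only required to lie in $\mathcal{O}_R$ (rather than in $\mathcal{O}_R^N$, as in Ref.~\cite{LiuWinter2019}), still permit the data-processing step for $L_R$ and do not manufacture spurious $R$-preservability when combined with $\bar{\Lambda}_{S'}$. This is precisely where condition (i) is indispensable, since without it the ancillary channel could in principle activate preservability across the $S:S'$ cut. The smoothing-parameter conversion is the other delicate technical point, but it is controlled by well-known distance-conversion inequalities.
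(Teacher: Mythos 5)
Your overall strategy matches the paper's (both adapt Theorem 10 of Ref.~\cite{LiuWinter2019}), but two steps as written would fail.

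\emph{Upper bound.} The construction you describe --- shuffling the distinguished slot through an ancillary register of copies of an annihilating channel by pair-wise permutations --- is exactly the paper's, but the engine that makes it work is the Generalized Convex-Split Lemma, which requires $\log_2 n \ge \log_2\frac{1}{p} + 2\log_2\frac{1}{\delta}$, i.e.\ $n\asymp \mu/\eta^2$, not $\mu/\eta$. Your scaling $k\asymp\mu/\eta$ is inconsistent with the $2\log_2\frac{1}{\eta}$ term in the bound you are trying to prove (it would yield a strictly stronger bound, which should be a warning sign), and without the convex-split estimate you have no argument that the averaged channel is $\eta$-close to anything in $\mathcal{O}_R^N$. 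You also need to note that the target free channel is $\Lambda_0^{\otimes n}$ (not $\Lambda_0$), so condition (i) is used here to guarantee $\Lambda_0^{\otimes n}\in\mathcal{O}_R^N$; and the existence of the exact decomposition $\bar{\Lambda}_l = p_l\mathcal{E}_l + (1-p_l)\alpha_l'$ with $\alpha_l'$ a bona fide channel requires the exchange-of-sup-and-inf identity for $D_{\rm max}$ over all ancillas (the paper's Fact on the alternative expression), which you implicitly assume when you write $\mathcal{E}'=\mu\Lambda_0-(\mu-1)\mathcal{C}$.

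\emph{Lower bound.} Your key sentence --- that ``$\Lambda$ can be written as $p_{i_0}\,\mathcal{U}_{i_0}\circ(\mathcal{E}\otimes\bar{\Lambda}_{\rm S'})\circ\mathcal{V}_{i_0}$ plus a residual channel'' --- is false: that decomposition holds for the \emph{averaged} channel $\bar{\mathcal{E}}=\sum_i p_i\mathcal{N}_i$, not for $\Lambda\in\mathcal{O}_R^N$, and the two agree only up to $\epsilon$ in diamond norm. Bridging exactly this gap is the substance of the Liu--Winter argument: one decomposes $\Lambda = \sum_i p_i\mathcal{M}_i$ into completely-positive pieces with each $p_i\mathcal{M}_i$ close to $p_i\mathcal{N}_i$ in the purified distance (this is where $\sqrt{\epsilon(2-\epsilon)}$ enters, not a generic Fuchs--van de Graaf invocation), then uses $p_i\mathcal{M}_i\le\Lambda$, the freeness of $\mathcal{U}_i^\dagger,\mathcal{V}_i^\dagger$, and convexity of $\mathcal{O}_R^N$ to exhibit a channel $\sqrt{\epsilon(2-\epsilon)}$-close to $\mathcal{E}\otimes\bar{\Lambda}_{\rm S'}$ dominated by $K$ times a free channel, whence $\bar{P}_{D_{\rm max}}^{\sqrt{\epsilon(2-\epsilon)}}(\mathcal{E}\otimes\bar{\Lambda}_{\rm S'})\le\log_2 K$. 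Finally, stripping off $\bar{\Lambda}_{\rm S'}$ requires the tensor monotonicity of the \emph{smoothed} quantity, $\bar{P}_D^\delta(\mathcal{E}\otimes\mathcal{E}')\ge\bar{P}_D^\delta(\mathcal{E})$, which does not follow directly from the unsmoothed property in Definition~\ref{Def:R-Preservability} and needs a separate (short) argument using an absolutely free state and data processing.
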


Theorem~\ref{Coro:OperationalMeaning} provides a clear operational meaning of $\bar{P}_{D_{\rm max}}(\mathcal{E})$: It shows how robust the $R$-preservability of the given free operation $\mathcal{E}$ is when it is randomized over reversible free unitary operations together with an ancillary absolutely R-annihilating channel.
This can also be interpreted as the erasure cost of $R$-preservability.
Note that we assume no activation property of $R$-preservability.
When the given $R$-preservability can be activated, the lower bounds in Theorem~\ref{Coro:OperationalMeaning} can still be proved, while it is so far unclear whether the upper bound can also be obtained.

\section{Applications to Thermodynamics}\label{Sec:All-Application}
After introducing the general framework, one may ask for specific examples to illustrate $R$-preservability.
Specially, a natural question is whether there is any application.
These issues will be addressed in the following two sections with the focus on thermodynamics.
We remark that detailed studies of coherence preservability have been reported in Ref.~\cite{Saxena2019} recently.

\subsection{Thermodynamic Implications of Athermality Preservability}\label{Sec:Application}
We will give two examples by considering the $R$-theory of athermality with Gibbs-preserving maps as the free operations.
It turns out that the $R$-preservability monotones of athermality (or simply athermality preservability) given by Eqs.~\eqref{Eq:P_D} and~\eqref{Eq:P_Dbar} with max-relative entropy $D_{\rm max}$ can be directly related to two recently reported results: For a given Gibbs-preserving channel $\mathcal{N}$, $P_{D_{\rm max}}(\mathcal{N})$ [Eq.~\eqref{Eq:P_D}] is operationally related to the smallest bath size needed to thermalize all outputs of $\mathcal{N}$~\cite{Sparaciari2019}, and $\bar{P}_{D_{\rm max}}(\mathcal{N})$ [Eq.~\eqref{Eq:P_Dbar}] is an upper bound of the single-shot classical capacity of $\mathcal{N}$ in a classical communication scenario subject to thermodynamic constraints~\cite{Takagi2019-3}.
These illustrate how $R$-preservability can be related to existing results and provide new physical messages.

To start with, we define the $R$-theory of athermality.
The term ``athermality'' means the status that a system is out of thermal equilibrium.
With a fixed system size, the only state without this resource is the unique thermal state.
Formally, consider a given system ${\rm S}$ with dimension $d$. 
Suppose the system Hamiltonian is $H_{\rm S}$ and a temperature $T$ is also given.
Then the corresponding {\em thermal state} reads 
\begin{align}\label{Eq:ThermalState}
\gamma = \frac{e^{-\beta H_{\rm S}}}{{\rm tr}(e^{-\beta H_{\rm S}})}, 
\end{align}
where $\beta = \frac{1}{k_BT}$ is the inverse temperature and $k_B$ is the Boltzmann constant.
In this $R$-theory, all free states take the form $\{\gamma^{\otimes k}\,|\,k\in\mathbb{N}\}$, where we only consider dimensions of the form $d^k$ with some positive integer $k$.
The free operations will be the {\em Gibbs-preserving maps}, which are channels $\mathcal{E}$ satisfying $\mathcal{E}(\gamma^{\otimes k}) = \gamma^{\otimes l}$ for some $k,l\in\mathbb{N}$ (note that $k,l$ are uniquely determined by the input/output dimensions).
Intuitively, these channels are those which cannot drive thermal equilibrium states out of equilibrium, thereby being unable to create athermality.
Equipped with Gibbs-preserving maps, the corresponding $R$-theory will satisfy properties~\ref{Def:Proper:Nonempty},~\ref{Def:Proper:FreeIdentity},~\ref{Def:ProperQR-Tensor}, and~\ref{Def:Proper:Tensor}.

\subsubsection{Athermality Preservability and Bath Size}\label{Sec:Bath-Size}
As the first example, we will demonstrate that athermality preservability of a Gibbs-preserving channel can be naturally linked to the bath size needed to thermalize a system.
To this end, we use the framework and a thermalization model introduced by Ref.~\cite{Sparaciari2019}.
We will briefly explain the ingredients relevant to this work, and we refer the readers to Ref.~\cite{Sparaciari2019} for further details.

We begin by specifying the system and the bath in the thermalization scenario.
Consider a system ${\rm S}$ with Hilbert space $\mathcal{H}_{\rm S}$ and a bath ${\rm B}$ with Hilbert space $\mathcal{H}_{\rm S}^{\otimes (n-1)}$ ($n\in\mathbb{N}$).
The bath is assumed to possess the temperature $T$ and the Hamiltonian $H_{\rm B} = \sum_{i=1}^{n-1}\id_{1}\otimes...\otimes\id_{i-1}\otimes H_{\rm S}\otimes\id_{i+1}\otimes...\otimes\id_{n-1}$, where $H_{\rm S}$ is the Hamiltonian of the given system ${\rm S}$.
Let $\gamma$ be the thermal state associated with $T$ and $H_{\rm S}$.
Then we assume the bath is initially in the state $\gamma^{\otimes (n-1)}$.
The central question is to study how large the bath needs to be in order to successfully thermalize the system ${\rm S}$ in a given state $\rho_{\rm S}$.

To this end, a global channel $\mathcal{E}_{\rm SB}:{\rm SB}\to{\rm SB}$ is said to {\em $\epsilon$-thermalize} the system state $\rho_{\rm S}$ if~\cite{Sparaciari2019}
\begin{align}\label{Eq:Thermalize-Def}
\norm{\mathcal{E}_{\rm SB}\left[\rho_{\rm S}\otimes\gamma^{\otimes (n-1)}\right] - \gamma^{\otimes n}}_1\le\epsilon.
\end{align}
Here the aim is to study thermalization of a fixed and given input state in the sense that the channel will globally thermalize the system ${\rm SB}$.
To model the system-bath interaction for thermalization, we consider the following master equation introduced by Ref.~\cite{Sparaciari2019}:
\begin{align}\label{Eq:Master-Equation}
\frac{\partial\rho_{\rm SB}(t)}{\partial t} = \sum_k\lambda_k\left[U^{(k)}_{\rm SB}\rho_{\rm SB}(t)U^{(k),\dagger}_{\rm SB} - \rho_{\rm SB}(t)\right],
\end{align}
where $\rho_{\rm SB}(t)$ is the state on the global system ${\rm SB}$ at time $t$, $U^{(k)}_{\rm SB}$ is an energy-preserving unitary on the global system satisfying $[U^{(k)}_{\rm SB}, H_{\rm S} + H_{\rm B}] = 0$, and $\lambda_k$ is the rate for the corresponding unitary channel to happen [one can see this by checking Eqs.~(A2) and~(A3) in the Appendix A of Ref.~\cite{Sparaciari2019}, which imply that each unitary operator $U^{(k)}_{\rm SB}$ occurs according to a Poisson distribution with mean value $\lambda_kt$].
Each unitary $U^{(k)}_{\rm SB}$ models an elastic collision between certain subsystems in ${\rm SB}$.
We refer the reader to Ref.~\cite{Sparaciari2019} for the detailed framework.

Now, let $\mathcal{C}_n$ be the set of all channels acting as ${\rm SB}\to{\rm SB}$ that can be generated by the model Eq.~\eqref{Eq:Master-Equation} with a bath of size $n-1$ and a realization time $t$.
Then consider the following quantity~\cite{Sparaciari2019}:
\begin{align}
n_\epsilon(\rho_{\rm S})\coloneqq\inf\{n\in\mathbb{N}\,|\,\exists\,\mathcal{E}_{\rm SB}\in\mathcal{C}_n\;{\rm s.t.\;Eq.~\eqref{Eq:Thermalize-Def}\;holds}\}.
\end{align}
This quantity can be interpreted as the smallest bath size needed to $\epsilon$-thermalize the given state $\rho_{\rm S}$ with the thermalization model Eq.~\eqref{Eq:Master-Equation}.
It turns out that this concept can be generalized to channels. 
Define 
\begin{align}\label{Eq:BathSizeChannel}
\mathcal{B}^\epsilon(\mathcal{N})\coloneqq\sup_\rho n_\epsilon[\mathcal{N}(\rho)] - 1,
\end{align}
which is the maximization over all the smallest bath sizes among all outputs of $\mathcal{N}$ (note that here we only consider channel $\mathcal{N}$ with the output space ${\rm S}$, which is the main system admitting the given thermal state $\gamma$).
Then this can be interpreted as the smallest bath size needed to $\epsilon$-thermalize {\em all} outputs of $\mathcal{N}$ under the given thermalization model.
One can therefore understand this quantity as the minimal bath size associated with the channel $\mathcal{N}$.

Now we are in the position to provide the following bounds, whose proof is given in Appendix~\ref{App:BathSizeProof}.
This can be regarded as a generalization of the main results of Ref.~\cite{Sparaciari2019} to channels:
\begin{theorem}\label{Coro:BathSize}
Given a Gibbs-preserving map $\mathcal{N}$ and $0\le\epsilon<1$, we have
\begin{align}
\mathcal{B}^\epsilon(\mathcal{N})\le \frac{1}{\epsilon^2}2^{P_{D_{\rm max}}(\mathcal{N})}.
\end{align}
Moreover, if we further assume that $\gamma$ is full-rank, $\mathcal{N}$ is coherence-annihilating, and the system Hamiltonian $H_{\rm S}$ satisfies the energy subspace condition (Definition~\ref{Def:ESC}), then we have
\begin{align}
2^{P_{D_{\rm max}}(\mathcal{N})}\le \mathcal{B}^{\epsilon}(\mathcal{N}) + \frac{2\sqrt{\epsilon}}{p_{\rm min}(\gamma)} + 1.
\end{align}
where $p_{\rm min}(\gamma)$ is the smallest eigenvalue of $\gamma$.
\end{theorem}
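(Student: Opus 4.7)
The plan is to reduce Theorem~\ref{Coro:BathSize} to the state-level single-shot thermalization estimates of Ref.~\cite{Sparaciari2019} by first evaluating $P_{D_{\rm max}}(\mathcal{N})$ explicitly in the athermality theory. The key observation is that, since the only thermal state on a system of dimension $d^\ell$ is $\gamma^{\otimes\ell}$, every channel in $\mathcal{O}_R^N$ is a replacer to a thermal state, and every such replacer is absolutely $R$-annihilating; hence $\mathcal{O}_R^N=\widetilde{\mathcal{O}}_R^N$. In Eq.~\eqref{Eq:P_D} this forces $\Lambda_S$ to be the replacer $\rho\mapsto\gamma$ and $\widetilde{\Lambda}_A$ to be the replacer $\sigma\mapsto\gamma^{\otimes m}$ for each ancilla dimension $d^m$. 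Because any replacer decouples the system from the ancilla, one has $(\mathcal{N}\otimes\widetilde{\Lambda}_A)(\rho_{SA})=\mathcal{N}(\rho_S)\otimes\gamma^{\otimes m}$ and $(\Lambda_S\otimes\widetilde{\Lambda}_A)(\rho_{SA})=\gamma^{\otimes(m+1)}$, and additivity of $D_{\rm max}$ against a shared product factor collapses the ancilla, giving
\begin{align}
P_{D_{\rm max}}(\mathcal{N})=\sup_{\rho}D_{\rm max}[\mathcal{N}(\rho)\,\|\,\gamma].
\end{align}

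For the first bound, I would invoke the state-level upper estimate of Ref.~\cite{Sparaciari2019}, namely $n_\epsilon(\sigma)-1\le 2^{D_{\rm max}(\sigma\|\gamma)}/\epsilon^2$, which is a Markov-type bound on the number of elementary collisions in the master equation Eq.~\eqref{Eq:Master-Equation}. Substituting $\sigma=\mathcal{N}(\rho)$, taking $\sup_\rho$ on both sides, and using the identification above yields $\mathcal{B}^\epsilon(\mathcal{N})\le 2^{P_{D_{\rm max}}(\mathcal{N})}/\epsilon^2$. For the converse, the additional hypotheses — $\gamma$ full-rank, $\mathcal{N}$ coherence-annihilating, and $H_S$ satisfying the energy subspace condition — are exactly those under which the matching lower estimate $2^{D_{\rm max}(\sigma\|\gamma)}\le n_\epsilon(\sigma)+2\sqrt{\epsilon}/p_{\rm min}(\gamma)$ of Ref.~\cite{Sparaciari2019} applies to every output $\sigma=\mathcal{N}(\rho)$: coherence-annihilation makes each output diagonal in the energy basis, while the energy subspace condition removes the degeneracies that would obstruct the single-shot conversion. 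Taking $\sup_\rho$ on the left, using the identification of $P_{D_{\rm max}}$ above, and using $\sup_\rho n_\epsilon(\mathcal{N}(\rho))=\mathcal{B}^\epsilon(\mathcal{N})+1$ on the right gives the second inequality.

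The only subtle step is the identification of $P_{D_{\rm max}}$: one must verify that the ancilla cannot inflate the quantity beyond the state-level value $\sup_\rho D_{\rm max}[\mathcal{N}(\rho)\|\gamma]$. The simplicity of the athermality theory — where $\mathcal{O}_R^N=\widetilde{\mathcal{O}}_R^N$ consists solely of thermal replacers, and so every such ancillary channel factors as a partial trace followed by a thermal preparation — is what makes this collapse clean. Everything downstream is then a direct transcription of the state-level bounds of Ref.~\cite{Sparaciari2019} to the worst-case output of $\mathcal{N}$.
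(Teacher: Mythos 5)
Your overall route is the same as the paper's: collapse the ancilla to identify $P_{D_{\rm max}}(\mathcal{N})=\sup_\rho D_{\rm max}[\mathcal{N}(\rho)\|\gamma]$ (using that the only absolutely $R$-annihilating channels here are thermal replacers and that $D_{\rm max}(\rho\otimes\eta\|\sigma\otimes\eta)=D_{\rm max}(\rho\|\sigma)$), then transcribe the state-level bounds of Ref.~\cite{Sparaciari2019} to the worst-case output of $\mathcal{N}$. The identification of $P_{D_{\rm max}}$ and the upper bound are fine and match the paper.

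There is, however, one genuine gap in your converse direction. You invoke ``the matching lower estimate $2^{D_{\rm max}(\sigma\|\gamma)}\le n_\epsilon(\sigma)+2\sqrt{\epsilon}/p_{\rm min}(\gamma)$ of Ref.~\cite{Sparaciari2019}'', but no such unsmoothed estimate exists in that reference. What Ref.~\cite{Sparaciari2019} provides (Theorem~\ref{Thm:Carlo}) is $D_{\rm max}^{\sqrt{\epsilon}}(\sigma\|\gamma)\le\log_2 n_\epsilon(\sigma)$, stated in terms of the \emph{smoothed} max-relative entropy; the term $2\sqrt{\epsilon}/p_{\rm min}(\gamma)$ does not appear there at all. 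Passing from $D_{\rm max}^{\sqrt{\epsilon}}$ to $D_{\rm max}$ with that additive error is a separate de-smoothing step: one must show that $\rho\mapsto 2^{D_{\rm max}(\rho\|\gamma)}$ is Lipschitz in trace distance with constant $1/p_{\rm min}(\gamma)$ whenever $\gamma$ is full-rank, i.e.
\begin{align}
\left|2^{D_{\rm max}(\rho'\|\gamma)}-2^{D_{\rm max}(\rho\|\gamma)}\right|\le\frac{\norm{\rho-\rho'}_1}{p_{\rm min}(\gamma)},
\end{align}
which is precisely the paper's Lemma~\ref{Lemma:Continuity-D_max} and is the place where the full-rank hypothesis actually does its work (you list full-rankness among the hypotheses but do not assign it this role). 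Without this continuity lemma your chain of inequalities for the lower bound does not close; with it, the argument goes through exactly as you outline, since the smoothing ball $\frac{1}{2}\norm{\rho'-\mathcal{N}(\rho)}_1\le\sqrt{\epsilon}$ then costs at most $2\sqrt{\epsilon}/p_{\rm min}(\gamma)$ in $2^{D_{\rm max}}$.
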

We remark that by saying ``coherence-annihilating'' we mean the channel can only output states diagonal in the given energy eigenbasis (i.e. no coherence can survive).
This requirement plus the energy subspace condition (Definition~\ref{Def:ESC}) are necessary for the proof of the lower bound derived in Ref.~\cite{Sparaciari2019} (specifically, it is required by Lemma 17 in the Appendix C of Ref.~\cite{Sparaciari2019}).
An open question in this research line is whether one can derive a similar lower bound without these constraints.

As expected, for a Gibbs-preserving channel $\mathcal{N}$, the quantity $\mathcal{B}^\epsilon(\mathcal{N})$ can be understood as a measure of the robustness of the channel $\mathcal{N}$ against thermalization.
From the upper bound, we learn that the weaker the channel's ability to preserve athermality is, the smaller a heat bath needs to be to thermalize every output of $\mathcal{N}$.
Theorem~\ref{Coro:BathSize} builds a link between the ability to preserve athermality and the resource needed to thermalize all the outputs of a given channel, and it also gives athermality preservability a different thermodynamic interpretation.

\subsubsection{Athermality Preservability and Classical Communication}\label{Sec:CC}
It turns out that the robustness-like monotone $\bar{P}_{D_{\rm max}}$ can be related to a classical communication scenario subject to certain thermodynamic constraints, which is the second example in this section.
To start with, consider the communication scenario in which we want to send classical information (in terms of classical bits, which can be written as an orthonormal basis $\{\ket{m}\}_{m=0}^{M-1}$) via a channel $\mathcal{N}$.
Again, we always assume the channel $\mathcal{N}$ has the output space ${\rm S}$, which is associated with a thermal state $\gamma$ and hence an $R$-theory of athermality.
Here are two constraints for this communication setup: 
\begin{itemize}
\item The whole process (including encoding and decoding) cannot create athermality.
\end{itemize}
In other words, this means the corresponding physical system used to implement the channel and transmit the classical information can only get closer and closer to thermal equilibrium.
\begin{itemize}
\item The dynamics of $\mathcal{N}$ has a time scale much longer than the thermalization time scale in the environment.
\end{itemize}
Theoretically, this motivates us to approximate the dynamics of the surrounding system ${\rm A}$ by the full thermalization $\Phi_{\gamma_{\rm A}}:(\cdot)\mapsto\gamma_{\rm A}$, where $\gamma_{\rm A}$ is the thermal state of the environment (that is, $\gamma_{\rm A} = \gamma^{\otimes k}$ for a positive integer $k$).
Together with the non-signalling constraints discussed in Ref.~\cite{Takagi2019-3} (whose framework is briefly introduced in Appendix~\ref{App:CC}), we have the following theoretical model for this communication scenario:
\begin{align}\label{Eq:Athermality-Constrained}
\mathcal{E}_d\circ(\mathcal{N}\otimes\Phi_{\gamma_{\rm A}})\circ\mathcal{E}_e,
\end{align}
where $\mathcal{E}_e,\mathcal{E}_d,\mathcal{N}$ are all Gibbs-preserving maps ($\mE_e$ and $\mE_d$ can be interpreted as the encoding and decoding maps, respectively), and in the ancillary system ${\rm A}$ there is a full thermalization channel $\Phi_{\gamma_{\rm A}}$.
We call this a {\em thermalized classical communication scenario}, which is identical to the communication scenario given in Ref.~\cite{Takagi2019-3} (see also Appendix~\ref{App:CC}) subject to the thermodynamic constraints given above.
The central goal is to understand how much classical information can be sent within a given error.

Formally, the classical information is indicated by an orthonormal basis $\{\ket{m}\}_{m=0}^{M-1}$, and we are interested in how many basis elements can be recovered in the end of the whole process.
To this end, we use the averaged error $\varepsilon(\mathcal{N},\mathcal{E}_e,\mathcal{E}_d,\gamma_{\rm A})$ of a given combination $(\mathcal{N},\mathcal{E}_e,\mathcal{E}_d,\gamma_{\rm A})$ to evaluate the faithfulness of the output:
\begin{align}\label{Eq:AveError}
\varepsilon(\mathcal{N},\mathcal{E}_e,\mathcal{E}_d,\gamma_{\rm A})\coloneqq 1 - \frac{1}{M}\sum_{m=0}^{M-1}\bra{m}\mathcal{E}_d\circ(\mathcal{N}\otimes\Phi_{\gamma_{\rm A}})\circ\mathcal{E}_e(\proj{m})\ket{m}.
\end{align}
We can now define the following single-shot classical capacity with an error $0<\epsilon<1$:
\begin{align}
C_{\gamma, (1)}^\epsilon(\mathcal{N})\coloneqq\sup_{\mathcal{E}_e,\mathcal{E}_d,{\rm A}}\{\log_2M\,|\,\varepsilon(\mathcal{N},\mathcal{E}_e,\mathcal{E}_d,\gamma_{\rm A})\le\epsilon\},
\end{align}
where the maximization is taken over all the possible Gibbs-preserving channels $\mathcal{E}_e,\mathcal{E}_d$ and ancillary systems ${\rm A}$.
This quantity tells us the optimal performance of the channel $\mathcal{N}$ in a thermalized classical communication scenario: It is the highest amount of classical information that can be transmitted within the given error $\epsilon$.
As a remark, in the realistic situation the size of the environment cannot be as large as we want.
This means in the practical setup one cannot optimize over all the possible ancillary systems, and the above quantity is an upper bound of the realistic capacity in general.

It turns out that this quantity is upper bounded by the athermality preservability monotone $\bar{P}_{D_{\rm max}}$.
Hence, when the given channel $\mathcal{N}$ has a weak ability to maintain athermality, then it can neither have a good performance in a thermalized classical communication scenario.
Formally, we have the following result, which is conceptually a corollary of Theorem 3 in Ref.~\cite{Takagi2019-3}.
The proof can be found in Appendix~\ref{App:ConverseBoundAthermality}.
\begin{theorem}\label{Result:ConverseBoundAthermality}
For a Gibbs-preserving map $\mathcal{N}$ and $0<\epsilon,\delta<1$, we have
\begin{align}
C^\epsilon_{\gamma,(1)}(\mathcal{N})\le \bar{P}^\delta_{D_{\rm max}}(\mathcal{N}) + \log_2\frac{1}{1-\epsilon-\delta}.
\end{align}
\end{theorem}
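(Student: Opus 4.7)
The plan is to follow the converse-bound strategy of Theorem~3 in Ref.~\cite{Takagi2019-3}, exploiting the identification $\bar{P}_{D_{\rm max}}=L_R$ discussed in Sec.~\ref{Sec:Robustness} together with two features specific to the athermality theory: a tensor product of thermal states is a thermal state, and Gibbs-preserving channels send thermal states to thermal states. The argument has three stages: (i) use the $\delta$-smoothing to enlarge the error budget, (ii) invoke the max-relative-entropy bound to dominate the communication channel by a scalar multiple of an athermality-annihilating map, and (iii) exploit thermal preservation to turn the resulting operator inequality into a bound on $\log_2 M$.

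For any encoder $\mathcal{E}_e$, decoder $\mathcal{E}_d$, and ancillary system ${\rm A}$ achieving a message-set size $M$ within error $\epsilon$, I first show that replacing $\mathcal{N}$ by any $\mathcal{N}'$ with $\tfrac{1}{2}\|\mathcal{N}'-\mathcal{N}\|_\diamond\le\delta$ changes the averaged error of Eq.~\eqref{Eq:AveError} by at most $\delta$: the replacement character of $\Phi_{\gamma_{\rm A}}$, the data-processing property of the trace norm under $\mathcal{E}_d$, and the Holevo--Helstrom bound $|\bra{m}(\rho-\sigma)\ket{m}|\le\tfrac{1}{2}\|\rho-\sigma\|_1$ together give $|\varepsilon(\mathcal{N},\ldots)-\varepsilon(\mathcal{N}',\ldots)|\le\tfrac{1}{2}\|\mathcal{N}-\mathcal{N}'\|_\diamond\le\delta$, hence $\varepsilon(\mathcal{N}',\ldots)\le\epsilon+\delta$. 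Setting $L'\coloneqq\bar{P}_{D_{\rm max}}(\mathcal{N}')$ and unpacking Eq.~\eqref{Eq:P_Dbar} via the operator-inequality form of $D_{\rm max}$, one obtains a $\Lambda\in\mathcal{O}_R^N$ with $\mathcal{N}'(\sigma)\le 2^{L'}\Lambda(\sigma)$ for every input $\sigma$. With $\rho_{\rm S}^m\coloneqq{\rm tr}_{\rm A}\mathcal{E}_e(\proj{m})$, the fact that $\Phi_{\gamma_{\rm A}}$ outputs $\gamma_{\rm A}$ regardless of input then yields
\begin{align*}
(\mathcal{N}'\otimes\Phi_{\gamma_{\rm A}})\circ\mathcal{E}_e(\proj{m})=\mathcal{N}'(\rho_{\rm S}^m)\otimes\gamma_{\rm A}\le 2^{L'}\,\Lambda(\rho_{\rm S}^m)\otimes\gamma_{\rm A}.
\end{align*}

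Because $\Lambda$ annihilates athermality, $\Lambda(\rho_{\rm S}^m)\otimes\gamma_{\rm A}$ is a thermal free state on the joint output, and the Gibbs-preserving decoder $\mathcal{E}_d$ maps it to a fixed thermal state $\gamma_{\rm out}$ that is independent of $m$. Taking the $\bra{m}\cdot\ket{m}$ matrix element and summing over $m$, using $\sum_{m=0}^{M-1}\proj{m}\le\id$,
\begin{align*}
(1-\epsilon-\delta)\,M\le\sum_{m=0}^{M-1}\bra{m}\mathcal{E}_d\circ(\mathcal{N}'\otimes\Phi_{\gamma_{\rm A}})\circ\mathcal{E}_e(\proj{m})\ket{m}\le 2^{L'}\sum_{m=0}^{M-1}\bra{m}\gamma_{\rm out}\ket{m}\le 2^{L'},
\end{align*}
so $\log_2 M\le L'+\log_2\tfrac{1}{1-\epsilon-\delta}$. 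Infimizing $\mathcal{N}'$ over the $\delta$-ball converts $L'$ into $\bar{P}_{D_{\rm max}}^\delta(\mathcal{N})$ as in Eq.~\eqref{Eq:SmoothP_Dmaxbar}, and supremizing over $(\mathcal{E}_e,\mathcal{E}_d,{\rm A})$ with $\varepsilon\le\epsilon$ gives the stated bound. The main obstacle I expect is transporting the state-form $D_{\rm max}$ inequality, whose definition in Eq.~\eqref{Eq:P_Dbar} uses an identity ancillary extension, through composition with the non-identity ancillary channel $\Phi_{\gamma_{\rm A}}$; this is handled cleanly by passing to the ${\rm S}$-marginal and tensoring the positive thermal state $\gamma_{\rm A}$ back in, but the argument relies on the athermality-specific fact that this tensor product remains free and is then absorbed into a single thermal state by the Gibbs-preserving decoder.
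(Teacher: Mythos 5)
Your proposal is correct and follows essentially the same route as the paper's proof in Appendix~\ref{App:ConverseBoundAthermality}: a $\delta$-perturbation bound on the averaged error (the paper's Fact~\ref{Fact:0}), the operator-inequality form of $\bar{P}_{D_{\rm max}}$ dominating $\mathcal{N}'$ by a multiple of the athermality-annihilating channel $\Phi_\gamma$ (Fact~\ref{Fact:1}), and the trace bound $\sum_m\bra{m}\mathcal{E}_d(\gamma\otimes\gamma_{\rm A})\ket{m}\le1$. The only cosmetic difference is that the paper extracts the dominating free channel via an approximating sequence $q_k$ in the robustness formula, while you invoke the $D_{\rm max}$ infimum directly, which is harmless here since $\mathcal{O}_R^N$ with output ${\rm S}$ is the single channel $\Phi_\gamma$.
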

Theorem~\ref{Result:ConverseBoundAthermality} gives an alternative operational interpretation of Eq.~\eqref{Eq:P_Dbar} in the case of athermality: It is an upper bound of the optimal performance in a thermalized classical communication scenario.
Being consistent with the intuition, this result implies that if the given Gibbs-preserving channel has highly thermalized output states, then it can hardly keep the encoded classical messages through a scenario that cannot drive the system away from thermal equilibrium.

\subsection{Application to Entanglement Preserving Local Thermalization}\label{Sec:EPLT}
As another application, we apply the theory of $R$-preservability to the study of {\em entanglement preserving local thermalization} (EPLT)~\cite{Hsieh2019}, which is a topic aiming to understand the interplay between globally distributed quantum correlation and locally performed thermalizations. 
The central question of EPLT is: Can entanglement survive subsystem thermalizations?
To formulate the question, suppose an unknown input state is distributed to two local agents A and B.
We assume that the agents can neither use quantum resources (e.g. sharing a maximally entangled state), nor can they communicate with each other.
Both of them possess an individual local heat bath (and hence a given local thermal state $\gamma_{\rm X}$; ${\rm X = A,B}$), and we allow classical correlation between the local heat baths.
When both local agents let their local systems interact with the local heat baths and thermalize, the question is whether we can have global entanglement after thermalizations are achieved locally, at least for certain input states.

The above question can be formulated information-theoretically as follows.
Formally, a {\em local operations plus shared randomness} (LOSR; see Appendix~\ref{App:LOSR} for the definition) channel $\mE$ is called a {\em local thermalization} to a pair of single party thermal states $(\gamma_{\rm A},\gamma_{\rm B})$ if 
\begin{align}
{\rm tr}_{\rm A}\circ\mE(\cdot) = \gamma_{\rm B};\;{\rm tr}_{\rm B}\circ\mE(\cdot) = \gamma_{\rm A}.
\end{align} 
In other words, it is a full thermalization channel to $\gamma_{\rm X}$ in the local system ${\rm X}$ [which is different from the state-dependent definition given by Eq.~\eqref{Eq:Thermalize-Def}].
An EPLT is defined to be a local thermalization that can preserve entanglement for certain inputs; that is, it is a local thermalization with non-zero entanglement preservability.
A physical message from the existence of EPLT is when local agents couple to a global heat bath that only admits classical correlations within, it is still possible for global entanglement to survive after subsystem thermalizations: Classical correlations in the bath are enough to protect entanglement from being destroyed by locally performed thermalizations.

Recently, the existence of EPLT has been proved for all nonzero local temperatures and finite-energy local Hamiltonians in a bipartite setup with the help of shared randomness\footnote{The shared randomness is necessary for the existence of EPLT. This is because the full thermalization channel $\Phi_\gamma:(\cdot)\mapsto\gamma$ is an entanglement-breaking channel, and every product local thermalization to $(\gamma_{\rm A},\gamma_{\rm B})$ will take be form $(\cdot)\mapsto\gamma_{\rm A}\otimes\gamma_{\rm B}$~\cite{Hsieh2019}. We also remark that the EPLT constructed in Ref.~\cite{Hsieh2019} can be interpreted as a combination of twirling operation plus a generalized depolarizing channel in finite temperatures. See the beginning of Appendix~\ref{App:Proof_Result:LocalTher} for the detail.}~\cite{Hsieh2019}.
This means there is no temperature and energy thresholds for the existence of EPLT; that is, EPLT has no ``thermodynamic threshold'' to exist. 
From this a natural question is to ask whether there is any ``correlation threshold''; in other words, is it true that EPLT can exist only when its ability to preserve entanglement is strong enough?
With the formulation of $R$-preservability in hand, we can now answer this question.
As proved in Appendix~\ref{App:closetoB}, the result we found suggests that EPLT is a phenomenon generic for different values of the entanglement preservability [note that the $R$-theory of entanglement with LOSR channels as free operations will satisfy properties~\ref{Def:Proper:Nonempty},~\ref{Def:Proper:FreeIdentity},~\ref{Def:ProperQR-Tensor}, and~\ref{Def:Proper:Tensor}]. 

\begin{theorem}\label{Result:NoCorrelationThreshold}
For every full-rank $\gamma_{\rm A},\gamma_{\rm B}$ and every $\delta>0$, there exists an entanglement preserving local thermalization $\mE$ to $(\gamma_{\rm A},\gamma_{\rm B})$ such that
\begin{align}
\bar{P}_{\norm{\cdot}_1}(\mathcal{E})<\delta.
\end{align}
\end{theorem}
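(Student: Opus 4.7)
The plan is to exhibit, for each $\delta>0$, an EPLT obtained by convexly interpolating between a fixed EPLT $\mathcal{F}$ to $(\gamma_{\rm A},\gamma_{\rm B})$---whose existence for full-rank thermal states is provided by Ref.~\cite{Hsieh2019}---and the product thermalization $\Lambda_\star\coloneqq\Phi_{\gamma_{\rm A}}\otimes\Phi_{\gamma_{\rm B}}$, with $\Phi_{\gamma_{\rm X}}(\cdot)\coloneqq\gamma_{\rm X}$. I then bound the preservability of the interpolant by its diamond-norm distance to the closest annihilating channel sitting on the very same line segment. Concretely, consider
\[
\mathcal{E}_p\coloneqq p\,\mathcal{F}+(1-p)\,\Lambda_\star,\qquad p\in[0,1].
\]
Every $\mathcal{E}_p$ is an LOSR channel (LOSR is convex) and is a local thermalization to $(\gamma_{\rm A},\gamma_{\rm B})$ (the marginal constraint is linear in the channel), so the mixture is a candidate EPLT whenever some input is mapped to an entangled output.

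Next, I locate the transition between annihilating and preserving behavior by studying $S\coloneqq\{p\in[0,1]:\mathcal{E}_p\in\mathcal{O}_R^N\}$. Three observations pin down its shape: (i) $0\in S$ since $\Lambda_\star$ always outputs a product state; (ii) $1\notin S$ since $\mathcal{F}$ is an EPLT; (iii) $S$ is downward-closed, because for any $p_1\in S$ and any $p\in[0,p_1]$ one can rewrite $\mathcal{E}_p=\tfrac{p}{p_1}\mathcal{E}_{p_1}+\bigl(1-\tfrac{p}{p_1}\bigr)\Lambda_\star$ as a convex combination of elements of $\mathcal{O}_R^N$, which is convex by property~\ref{Def:Proper:Tensor} together with convexity of the separable set. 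Moreover, $\mathcal{O}_R^N$ is closed in diamond norm: LOSR is a closed cone in finite dimensions, and diamond-norm convergence implies pointwise trace-norm convergence, so limits of separable-output channels remain separable-output. Hence $S=[0,p^*]$ for some $p^*\in[0,1)$ with $\mathcal{E}_{p^*}\in\mathcal{O}_R^N$.

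With $\mathcal{E}_{p^*}$ now available as a valid test channel in $\mathcal{O}_R^N$, the preservability bound is immediate: for every $p\in(p^*,1]$, $\mathcal{E}_p$ is an EPLT and
\[
\bar{P}_{\norm{\cdot}_1}(\mathcal{E}_p)\le\norm{\mathcal{E}_p-\mathcal{E}_{p^*}}_\diamond=(p-p^*)\,\norm{\mathcal{F}-\Lambda_\star}_\diamond\le 2(p-p^*).
\]
Taking $p\coloneqq\min\{1,\,p^*+\delta/4\}$ produces an EPLT $\mathcal{E}_p$ to $(\gamma_{\rm A},\gamma_{\rm B})$ with $\bar{P}_{\norm{\cdot}_1}(\mathcal{E}_p)<\delta$, which is the claim.

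The main technical point deserving care is the closedness of $\mathcal{O}_R^N$ in diamond norm, which is what ensures $p^*\in S$ and thus the existence of an annihilating test channel infinitesimally close to the constructed EPLT. Once this is spelled out, no analytic properties of the specific $\mathcal{F}$ from Ref.~\cite{Hsieh2019} are needed beyond its sheer existence, and the remainder of the argument is purely convex-geometric.
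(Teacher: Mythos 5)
Your proposal is correct and follows essentially the same route as the paper's Appendix~\ref{App:closetoB}: interpolate between an EPLT from Ref.~\cite{Hsieh2019} and $\Phi_{\gamma_{\rm A}}\otimes\Phi_{\gamma_{\rm B}}$, use convexity and diamond-norm closedness of $\mathcal{O}_{\rm E}^N$ to show the annihilating parameters form a closed interval $[0,p^*]$ with $p^*<1$, and then bound $\bar{P}_{\norm{\cdot}_1}$ by the diamond-norm distance to $\mathcal{E}_{p^*}$. Your direct downward-closedness argument for $S$ is a slightly more streamlined substitute for the paper's compactness-of-$\mathcal{L}([0,1])\cap\mathcal{O}_{\rm E}^N$ reasoning, but the substance is identical.
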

Hence, the existence of EPLT is generic both in thermodynamic measures (temperature and energy) and correlation measure (entanglement preservability).
Moreover, this physical message can actually be generalized to the preservability of free entanglement~\cite{Horodecki1998}.
Before stating the main result, we specify terminologies. 
In what follows, the {\em normalized temperature} of the given local system ${\rm X}$ is defined by $\tau_{\rm X}\coloneqq\frac{k_BT_{\rm X}}{\norm{H_{\rm X}}_\infty}$, where $T_{\rm X}$ is the local temperature, $k_B$ is the Boltzmann constant, and $\norm{H_{\rm X}}_\infty$ is the highest local energy (here the {\em sup norm} is defined by $\norm{\rho}_\infty\coloneqq\sup_{\ket{\psi}}|\bra{\psi}\rho\ket{\psi}|$). 
$\mathcal{O}_{\rm FE}^N$ is the set of all LOSR channels that cannot preserve free entanglement~\cite{Horodecki1998}.
Using a new family of EPLT constructed in Appendix~\ref{App:AlternativeEPLT} [Eq.~\eqref{Eq:Small-epsilon-EPLT}], we prove the following result in Appendix~\ref{App:Proof_Result:LocalTher} ($d$ is the common local dimensions of both subsystems):
\begin{theorem}\label{Result:LocalTher}
For every pair $(\gamma_{\rm A},\gamma_{\rm B})$ there exists a local thermalization $\mathcal{E}_+$ such that
\begin{align}\label{Thm:distance}
\inf_{\Lambda\in\mathcal{O}_{\rm FE}^N}\norm{\mE_+ - \Lambda}_\diamond\geq (3d-1) p_{\rm min} -2,
\end{align}
where $p_{\rm min}$ is the smallest eigenvalue among $\gamma_{\rm A}$ and $\gamma_{\rm B}$.

For every $\delta>0$, there exists a finite value $\tau_\delta>0$ such that for every pair $(\gamma_{\rm A},\gamma_{\rm B})$ with $\min_{\rm X}\tau_{\rm X} >\tau_\delta$, there exists an entanglement preserving local thermalization $\mathcal{E}_-$ to $(\gamma_{\rm A},\gamma_{\rm B})$ such that
\begin{align}\label{Result:LocalTherEq}
\bar{P}_{\norm{\cdot}_1}(\mathcal{E}_-)<\delta\quad\&\quad\mathcal{E}_-\notin\mathcal{O}_{\rm FE}^N.
\end{align}
That is, $\mathcal{E}_-$ can preserve free entanglement.
\end{theorem}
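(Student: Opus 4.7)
The plan is to prove the two claims separately, each via an explicit construction. For Part~1, I would take $\mE_+$ to be the EPLT from Ref.~\cite{Hsieh2019}---a twirling followed by a finite-temperature generalized depolarizing channel, as noted in the footnote at the beginning of Section~\ref{Sec:EPLT}. For Part~2, I would use the new family of EPLTs built in Appendix~\ref{App:AlternativeEPLT} [Eq.~\eqref{Eq:Small-epsilon-EPLT}], whose defining feature is that it sits arbitrarily close to a free channel in $\mathcal{O}_{\rm Ent}^N$ while still retaining a distillable output in the high-temperature regime; this is the ingredient that upgrades Theorem~\ref{Result:NoCorrelationThreshold} from entanglement preservation to free-entanglement preservation.

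For Part~1, to lower bound $\inf_{\Lambda\in\mathcal{O}_{\rm FE}^N}\norm{\mE_+-\Lambda}_\diamond$, I would evaluate both channels on the bipartite input $\proj{\phi_+}$ with $\ket{\phi_+}\coloneqq\frac{1}{\sqrt d}\sum_i\ket{ii}$, and apply the rank-one inequality $\norm{\rho-\sigma}_1\ge 2\,\tr{\proj{\phi_+}(\rho-\sigma)}$ valid for traceless $\rho-\sigma$. The local-unitary covariance built into the twirling step makes $\mE_+(\proj{\phi_+})$ an isotropic-type state whose fidelity $F_+\coloneqq\bra{\phi_+}\mE_+(\proj{\phi_+})\ket{\phi_+}$ is explicitly computable in terms of the eigenvalues of $\gamma_{\rm A}$ and $\gamma_{\rm B}$ and is linear in $p_{\rm min}$. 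For any $\Lambda\in\mathcal{O}_{\rm FE}^N$, the output $\Lambda(\proj{\phi_+})$ has no distillable entanglement, so the Horodecki distillability criterion~\cite{Horodecki1998} forces $\bra{\phi_+}\Lambda(\proj{\phi_+})\ket{\phi_+}\le 1/d$. Plugging these two fidelity estimates into the rank-one inequality and collecting constants should reproduce the bound $(3d-1)p_{\rm min}-2$.

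For Part~2, I would take $\mE_-$ from the family in Eq.~\eqref{Eq:Small-epsilon-EPLT} at an appropriate value of its parameter $\epsilon$. Both its building blocks are local thermalizations to $(\gamma_{\rm A},\gamma_{\rm B})$, so $\mE_-$ is as well. Closeness to a reference free channel $\Lambda_0\in\mathcal{O}_{\rm Ent}^N$ (of the form $\Phi_{\gamma_{\rm A}}\otimes\Phi_{\gamma_{\rm B}}$ with $\Phi_{\gamma_{\rm X}}:(\cdot)\mapsto\gamma_{\rm X}$) gives $\bar{P}_{\norm{\cdot}_1}(\mE_-)\le\norm{\mE_--\Lambda_0}_\diamond$, which can be driven below $\delta$ by choosing $\epsilon$ small, yielding the first half of Eq.~\eqref{Result:LocalTherEq}. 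For the free-entanglement preservation, I would again test on $\proj{\phi_+}$: by linearity, the output fidelity with $\ket{\phi_+}$ interpolates between $\tr{\proj{\phi_+}(\gamma_{\rm A}\otimes\gamma_{\rm B})}$ and the fidelity of the ``strong'' component. The threshold $\tau_\delta$ is then chosen so that the latter component stays above $1/d$ by a margin large enough to survive the admixture, at which point Horodecki's criterion certifies distillability.

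The main obstacle is the free-entanglement certification in Part~2: fidelity above $1/d$ is sufficient but not necessary for distillability, so one must tune the construction so that the relevant output fidelity remains strictly above $1/d$ despite the $\epsilon$-perturbation towards $\Lambda_0$. Quantifying how large the high-temperature margin must be---uniformly in the permitted pairs $(\gamma_{\rm A},\gamma_{\rm B})$---is what pins down $\tau_\delta$ as a function of $\delta$ and $d$, and it is the reason the second half of the theorem requires a temperature threshold rather than holding unconditionally as in Theorem~\ref{Result:NoCorrelationThreshold}. A subsidiary technical task is verifying that the twirling-plus-depolarizing output in Part~1 is genuinely isotropic-type, so that the $(3d-1)p_{\rm min}$ coefficient drops out of the fidelity calculation cleanly.
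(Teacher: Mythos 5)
Your Part~1 is correct and takes a genuinely different route from the paper. The paper lower-bounds $\inf_{\Lambda\in\mathcal{O}_{\rm FE}^N}\norm{\mE_+-\Lambda}_\diamond$ with the reverse triangle inequality at the channel level, crudely bounding the distance of the thermalized component to $\Lambda$ by $2$, which yields $\ep\left(1-\frac{1}{d}\right)-2(1-\ep)$ and then $(3d-1)p_{\rm min}-2$ at $\ep=dp_{\rm min}$. Your fidelity-witness argument on the single input $\proj{\Psi_d^+}$ gives $2\left(F_+-\frac{1}{d}\right)\ge 2\left(dp_{\rm min}-\frac{1}{d}\right)$, which is never smaller than $(3d-1)p_{\rm min}-2$ (their difference is $(d-1)\left(\frac{2}{d}-p_{\rm min}\right)\ge0$), so you will not literally ``reproduce'' the stated constant but you prove something stronger, which suffices. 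One small correction: $\mE_+(\proj{\Psi_d^+})=(1-\ep)\,\eta_{\rm A}^{\ep}\otimes\eta_{\rm B}^{\ep}+\ep\proj{\Psi_d^+}$ is not an isotropic state at finite temperature, but your fidelity computation does not need it to be.

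Part~2 has a genuine gap: the reference channel is wrong, and this breaks both halves of Eq.~\eqref{Result:LocalTherEq}. The family $\wt{\mathcal{E}}^{\ep}_{(\gamma_{\rm A},\gamma_{\rm B})}=(1-\ep)(\wt{\mathcal{E}}_{\rm A}\otimes\wt{\mathcal{E}}_{\rm B})\circ\mathcal{T}+\ep\,\mathcal{T}$ converges, as $\ep\to0$, to $(\wt{\mathcal{E}}_{\rm A}\otimes\wt{\mathcal{E}}_{\rm B})\circ\mathcal{T}$, \emph{not} to $\Phi_{\gamma_{\rm A}}\otimes\Phi_{\gamma_{\rm B}}$. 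The former retains the classical correlations produced by the twirling (at high temperature it sends $\proj{\Psi_d^+}$ to roughly $\frac{1}{d}\sum_i\proj{ii}$ rather than to $\gamma_{\rm A}\otimes\gamma_{\rm B}\approx\id/d^2$), so $\norm{\wt{\mathcal{E}}^{\ep}_{(\gamma_{\rm A},\gamma_{\rm B})}-\Phi_{\gamma_{\rm A}}\otimes\Phi_{\gamma_{\rm B}}}_\diamond$ stays bounded away from zero as $\ep\to0$ and cannot be driven below $\delta$. The correct choice---and the entire point of the new family---is to take the reference free channel to be $(\wt{\mathcal{E}}_{\rm A}\otimes\wt{\mathcal{E}}_{\rm B})\circ\mathcal{T}$ itself, which is entanglement-annihilating, giving $\bar{P}_{\norm{\cdot}_1}(\wt{\mathcal{E}}^{\ep}_{(\gamma_{\rm A},\gamma_{\rm B})})\le\ep\norm{\mathcal{T}-(\wt{\mathcal{E}}_{\rm A}\otimes\wt{\mathcal{E}}_{\rm B})\circ\mathcal{T}}_\diamond$. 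Your distillability certificate inherits the same error: if the weak component really were $\Phi_{\gamma_{\rm A}}\otimes\Phi_{\gamma_{\rm B}}$, its fidelity with $\ket{\Psi_d^+}$ would be of order $1/d^2$, and pushing the mixture above $1/d$ would force $\ep\gtrsim\frac{1}{d+1}$---incompatible with $\bar{P}_{\norm{\cdot}_1}<\delta$ for small $\delta$. What actually makes the theorem work (Lemma~\ref{Prop3}) is that the fidelity of $(\wt{\mathcal{E}}_{\rm A}\otimes\wt{\mathcal{E}}_{\rm B})\circ\mathcal{T}(\proj{\Psi_d^+})$ with $\ket{\Psi_d^+}$ tends to $1/d$ \emph{from below} as the normalized temperature grows, so an arbitrarily small admixture of $\mathcal{T}$ crosses the Horodecki threshold; quantifying that is exactly where $\tau_\delta$ comes from. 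Finally, your justification that $\mE_-$ is a local thermalization (``both building blocks are local thermalizations to $(\gamma_{\rm A},\gamma_{\rm B})$'') is incorrect: the marginals of $\mathcal{T}$ are maximally mixed and those of $(\wt{\mathcal{E}}_{\rm A}\otimes\wt{\mathcal{E}}_{\rm B})\circ\mathcal{T}$ are $\eta_{\rm X}^{\ep}\neq\gamma_{\rm X}$; only the specific convex combination, via the choice in Eq.~\eqref{Eq:eta} (which also requires $\ep\le dp_{\rm min}$), thermalizes to $(\gamma_{\rm A},\gamma_{\rm B})$.
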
	
Hence, for arbitrarily small entanglement preservability, there always exists a finite temperature EPLT that can also preserve free entanglement.
In other words, while they preserve arbitrarily little entanglement, many copies of some output can be distilled back to a maximally entangled state by LOCC channels. 
Hence, the conclusion that EPLT exists without a correlation threshold is the same even when we use the preservability of free entanglement as the measure.

We also remark that since 
\begin{align}
\bar{P}_{\norm{\cdot}_1}(\mathcal{E}_+)\coloneqq\inf_{\Lambda\in\mathcal{O}_{\rm E}^N}\norm{\mE_+ - \Lambda}_\diamond\ge\inf_{\Lambda\in\mathcal{O}_{\rm FE}^N}\norm{\mE_+ - \Lambda}_\diamond,
\end{align}
Eq.~\eqref{Thm:distance} automatically implies a lower bound of the entanglement preservability.
For high normalized temperatures, we have $p_{\rm min} \rightarrow \frac{1}{d}$ and the bound in Eq.~\eqref{Thm:distance} becomes arbitrarily close to $1-\frac{1}{d}$, as expected since $\inf_{\Lambda\in\mathcal{O}_{\rm E}^N}\norm{\mathcal{T} - \Lambda}_\diamond \geq 1-\frac{1}{d}$ [see Eq.~\eqref{Eq:App_T_lowerbound}] and $\mathcal{T}$ is an EPLT at infinite normalized temperature~\cite{Hsieh2019}.

\section{Conclusions}\label{Sec:Conclusion}
In a given resource theory of quantum states, we quantify the ability of free operations to preserve the resource.
To this end, we formulate this ability, termed {\em resource preservability}, as a channel resource induced by the given state resource.
Two classes of resource preservability monotones are proved: One is induced by state resource quantifiers, and another is based on channel distance measures.
The latter also induces a robustness-like measure with operational interpretation as the erasure cost of resource preservability~\cite{LiuWinter2019}.

To illustrate the connection between resource preservability and other research directions, we consider the resource theory of athermality.
We provide physical interpretations of two ahtermality preservability monotones induced by max-relative entropy. 
One has a thermodynamic interpretation directly related to the bath size needed for thermalization: The ability of a Gibbs-preserving channel to preserve athermality is physically connected to the minimal bath size needed to thermalize all its outputs.
Another monotone is shown to bound the capacity of a classical communication scenario under certain thermodynamic constraints.
By adding thermodynamic conditions to a general classical communication setup, the ability for the given Gibbs-preserving channel to preserve athermality tells us the highest possible amount of transmissible classical messages.

As another application, we study the entanglement preservability of {\em entanglement preserving local thermalizations} (EPLTs)~\cite{Hsieh2019}, which is a family of local operation plus shared randomness channels that locally behave as thermalization for arbitrary inputs, while globally have the ability to preserve certain amounts of entanglement.
In this work, we show that EPLT can exist with arbitrarily small entanglement preservability for every positive temperatures.
Hence, EPLT's existence is independent of both temperature constraints and the ability to preserve entanglement.
We further provide a new family of EPLTs that has the ability to preserve free entanglement, even though its entanglement preservability can be arbitrarily small at finite temperatures.
This suggests the existence of EPLT is generic in various values of free entanglement preservability.

Several open questions remain.
From the operational perspective, it will be interesting to know whether there is any operational interpretation of $R$-preservability monotones induced by state resource monotones introduced in Sec.~\ref{Sec:PQ_R}.
Also, the robustness-like measure introduced in Sec.~\ref{Sec:Robustness} is shown to have an operational interpretation~\cite{LiuWinter2019} when the given $R$-preservability theory has no activation property, while it is unknown whether this operational interpretation can still hold when the given $R$-preservability allows activation.
Regarding the structure of channel resource theory, it is so far unknown how to characterize the largest set of free super-channels of $R$-preservability (since it may not always be the set of all super-channels that cannot generate $R$-preservability as discussed in footnote~\ref{App:Well-defined}).
Finally, it is also an open question whether one can drop the temperature dependency of entanglement preserving local thermalizations in Theorem~\ref{Result:LocalTher}.
We hope this work can initiate the interest in the study of resource preservation properties in different state resource theories.

{\em Note added.} Recently, we became aware of the related work Ref.~\cite{Saxena2019} which consider the preservation of coherence as a channel resource.

\section*{Acknowledgements}
We thank (in alphabetical order) Antonio Ac\'in, Stefan B$\ddot{\rm a}$uml, Daniel Cavalcanti, Marcus Huber, Yeong-Cherng Liang, Matteo Lostaglio, Ryuji Takagi, Marco T\'ulio Quintino, Gabriel Senno,  Andreas Winter for fruitful discussions and comments.
This project is part of the ICFOstepstone - PhD Programme for Early-Stage Researchers in Photonics, funded by the Marie Sk\l odowska-Curie Co-funding of regional, national and international programmes (GA665884) of the European Commission, as well as by the ‘Severo Ochoa 2016-2019' program at ICFO (SEV-2015-0522), funded by the Spanish Ministry of Economy, Industry, and Competitiveness (MINECO).
We also acknowledge support from the Spanish MINECO (Severo Ochoa SEV-2015-0522), Fundaci\'o Cellex and Mir-Puig, Generalitat de Catalunya (SGR1381 and CERCA Programme).

\newpage\appendix

\section{Remark on the Activation Property of Resource Preservability}\label{Sec:Activation}
In this section, we provide an example of activation property of $R$-preservability.
Consider the $R$-theory of nonlocality~\cite{Bell,Bell-RMP} (and we write $R = {\rm NL}$) on a bipartite system ${\rm SS'}$ with equal finite local dimension $D$, and {\em local operations plus shared randomness} (LOSR) channels as the free operations~\cite{Rosset2018,Wolfe2019} (in Appendix~\ref{App:LOSR} we briefly explain the reason).
First, we recall a phenomenon called {\em superactivation}, which is proved for nonlocality~\cite{Palazuelos2012} and generalized to quantum steering~\cite{Hsieh2016,Quintino2016} (and we also mention other activation properties of nonlocality in Refs.~\cite{Masanes2008,Liang2012}).
Formally, a local state $\rho$ (with local dimension $D=d$) is said to admit superactivation of nonlocality if there exists a finite $k\in\mathbb{N}$ such that $\rho^{\otimes k}$ is nonlocal (in the bipartition ${\rm SS'}$ and local dimension $D=d^k$).
We refer the readers to Appendix~\ref{App:LOSR} for the definition of local/nonlocal states.
In ${\rm SS'}$ with $D=d$, it is shown that a state can demonstrate superactivation of nonlocality if its {\em fully entangled fraction} (FEF) is higher than $\frac{1}{d}$~\cite{Cavalcanti2013}, where for the given bipartite system the FEF is defined by~\cite{Horodecki1999-2,Albeverio2002}:
\begin{align}\label{Eq:FEF}
\mathcal{F}(\rho_{\rm SS'})\coloneqq\sup_{\ket{\Phi_d}}\bra{\Phi_d}\rho_{\rm SS'}\ket{\Phi_d}.
\end{align}
The maximization is taken over all maximally entangled states $\ket{\Phi_d}$ on the given bipartite system ${\rm SS'}$.
FEF is well-known for its capacity to characterize various quantum properties~\cite{Horodecki1999-2,Albeverio2002,Zhao2010,Cavalcanti2013,Bell-RMP,Quintino2016,Hsieh2016,Hsieh2018E,Ent-RMP,Liang2019}.

To construct the example, we make use of the {\em $(U\otimes U^*)$-twirling operation} on ${\rm SS'}$ defined by~\cite{Horodecki1999,Bennett1996}
\begin{align}\label{Eq:Twirling}
\mathcal{T}(\cdot)\coloneqq\int_{U(d)}(U\otimes U^*)(\cdot)(U\otimes U^*)^\dagger dU,
\end{align}
where the integration is taken over the group of $d\times d$ unitary operators $U(d)$ with the Haar measure $dU$.
The twirling operation $\mathcal{T}$ is by definition an LOSR channel, thereby being a free operation.
It has the property to preserve entanglement:
\begin{align}
\bra{\Psi_d^+}\mathcal{T}(\rho_{\rm SS'})\ket{\Psi_d^+} = \bra{\Psi_d^+}\rho_{\rm SS'}\ket{\Psi_d^+}.
\end{align}
Also, the output of $\mathcal{T}$ will always be an {\em isotropic state}~\cite{Horodecki1999}:
\begin{align}\label{Eq:rIso}
\rIso(p)\coloneqq p\proj{\Psi_d^+} + (1-p)\frac{\id_{\rm SS'}}{d^2},
\end{align}
where $\ket{\Psi_d^+}\coloneqq\frac{1}{\sqrt{d}}\sum_{n=0}^{d-1}\ket{n}\otimes\ket{n}$ is a maximally entangled state, and $p\in\left[-\frac{1}{d^2-1},1\right]$ due to the positivity of quantum states.
Now we consider the following channel:
\begin{align}
\wt{\mathcal{T}}(\cdot)\coloneqq \wt{p}\mathcal{T}(\cdot) + (1-\wt{p})\frac{\id_{\rm SS'}}{d^2},
\end{align}
and we choose $\wt{p}$ such that the output state cannot have FEF larger than the threshold for nonlocality of isotropic states~\cite{Bell-RMP}, while can still have FEF larger than $\frac{1}{d}$ for certain entangled inputs.
More precisely, we choose~\cite{Bell-RMP,Almeida2007}
\begin{align}
\frac{1}{d+1}<\wt{p}<\frac{(d-1)^{(d-1)}(3d-1)}{(d+1)d^d},
\end{align}
which will guarantee the above claim.	
Being an LOSR channel, this means $\widetilde{\mathcal{T}}\in\mathcal{O}_{\rm NL}^N$.
Also, when the input state is $\ket{\Psi_d^+}$, $\wt{\mathcal{T}}(\proj{\Psi_d^+})$ will be an entangled isotropic state, thereby having FEF $>\frac{1}{d}$ and hence admitting superactivation of nonlocality.
Hence, when one consider $\widetilde{\mathcal{T}}^{\otimes k}$ with a large enough $k$, it is possible to output nonlocal states (on the given bipartition ${\rm SS'}$ with local dimension $D = d^k$), which means $\widetilde{\mathcal{T}}^{\otimes k}\notin\mathcal{O}_{\rm NL}^N$.
This illustrates the existence of superactivation property of nonlocality preservability, which also teaches us that for a general formulation, the assumption {\em $\Lambda_{\rm S}\otimes\Lambda_{\rm S'}\in\mathcal{O}_R^N$ if $\Lambda_{\rm S},\Lambda_{\rm S'}\in\mathcal{O}_R^N$} cannot be imposed.

As a remark, we note that there do exist examples without activation property.
For instance, if we use Gibbs-preserving map as the free operation in the $R$-theory of athermality, then the only $R$-annihilating channel is the state preparation channel of the given thermal state [Eq.~\eqref{Eq:ThermalState}].
Because product local thermalization cannot preserve any correlation~\cite{Hsieh2019}, we learn that it is impossible to activate resource preservability in this case.

\subsection{Local Operations Plus Shared Randomness Channels}\label{App:LOSR}
In this section, we briefly explain why LOSR channels can be free operations of nonlocality.
It suffices to consider a bipartite system ${\rm AB}$.
Formally, an LOSR channel is defined to take the following form:
\begin{align}
\mE = \int (\mE_\lambda^{\rm A}\otimes\mE_\lambda^{\rm B})p_\lambda d\lambda,
\end{align}
where the integration is taken over the variable $\lambda$ and $\mE_\lambda^{\rm A},\mE_\lambda^{\rm B}$ are local channels.
In what follows we will write $\{E_{a|x}\}$ as a set of local {\em positive operator-valued measures} (POVMs)~\cite{QCI-book}; that is, for each input value $x$, $E_{a|x}$'s form an POVM: $E_{a|x}\ge0\;\forall\,a$ and $\sum_aE_{a|x} = \id_{\rm A}\;\forall\,x$.
We use the notation $\{E_{b|y}\}$ for the POVMs in the subsystem ${\rm B}$.

With the above setting, a quantum state $\rho_{\rm AB}$ is said to be {\em local} if for every local sets of POVMs $\{E_{a|x}\},\{E_{b|y}\}$ one can write~\cite{Bell, Bell-RMP}
\begin{align}\label{Eq:LHV}
{\rm tr}\left[(E_{a|x}\otimes E_{b|y})\rho_{\rm AB}\right] = \int_{\lambda\in\Lambda_{\rm LHV}}P(a|x,\lambda)P(b|y,\lambda)p_\lambda d\lambda
\end{align}
for some variable $\lambda$ in a set $\Lambda_{\rm LHV}$ and some probability distributions $P(a|x,\lambda),P(b|y,\lambda),p_\lambda$.
In other words, a state is local if all the possible combinations of local POVMs cannot distinguish it with a {\em local hidden-variable model}, as depicted by $\Lambda_{\rm LHV}$.
Any state that is not local is said to be {\em nonlocal}.

Now we explain that LOSR channel will map local states to local states.
To see this, we note that for a given LOSR channel $\mE$, we have
\begin{align}
{\rm tr}\left[(E_{a|x}\otimes E_{b|y})\mE(\rho_{\rm AB})\right] &= \int {\rm tr}\left[(E_{a|x}\otimes E_{b|y})(\mE_\lambda^{\rm A}\otimes\mE_\lambda^{\rm A})(\rho_{\rm AB})\right]p_\lambda d\lambda\nonumber\\
&=\int {\rm tr}\left\{\left[\mE_\lambda^{{\rm A},\dagger}(E_{a|x})\otimes \mE_\lambda^{{\rm B},\dagger}(E_{b|y})\right](\rho_{\rm AB})\right\}p_\lambda d\lambda,
\end{align}
where for ${\rm X = A,B}$, $\mE_\lambda^{{\rm X},\dagger}$'s are completely-positive unital map since $\mE_\lambda^{\rm X}$'s are completely-positive trace-preserving map.
This means $\mE_\lambda^{{\rm A},\dagger}(E_{a|x})$ and $\mE_\lambda^{{\rm B},\dagger}(E_{b|y})$ again form local sets of POVMs.
Since $\rho_{\rm AB}$ is local, the quantity ${\rm tr}\left\{\left[\mE_\lambda^{{\rm A},\dagger}(E_{a|x})\otimes \mE_\lambda^{{\rm B},\dagger}(E_{b|y})\right](\rho_{\rm AB})\right\}$ must take the form of Eq.~\eqref{Eq:LHV}.
This shows that LOSR channels map local states to local states, and hence form a suitable candidate of free operations for nonlocality.

\section{Example of Absolutely Resource Annihilating Channels}\label{App:EntExample}
Using the $R$-theory of entanglement, we will show that every channel that is entanglement-annihilating and entanglement-breaking will be an absolutely $R$-annihilating channel (and we also say it is absolutely entanglement-annihilating).
\begin{afact}
If a bipartite channel $\mathcal{E}$ is entanglement-annihilating and entanglement-breaking, then it is absolutely entanglement-annihilating.
\end{afact}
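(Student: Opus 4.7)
My plan is to combine the measure-and-prepare structure of entanglement-breaking channels with the entanglement-annihilating property of $\mathcal{E}$ to produce a rank-one Kraus decomposition whose output vectors are product across $A_1':B_1'$; once such a form is available, tensoring with an entanglement-annihilating $\Lambda$ preserves separability across the global bipartition by direct regrouping.

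First I would use the Horodecki--Shor--Ruskai characterization of entanglement-breaking maps to write
\begin{align*}
\mathcal{E}(\rho)=\sum_k\langle\psi_k|\rho|\psi_k\rangle\,|\phi_k\rangle\langle\phi_k|,\qquad\sum_k|\psi_k\rangle\langle\psi_k|=\id_{A_1B_1},
\end{align*}
with $|\phi_k\rangle\in\mathcal{H}_{A_1'B_1'}$ pure. The key structural claim I would establish is that some such decomposition has every $|\phi_k\rangle=|a_k\rangle^{A_1'}\otimes|b_k\rangle^{B_1'}$ a product vector. To justify this I would start from a minimal rank-one decomposition of the Choi state $J_\mathcal{E}=\sum_k|\phi_k\rangle\langle\phi_k|\otimes|\psi_k^*\rangle\langle\psi_k^*|$ whose reference vectors are linearly independent (absorbing any redundancy first), and for each index $k_0$ construct $X_{k_0}\ge 0$ on the reference, supported on the orthogonal complement of $\mathrm{span}\{|\psi_k^*\rangle:k\neq k_0\}$ and having nonzero overlap with $|\psi_{k_0}^*\rangle$. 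Feeding $\mathcal{E}$ the normalized input $\rho_{k_0}\propto X_{k_0}^T$ then yields $\mathcal{E}(\rho_{k_0})\propto|\phi_{k_0}\rangle\langle\phi_{k_0}|$, and EA of $\mathcal{E}$ forces this pure output to be separable, which for a pure state means $|\phi_{k_0}\rangle$ is a product.

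Second, with this product decomposition in hand, a direct computation gives
\begin{align*}
(\mathcal{E}\otimes\Lambda)(\sigma)=\sum_k|\phi_k\rangle\langle\phi_k|^{A_1'B_1'}\otimes\Lambda(\tilde\sigma_k^{A_2B_2}),
\end{align*}
where $\tilde\sigma_k=(\langle\psi_k|\otimes\id_{A_2B_2})\,\sigma\,(|\psi_k\rangle\otimes\id_{A_2B_2})$. Using EA of $\Lambda$, each $\Lambda(\tilde\sigma_k)$ admits a product-state decomposition $\sum_m r_{km}|c_{km}\rangle\langle c_{km}|^{A_2'}\otimes|d_{km}\rangle\langle d_{km}|^{B_2'}$; grouping the $A_1'$ factor with the $A_2'$ factor and likewise the $B_1'$ factor with the $B_2'$ factor displays $(\mathcal{E}\otimes\Lambda)(\sigma)$ as a convex combination of pure product states across the bipartition $A_1'A_2':B_1'B_2'$, hence separable. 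Since this holds for every input $\sigma$, we obtain $\mathcal{E}\otimes\Lambda\in\mathcal{O}_R^N$ for every $\Lambda\in\mathcal{O}_R^N$, as claimed.

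The principal difficulty is in the first step. The isolation argument requires the reference vectors $\{|\psi_k^*\rangle\}$ to be linearly independent, whereas a generic rank-one POVM may be overcomplete; I therefore expect the main effort to be showing that EB together with EA always allows a rank-one decomposition satisfying this linear-independence property, or equivalently that EA upgrades the biseparability of $J_\mathcal{E}$ across output vs.\ reference forced by EB to full tri-separability across $A_1':B_1':R$.
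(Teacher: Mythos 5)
Your second step is sound: once every output vector $|\phi_k\rangle$ in a rank-one Holevo--HSR decomposition is a product vector across $A_1':B_1'$, tensoring with an entanglement-annihilating $\Lambda$ and regrouping the local factors does give separability of $(\mathcal{E}\otimes\Lambda)(\sigma)$ across $A_1'A_2':B_1'B_2'$. The genuine gap is the first step, and it is not a deferrable technicality --- it is the entire content of the statement. Your isolation argument needs $|\psi_{k_0}^*\rangle\notin\mathrm{span}\{|\psi_k^*\rangle : k\neq k_0\}$ for every $k_0$, i.e.\ linear independence of all the reference vectors. But a rank-one resolution of the identity $\sum_{k=1}^n|\psi_k\rangle\langle\psi_k|=\id_d$ with linearly independent vectors forces $n=d$, and then $V=\sum_k|\psi_k\rangle\langle k|$ satisfies $VV^\dagger=\id_d$, hence is unitary, hence $\langle\psi_j|\psi_k\rangle=\delta_{jk}$: your argument only covers ``measure in an orthonormal basis and prepare'' channels. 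A generic entanglement-breaking channel requires up to $d^2$ rank-one POVM elements, each lying in the span of the others, so no input can isolate a single $|\phi_{k_0}\rangle$; and ``absorbing redundancy'' by coarse-graining destroys the rank-one/pure-output structure you rely on. What the entanglement-annihilating hypothesis actually controls is the achievable convex combinations $\sum_k\langle\psi_k|\rho|\psi_k\rangle\,|\phi_k\rangle\langle\phi_k|$ (equivalently, every steered block $\mathrm{tr}_{A_2'B_2'}[(\id\otimes M)(\mathcal{E}\otimes\Lambda)(\sigma)]=\mathcal{E}(X_M)$ for some $X_M\ge0$), not the individual decomposition elements, and biseparability of the Choi operator across each bipartition does not in general upgrade to the tri-separability across $A_1':B_1':R$ that you need. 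You correctly identify this as the principal difficulty, but as written the proof establishes the claim only for a very special subclass of entanglement-breaking channels.

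For comparison, the paper's proof never exhibits a product-form decomposition. It works directly with the output state: entanglement breaking of $\mathcal{E}_{\rm A_1B_1}$ makes $\mathcal{E}_{\rm A_1B_1}\otimes\mathcal{I}_{\rm A_2B_2}$, and hence $\mathcal{E}_{\rm A_1B_1}\otimes\Lambda_{\rm A_2B_2}$, entanglement-annihilating across the cut ${\rm A_1B_1:A_2B_2}$, and then the entanglement-annihilating property of each tensor factor is invoked to exclude entanglement inside each block, so that nothing can survive across ${\rm A:B}$. (The paper's final inference is itself stated briefly; making it fully rigorous runs into essentially the same question of which conditional states on ${\rm A_1B_1}$ are guaranteed to be outputs of $\mathcal{E}$, which is where the steering identity above, rather than a decomposition of the Choi state, is the right tool.) If you want to salvage your route, the statement you must prove is precisely that an entanglement-breaking, entanglement-annihilating channel admits \emph{some} Holevo decomposition with separable preparation states; proving that, not the tensoring step, is the theorem.
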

\begin{proof}
We rewrite this channel as $\mathcal{E}_{\rm A_1B_1}$, where the subscript means that it is in the bipartite system ${\rm A_1B_1}$.
Then it suffices to show that there will be no entanglement in the ${\rm AB}$ bipartition after the product channel $\mathcal{E}_{\rm A_1B_1}\otimes\Lambda_{\rm A_2B_2}$, where $\Lambda_{\rm A_2B_2}$ is an entanglement-annihilating channel in the bipartite system ${\rm A_2B_2}$ and it annihilates entanglement in the ${\rm AB}$ bipartition.

Because $\mathcal{E}_{\rm A_1B_1}$ is entanglement-breaking, this means $\mathcal{E}_{\rm A_1B_1}\otimes\mathcal{I}_{\rm A_2B_2}$ is entanglement-annihilating in the $12$ bipartition.
In other words, there will be no entanglement in the $12$ bipartition after $\mathcal{E}_{\rm A_1B_1}\otimes\Lambda_{\rm A_2B_2}$.
Hence, the remaining possibility for the preserved entanglement are in the bipartite systems ${\rm A_1B_1}$ and ${\rm A_2B_2}$.
But since both $\mathcal{E}_{\rm A_1B_1}$ and $\Lambda_{\rm A_2B_2}$ are entanglement-annihilating in the ${\rm AB}$ bipartition, we conclude that no entanglement exists in the bipartite systems ${\rm A_1B_1}$ and ${\rm A_2B_2}$.
This shows that the output states cannot be entangled in the ${\rm AB}$ bipartition.
\end{proof}

\section{Proof of Theorem~\ref{Result:Maintain}}\label{App:Proof_Result:Maintain}
\begin{proof}
To show property~\ref{P:Positivity}, note that for a given $\Lambda_{\rm S}\in\mathcal{O}_R^N$ we have $\Lambda_{\rm S}\otimes\widetilde{\Lambda}_{\rm A}\in\mathcal{O}_R^N$ for all $\widetilde{\Lambda}_{\rm A}\in\widetilde{\mathcal{O}}_R^N$.
This means $(f\circ Q_R)[(\Lambda_{\rm S}\otimes\widetilde{\Lambda}_{\rm A})(\rho_{\rm SA})] = 0$ for all $\widetilde{\Lambda}_{\rm A}\in\widetilde{\mathcal{O}}_R^N$ and for all $\rho_{\rm SA}$.
Hence, property~\ref{P:Positivity} is proved.

To show property~\ref{P:Monotonicity}, we recall from Definition~\ref{Def:Free-Super-Channel} that for a given free super-channel $F_\mathcal{E}$ acting on free operations $\mathcal{E}\in\mathcal{O}_R$, there exist an ancillary system ${\rm B}$, two free operations $\Lambda_+,\Lambda_-\in\mathcal{O}_R$, and an absolutely $R$-annihilating channel $\widetilde{\Lambda}_{\rm B}\in\widetilde{\mathcal{O}}_R^N$ such that $F_\mathcal{E} = \Lambda_+\circ(\mathcal{E}\otimes\widetilde{\Lambda}_{\rm B})\circ\Lambda_-$.
In what follows, because the input/output dimensions of $\Lambda_-$ do not need to be the same, we write ${\rm S'}$ as the input space and ${\rm SB}$ as the output space of $\Lambda_-$; namely, we have $\Lambda_-:{\rm S'}\to{\rm SB}$.
Then we have [note that the maximization is taken over $\rho_{\rm SA}$ satisfying $Q_R(\rho_{\rm SA})>0$ according to our definition; see explanations below Eq.~\eqref{Eq:supA}]
\begin{align}
P_{Q_R}^{(f,g)}(F_\mathcal{E})&= \supA \frac{(f\circ Q_R)\left\{[(\Lambda_+\otimes\mathcal{I}_{\rm A})\circ(\mathcal{E}\otimes\widetilde{\Lambda}_{\rm B}\otimes\widetilde{\Lambda}_{\rm A})\circ(\Lambda_-\otimes\mathcal{I}_{\rm A})](\rho_{\rm S'A})\right\}}{(g\circ Q_R)(\rho_{\rm S'A})}\nonumber\\
&\le \supA \frac{(f\circ Q_R)\left\{[(\mathcal{E}\otimes\widetilde{\Lambda}_{\rm B}\otimes\widetilde{\Lambda}_{\rm A})\circ(\Lambda_-\otimes\mathcal{I}_{\rm A})](\rho_{\rm S'A})\right\}}{(g\circ Q_R)(\rho_{\rm S'A})}\nonumber\\
&\le \supA \frac{(f\circ Q_R)\left\{[(\mathcal{E}\otimes\widetilde{\Lambda}_{\rm B}\otimes\widetilde{\Lambda}_{\rm A})\circ(\Lambda_-\otimes\mathcal{I}_{\rm A})](\rho_{\rm S'A})\right\}}{(g\circ Q_R)[(\Lambda_-\otimes\mathcal{I}_{\rm A})(\rho_{\rm S'A})]}\nonumber\\
&\le \supA \frac{(f\circ Q_R)\left[(\mathcal{E}\otimes\widetilde{\Lambda}_{\rm B}\otimes\widetilde{\Lambda}_{\rm A})(\rho_{\rm SBA})\right]}{(g\circ Q_R)(\rho_{\rm SBA})}\nonumber\\
&\le \supA \frac{(f\circ Q_R)\left[(\mathcal{E}\otimes\widetilde{\Lambda}_{\rm A})(\rho_{\rm SA})\right]}{(g\circ Q_R)(\rho_{\rm SA})}\nonumber\\
&=P_{Q_R}^{(f,g)}(\mathcal{E}).
\end{align}
The second line is because $Q_R$ is non-increasing under free operation $(\Lambda_+\otimes\mathcal{I}_{\rm A})$, which is due to the properties~\ref{Def:Proper:FreeIdentity}, ~\ref{Def:Proper:Tensor},~\ref{P:Monotonicity}, and the fact that $f$ is strictly increasing.
The same reasons imply the third line (while with some subtleties explained below).
The fourth line is because maximizing over all states of the form $(\Lambda_-\otimes\mathcal{I}_{\rm A})(\rho_{\rm SA})$ is sub-optimal than the range of all states on the system ${\rm SBA}$.
The fifth line is because $\widetilde{\Lambda}_{\rm B}\otimes\widetilde{\Lambda}_{\rm A}$ gives a range that is sub-optimal than all the possible $\widetilde{\Lambda}_{\rm A}$ when one maximizes over all the ancillary systems ${\rm A}$ [recall from Eq.~\eqref{Fact:wtL} that the set of absolutely $R$-annihilating channels for an $R$-theory satisfying properties~\ref{Def:Proper:Nonempty},~\ref{Def:Proper:FreeIdentity},~\ref{Def:ProperQR-Tensor}, and~\ref{Def:Proper:Tensor} is closed under tensor product].

Here we note that the ranges of optimization in the second line and the third line are different.
In the second line, the optimization is taken over $\rho_{\rm S'A}$ with $Q_R(\rho_{\rm S'A})>0$, which implies two different cases.
The first case is when the optimization over this range is zero [$\overline{\sup}_{\rm A}(...)=0$ in the second line].
Then in this case the desired inequality holds.
This means we can assume the second case without loss of generality; that is, we can assume the optimization in the second line over $Q_R(\rho_{\rm S'A})>0$ gives nonzero value.
Hence, the range for the second line can be rewritten as $\rho_{\rm S'A}$ with $Q_R(\rho_{\rm S'A})>0$ and $Q_R[(\Lambda_-\otimes\mathcal{I}_{\rm A})(\rho_{\rm S'A})]>0$, since the latter inequality is necessary for a nonzero numerator (note that actually the latter inequality implies the former one, while we still write them both explicitly for understanding).
Then one can proceed to the third line with this condition.
This proves property~\ref{P:Monotonicity}.

To prove the property given by Eq.~\eqref{P:Tensor}, we first note the following: (the maximization is again taken over states with non-zero $Q_R$ values)
\begin{align}\label{Eq:assumptionEQ}
P_{Q_R}^{(f,g)}(\mE_{\rm S}\otimes\mE_{\rm S'}) &= \supA\frac{(f\circ Q_R)\left[(\mE_{\rm S}\otimes\mE_{\rm S'}\otimes\wtL_{\rm A})(\rho_{\rm SS'A})\right]}{(g\circ Q_R)(\rho_{\rm SS'A})}\nonumber\\
&\ge\supA\frac{(f\circ Q_R)\left[(\mE_{\rm S}\otimes\mE_{\rm S'}\otimes\wtL_{\rm A})(\rho_{\rm SA}\otimes\wt{\eta}_{\rm S'})\right]}{(g\circ Q_R)(\rho_{\rm SA}\otimes\wt{\eta}_{\rm S'})}\nonumber\\
&\ge\supA\frac{(f\circ Q_R)\left[(\mE_{\rm S}\otimes\wtL_{\rm A})(\rho_{\rm SA})\right]}{(g\circ Q_R)(\rho_{\rm SA}\otimes\wt{\eta}_{\rm S'})}\nonumber\\
&\ge\supA\frac{(f\circ Q_R)\left[(\mE_{\rm S}\otimes\wtL_{\rm A})(\rho_{\rm SA})\right]}{(g\circ Q_R)(\rho_{\rm SA})}\nonumber\\
&=P_{Q_R}^{(f,g)}(\mE_{\rm S}).
\end{align}
Note that $\overline{\sup}_{\rm A}$ in the first line is maximizing over the system ${\rm SS'A}$.
The second line is because fixing an absolutely free state $\wt{\eta}_{\rm S'}\in\wt{\mathcal{F}}_R$ [here we use the assumption $\wt{\mathcal{F}}_R\neq\emptyset$ in property~\ref{Def:Proper:Nonempty}] will make the maximization sub-optimal than the original one, and we note that since this line $\overline{\sup}_{\rm A}$ is maximizing over ${\rm SA}$ [with $Q_R(\rho_{\rm SA})>0$].
The third line is because $f$ is strictly increasing and $Q_R$ is a resource monotone [property~\ref{Def:Proper:FreeIdentity}].
The fourth line is because $g$ is non-decreasing and $Q_R$ is a resource monotone [property~\ref{Def:ProperQR-Tensor}].
This proves the inequality in Eq.~\eqref{P:Tensor} for general $\mE_{\rm S}$ and $\mE_{\rm S'}$.

In the case that $\mE_{\rm S'} = \wtL_{\rm S'}\in\wt{\mathcal{O}}_R^N$, we have
\begin{align}
P_{Q_R}^{(f,g)}(\mE_{\rm S}\otimes\wtL_{\rm S'}) &= \supA\frac{(f\circ Q_R)\left[(\mE_{\rm S}\otimes\wtL_{\rm S'}\otimes\wtL_{\rm A})(\rho_{\rm SS'A})\right]}{(g\circ Q_R)(\rho_{\rm SS'A})}\nonumber\\
&\le\supA\frac{(f\circ Q_R)\left[(\mE_{\rm S}\otimes\wtL_{\rm A})(\rho_{\rm SA})\right]}{(g\circ Q_R)(\rho_{\rm SA})}\nonumber\\
&=P_{Q_R}^{(f,g)}(\mE_{\rm S}),
\end{align}
where the second line is because the range $\wtL_{\rm S'}\otimes\wtL_{\rm A}$ with the fixed $\wtL_{\rm S'}$ is sub-optimal than all the possible absolutely $R$-annihilating channel $\wtL_{\rm A}$ when one maximizes over all the ancillary systems ${\rm A}$ [recall again from Eq.~\eqref{Fact:wtL} that the set of absolutely $R$-annihilating channels will be closed under tensor product in the current case].
This shows the equality in Eq.~\eqref{P:Tensor}.

Finally, when $f\circ Q_R$ is convex, $P_{Q_R}^{(f,g)}$ is by definition convex.
This proves property~\ref{P:Convexity}.
To address property~\ref{P:Faithful}, for a given $\mE_{\rm S}\in\mathcal{O}_R$ we note that $P_{Q_R}^{(f,g)}(\mathcal{E}_{\rm S})=0$ implies $Q_R\left[(\mathcal{E}_{\rm S}\otimes\wtL_{\rm A})(\rho_{\rm SA})\right] = 0$ for all $\rho_{\rm SA}$, all $\wtL_{\rm A}\in\wt{\mathcal{O}}_R^N$, and all ancillary systems ${\rm A}$.
By considering the ancillary system as the trivial one (i.e.\,with zero dimension), we have $Q_R\left[\mathcal{E}_{\rm S}(\rho_{\rm S})\right] = 0$ for all $\rho_{\rm S}$.
when $Q_R$ is faithful, this means $\mathcal{E}_{\rm S}(\rho_{\rm S})\in\mathcal{F}_R$ for all $\rho_{\rm S}$, thereby implying $\mathcal{E}_{\rm S}\in\mathcal{O}_R^N$.
This shows property~\ref{P:Faithful} and also completes the whole proof.
\end{proof}
We remark that the assumption $\wt{\mathcal{F}}_R\neq\emptyset$ is only used in the proof of Eq.~\eqref{Eq:assumptionEQ}.
In other words, this assumption can be dropped if $g$ maps every input to a positive constant.
Write $g_c(\cdot) = c$, this means the following corollary for an $R$-theory satisfying the rest of properties~\ref{Def:Proper:Nonempty},~\ref{Def:Proper:FreeIdentity},~\ref{Def:ProperQR-Tensor}, and~\ref{Def:Proper:Tensor}:
\begin{acorollary}\label{Coro:g-constant}
Given an $R$-theory and a state resource monotone $Q_R$.
$f$ is a finite-valued strictly increasing function with $f(0) = 0$ and $c>0$ is a positive constant.
Then $P_{Q_R}^{(f,g_c)}$is an \mbox{$R$-preservability} monotone. 
Moreover, It is faithful if $Q_R$ is faithful, and it is convex if $f\circ Q_R$ is convex.
\end{acorollary}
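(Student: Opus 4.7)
The plan is to mirror the proof of Theorem~\ref{Result:Maintain} essentially verbatim, observing that when $g = g_c$ is the positive constant $c$, the denominator in Eq.~\eqref{Eq:GeneralMonotone} is identically $c$. In particular, the restriction ``maximization over $\rho$ with $Q_R(\rho) > 0$'' disappears (there is no division-by-zero issue), and every inequality in the proof of Theorem~\ref{Result:Maintain} that involves comparing denominators of the form $g(Q_R(\cdot))$ collapses trivially. The key point, as already noted in the remark preceding the corollary, is that the assumption $\wt{\mathcal{F}}_R \neq \emptyset$ was used in Theorem~\ref{Result:Maintain} only in Eq.~\eqref{Eq:assumptionEQ} to provide an absolutely free state $\wt\eta_{\rm S'}$ for padding; with a constant denominator this padding becomes unnecessary.

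First I would verify property~\ref{P:Positivity} and property~\ref{P:Monotonicity} by repeating the chain of inequalities from the proof of Theorem~\ref{Result:Maintain} with every $(g \circ Q_R)(\cdot)$ replaced by $c$; each step that was justified by monotonicity of $Q_R$ under free operations in the numerator still works, and the steps that previously required comparing denominators become equalities. The equality in Eq.~\eqref{P:Tensor} for $\mE_{\rm S'} \in \wt{\mathcal{O}}_R^N$ follows exactly as in Theorem~\ref{Result:Maintain}, since that argument only used closure of $\wt{\mathcal{O}}_R^N$ under tensor product, not $\wt{\mathcal{F}}_R \neq \emptyset$.

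The one place that needs a separate argument is the tensor inequality $P_{Q_R}^{(f,g_c)}(\mE_{\rm S} \otimes \mE_{\rm S'}) \ge P_{Q_R}^{(f,g_c)}(\mE_{\rm S})$. Here I would fix any state $\sigma_{\rm S'}$ on ${\rm S'}$ and restrict the supremum on the left-hand side to product inputs $\rho_{\rm SA} \otimes \sigma_{\rm S'}$, which is a sub-optimal range. Under the product channel the output factorizes as $(\mE_{\rm S} \otimes \wtL_{\rm A})(\rho_{\rm SA}) \otimes \mE_{\rm S'}(\sigma_{\rm S'})$. Because partial trace over ${\rm S'}$ is a free operation by~\ref{Def:Proper:FreeIdentity}, monotonicity of $Q_R$ yields $Q_R\bigl[(\mE_{\rm S} \otimes \wtL_{\rm A})(\rho_{\rm SA}) \otimes \mE_{\rm S'}(\sigma_{\rm S'})\bigr] \ge Q_R\bigl[(\mE_{\rm S} \otimes \wtL_{\rm A})(\rho_{\rm SA})\bigr]$, and since $f$ is strictly increasing with $f(0)=0$ the same inequality passes through $f$. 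Dividing by the common constant $c$ and taking the supremum gives the claim; note that, in contrast to the proof of Theorem~\ref{Result:Maintain}, we never needed $\sigma_{\rm S'}$ to be absolutely free, only to exist as a state.

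Finally, convexity when $f \circ Q_R$ is convex is immediate because the numerator is convex in the input channel (supremum of convex functions) and the denominator is constant. Faithfulness when $Q_R$ is faithful follows as in Theorem~\ref{Result:Maintain}: specializing to the trivial ancilla, $P_{Q_R}^{(f,g_c)}(\mE) = 0$ forces $(f \circ Q_R)[\mE(\rho)] = 0$ for every $\rho$, hence $Q_R[\mE(\rho)] = 0$ since $f$ is strictly increasing with $f(0)=0$, and faithfulness of $Q_R$ gives $\mE(\rho) \in \mathcal{F}_R$ for all $\rho$, i.e.\ $\mE \in \mathcal{O}_R^N$. There is no real obstacle in this proof: the only subtlety worth flagging is the conceptual one of locating where the $\wt{\mathcal{F}}_R \neq \emptyset$ hypothesis entered Theorem~\ref{Result:Maintain} and verifying that it is harmlessly replaced by the freedom to choose any auxiliary state and invoke partial trace.
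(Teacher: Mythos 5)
Your proposal is correct and follows essentially the same route as the paper, which obtains the corollary by rerunning the proof of Theorem~\ref{Result:Maintain} and noting that $\wt{\mathcal{F}}_R\neq\emptyset$ entered only through the padding state in Eq.~\eqref{Eq:assumptionEQ}, whose sole purpose was to control the denominator. Your explicit replacement of the absolutely free state by an arbitrary state $\sigma_{\rm S'}$, combined with monotonicity of $Q_R$ under partial trace [property~\ref{Def:Proper:FreeIdentity}] for the numerator and the constancy of the denominator, is exactly the detail the paper leaves implicit, and the remaining properties carry over verbatim as you say.
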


\newpage
\section{Proof of Theorem~\ref{Result:DistanceMonotone}}\label{App:Proof_Result:DistanceMonotone}
\begin{proof}
Property~\ref{P:Positivity} holds automatically according to the definition.
To prove property~\ref{P:Monotonicity}, for a given free super-channel $F_{\mE_{\rm S}} = \Lambda_+\circ(\mE_{\rm S}\otimes\wtL_{\rm B})\circ\Lambda_-$ with $\Lambda_+,\Lambda_-\in\mathcal{O}_R$ and $\wtL_{\rm B}\in\wt{\mathcal{O}}_R^N$, the direct computation shows (we again adapt the notation $\Lambda_-:{\rm S'\to SB}$)
\begin{align}
&P_D(F_{\mE_{\rm S}})\nonumber\\
&= \inf_{\Lambda_{\rm S'}\in\mathcal{O}_R^N}\supA D\left[(F_{\mE_{\rm S}}\otimes\wtL_{\rm A})(\rho_{\rm S'A}),(\Lambda_{\rm S'}\otimes\wtL_{\rm A})(\rho_{\rm S'A})\right]\nonumber\\
&\le \inf_{\Lambda_{\rm SB}\in\mathcal{O}_R^N}\supA D\left\{(F_{\mE_{\rm S}}\otimes\wtL_{\rm A})(\rho_{\rm S'A}),\left[(\Lambda_+\circ\Lambda_{\rm SB}\circ\Lambda_-)\otimes\wtL_{\rm A}\right](\rho_{\rm S'A})\right\}\nonumber\\
&\le \inf_{\Lambda_{\rm SB}\in\mathcal{O}_R^N}\supA D\left\{\left[(\mE_{\rm S}\otimes\wtL_{\rm B}\otimes\wtL_{\rm A})\circ(\Lambda_-\otimes\mathcal{I}_{\rm A})\right](\rho_{\rm S'A}),\left[(\Lambda_{\rm SB}\otimes\wtL_{\rm A})\circ(\Lambda_-\otimes\mathcal{I}_{\rm A})\right](\rho_{\rm S'A})\right\}\nonumber\\
&\le \inf_{\Lambda_{\rm SB}\in\mathcal{O}_R^N}\supA D\left[(\mE_{\rm S}\otimes\wtL_{\rm B}\otimes\wtL_{\rm A})(\rho_{\rm SBA}),(\Lambda_{\rm SB}\otimes\wtL_{\rm A})(\rho_{\rm SBA})\right]\nonumber\\
&\le \inf_{\Lambda_{\rm S}\in\mathcal{O}_R^N}\supA D\left[(\mE_{\rm S}\otimes\wtL_{\rm B}\otimes\wtL_{\rm A})(\rho_{\rm SBA}),(\Lambda_{\rm S}\otimes\wtL_{\rm B}\otimes\wtL_{\rm A})(\rho_{\rm SBA})\right]\nonumber\\
&\le \inf_{\Lambda_{\rm S}\in\mathcal{O}_R^N}\supA D\left[(\mE_{\rm S}\otimes\wtL_{\rm A})(\rho_{\rm SA}),(\Lambda_{\rm S}\otimes\wtL_{\rm A})(\rho_{\rm SA})\right]\nonumber\\
&=P_D(\mE_{\rm S}).
\end{align}
The second line is because $\Lambda_+\circ\Lambda_{\rm SB}\circ\Lambda_-\in\mathcal{O}_R^N$ [which is true because of the assumptions that we made for $R$-theories in this work] forms a sub-optimal range compared with $\Lambda_{\rm S'}\in\mathcal{O}_R^N$.
The third line is because of the properties~\ref{Def:Proper:FreeIdentity} and~\ref{Def:Proper:Tensor}, plus the fact that $D$ satisfies Eq.~\eqref{Def:ProperDistanceMeasure}.
The fourth line is because $(\Lambda_-\otimes\mathcal{I}_{\rm A})(\rho_{\rm S'A})$ forms a sub-optimal range for the maximization $\overline{\sup}_{\rm A}$.
The fifth line is because $\Lambda_{\rm S}\otimes\wtL_{\rm B}\in\mathcal{O}_R^N$ (this is true due to the definition of the absolutely $R$-annihilating channels) with the fixed map $\wtL_{\rm B}\in\wt{\mathcal{O}}_R^N$ and the variable $\Lambda_{\rm S}$ forms a sub-optimal range for the minimization $\inf_{\Lambda_{\rm SB}\in\mathcal{O}_R^N}$.
The sixth line is because $\wtL_{\rm B}\otimes\wtL_{\rm A}$ forms a sub-optimal range for the maximization $\overline{\sup}_{\rm A}$ [recall Eq.~\eqref{Fact:wtL}].
This proves property~\ref{P:Monotonicity}.

To prove Eq.~\eqref{P:Tensor}, we first compute the following
\begin{align}
P_D(\mE_{\rm S}\otimes\mE_{\rm S'}) &= \inf_{\Lambda_{\rm SS'}\in\mathcal{O}_R^N}\supA D\left[(\mE_{\rm S}\otimes\mE_{\rm S'}\otimes\wtL_{\rm A})(\rho_{\rm SS'A}),(\Lambda_{\rm SS'}\otimes\wtL_{\rm A})(\rho_{\rm SS'A})\right]\nonumber\\
&\ge\inf_{\Lambda_{\rm SS'}\in\mathcal{O}_R^N}\supA D\left[(\mE_{\rm S}\otimes\mE_{\rm S'}\otimes\wtL_{\rm A})(\rho_{\rm SA}\otimes\wt{\eta}_{\rm S'}),(\Lambda_{\rm SS'}\otimes\wtL_{\rm A})(\rho_{\rm SA}\otimes\wt{\eta}_{\rm S'})\right]\nonumber\\
&\ge\inf_{\Lambda_{\rm SS'}\in\mathcal{O}_R^N}\supA D\left\{(\mE_{\rm S}\otimes\wtL_{\rm A})(\rho_{\rm SA}),{\rm tr}_{\rm S'}\left[(\Lambda_{\rm SS'}\otimes\wtL_{\rm A})(\rho_{\rm SA}\otimes\wt{\eta}_{\rm S'})\right]\right\}\nonumber\\
&\ge\inf_{\Lambda_{\rm S}\in\mathcal{O}_R^N}\supA D\left[(\mE_{\rm S}\otimes\wtL_{\rm A})(\rho_{\rm SA}),(\Lambda_{\rm S}\otimes\wtL_{\rm A})(\rho_{\rm SA})\right]\nonumber\\
&=P_D(\mE_{\rm S}).
\end{align}
In the second line we pick a fixed absolutely free state $\wt{\eta}_{\rm S'}\in\wt{\mathcal{F}}_R$, which is possible due to the property~\ref{Def:Proper:Nonempty}.
Then the second line follows from the fact that $\rho_{\rm SA}\otimes\wt{\eta}_{\rm S'}$ forms a sub-optimal range for the maximization $\overline{\sup}_{\rm A}$.
The third line is because of Eq.~\eqref{Def:ProperDistanceMeasure}.
The fourth line is because the mapping ${\rm tr}_{\rm S'}\left\{\Lambda_{\rm SS'}[(\cdot)\otimes\wt{\eta}_{\rm S'}]\right\}$ will be an $R$-annihilating channel [properties~\ref{Def:Proper:FreeIdentity},~\ref{Def:ProperQR-Tensor}, and~\ref{Def:Proper:Tensor}].
This consequently implies a sup-optimal range for the minimization compared with $\inf_{\Lambda_{\rm S}\in\mathcal{O}_R^N}$.
Then the inequality in Eq.~\eqref{P:Tensor} is proved.

To show the equality, we compute the following for a given $\wtL_{\rm S'}\in\wt{\mathcal{O}}_R^N$:
\begin{align}
P_D(\mE_{\rm S}\otimes\wtL_{\rm S'}) &= \inf_{\Lambda_{\rm SS'}\in\mathcal{O}_R^N}\supA D\left[(\mE_{\rm S}\otimes\wtL_{\rm S'}\otimes\wtL_{\rm A})(\rho_{\rm SS'A}),(\Lambda_{\rm SS'}\otimes\wtL_{\rm A})(\rho_{\rm SS'A})\right]\nonumber\\
&\le\inf_{\Lambda_{\rm S}\in\mathcal{O}_R^N}\supA D\left[(\mE_{\rm S}\otimes\wtL_{\rm S'}\otimes\wtL_{\rm A})(\rho_{\rm SS'A}),(\Lambda_{\rm S}\otimes\wtL_{\rm S'}\otimes\wtL_{\rm A})(\rho_{\rm SS'A})\right]\nonumber\\
&\le\inf_{\Lambda_{\rm S}\in\mathcal{O}_R^N}\supA D\left[(\mE_{\rm S}\otimes\wtL_{\rm A})(\rho_{\rm SA}),(\Lambda_{\rm S}\otimes\wtL_{\rm A})(\rho_{\rm SA})\right]\nonumber\\
&=P_D(\mE_{\rm S}).
\end{align}
The second line is because $\Lambda_{\rm S}\otimes\wtL_{\rm S'}$ with the fixed $\wtL_{\rm S'}$ forms a sub-optimal range for the minimization compared with $\inf_{\Lambda_{\rm SS'}\in\mathcal{O}_R^N}$.
The third line is because $\wtL_{\rm S'}\otimes\wtL_{\rm A}$ forms a sub-optimal range for the maximization of $\overline{\sup}_{\rm A}$ [Eq.~\eqref{Fact:wtL}].
This proves the equality and Eq.~\eqref{P:Tensor}.

Finally, suppose for a given channel $\mathcal{E}$ we have $P_D(\mathcal{E}) = 0$.
By definition this implies
\begin{align}
\inf_{\Lambda_{\rm S}\in\mathcal{O}_R^N}\sup_\rho D[\mathcal{E}(\rho),\Lambda_{\rm S}(\rho)] = 0
\end{align}
since we are allowed to consider the zero-dimensional ancillary system.
Hence, there exists a sequence $\{\Lambda_k\}_{k=1}^\infty\subseteq\mathcal{O}_R^N$ such that $\lim_{k\to\infty}\sup_\rho D[\mathcal{E}(\rho),\Lambda_k(\rho)]=0$.
If $\mathcal{O}_R^N$ is closed under $D$, this implies $\mathcal{E}\in\mathcal{O}_R^N$.
This proves property~\ref{P:Faithful}, and the proof for $P_D$ is completed.

The case for $\bar{P}_D$ is almost the same: One simply needs to replace $\wtL_{\rm A}$ and $\overline{\sup}_{\rm A}$ by $\mathcal{I}_{\rm A}$ and $\sup_{{\rm A};\rho_{\rm SA}}$, respectively.
Also we remark that the proof of property~\ref{P:Monotonicity} for $\bar{P}_D$ is a direct application of Theorem 1 in Ref.~\cite{LiuYuan2019}.
This also means $\bar{P}_D$ can be a monotone if we consider the largest set of possible free super-channels, whenever this set is well-defined (see Ref.~\cite{App:Well-defined} for the explanation).
\end{proof}

\section{Proof of Theorem~\ref{Coro:OperationalMeaning}}\label{App:Proof_Coro:OperationalMeaning}
To sketch the proof, we note that Theorem 10 in Ref.~\cite{LiuWinter2019} is true even without assumptions 3 in their paper, which is crucial for $R$-preservability theories since the identity channel can never be a free channel.
Using all the listed assumptions in Theorem~\ref{Coro:OperationalMeaning}, one can prove the upper bound by the same strategy in Ref.~\cite{LiuWinter2019}.
Also, the small difference between Definition~\ref{Def:PreservabilityDestructionCost} in this work and Definition 9 in Ref.~\cite{LiuWinter2019} will not change the proof of the lower bound.

For the completeness of this work, we still state the detailed proof in this section.
Before the proof, we recall the Generalized Convex-Split Lemma for completely-positive maps~\cite{LiuWinter2019}:

\begin{alemma}\label{Lemma:CSL}
{\em (Generalized Convex-Split Lemma)~\cite{LiuWinter2019}}
Let $\alpha,\beta$ be completely-positive maps with $\norm{\alpha}_\diamond = \norm{\beta}_\diamond = 1$.
Suppose there exists a completely-positive map $\alpha'$ with $\norm{\alpha'}_\diamond\le1$ and $p\in(0,1]$ such that $\beta = p\alpha + (1-p)\alpha'$.
Then the validity of the inequality $\log_2{n}\ge\log_2{\frac{1}{p}} + 2\log_2{\frac{1}{\delta}}$ implies the following estimate
\begin{align}
\norm{\sum_{i=1}^n\frac{1}{n}\beta^{\otimes (i-1)}\otimes\alpha\otimes\beta^{\otimes (n-i)} - \beta^{\otimes n}}_\diamond\le\delta.
\end{align}
\end{alemma}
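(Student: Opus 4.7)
The plan is to lift a state-level convex-split estimate to the channel level. Writing $\tau\defeq\frac{1}{n}\sum_{i=1}^{n}\beta^{\otimes(i-1)}\otimes\alpha\otimes\beta^{\otimes(n-i)}$, the diamond norm $\norm{\tau-\beta^{\otimes n}}_\diamond$ is attained on a pure input $\ket{\phi}$ living on the composite of the $n$ input copies plus a reference $R$, so it suffices to bound $\norm{(\tau\otimes\mathcal{I}_R)(\proj{\phi}) - (\beta^{\otimes n}\otimes\mathcal{I}_R)(\proj{\phi})}_1$ uniformly over $\ket{\phi}$, and then take the supremum to recover the diamond norm.

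The core step is a classical coupling of the two processes. Using $\beta = p\alpha + (1-p)\alpha'$, each copy of $\beta$ can be viewed as a probabilistic selection between $\alpha$ (weight $p$) and $\alpha'$ (weight $1-p$); I introduce a Bernoulli flag $X_i\in\{\alpha,\alpha'\}$ at each position. Under this dilation, $\beta^{\otimes n}$ corresponds to $X_1,\ldots,X_n$ drawn i.i.d. with $\Pr(X_i=\alpha)=p$, while $\tau$ corresponds to first picking $i\in\{1,\ldots,n\}$ uniformly, forcing $X_i$ to the value $\alpha$, and then sampling the other flags i.i.d. Bernoulli. A direct computation shows that the density ratio of the $\tau$-law with respect to the $\beta^{\otimes n}$-law, evaluated on a realization with $\alpha$-support $S\subseteq\{1,\ldots,n\}$, equals $|S|/(np)$, so the second-order R\'enyi divergence satisfies
\begin{align*}
D_2\!\left(\tau(\omega)\,\|\,\beta^{\otimes n}(\omega)\right)\le\log_2\!\left(1+\frac{1-p}{np}\right)
\end{align*}
for every input $\omega$. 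A quantum Pinsker-type inequality then converts this into a trace-distance bound of order $1/\sqrt{np}$, which falls below $\delta$ precisely when $\log_2 n\ge\log_2(1/p)+2\log_2(1/\delta)$.

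The main obstacle is that the optimal $\ket{\phi}$ is generally entangled across the $n$ tensor factors together with the reference, and that $\alpha,\beta,\alpha'$ are only completely positive rather than trace-preserving, so neither a product-input reduction nor a naive information-theoretic chain rule is available. I would handle both issues by working throughout at the level of Choi--Jamio\l{}kowski operators, where the $n$-partite tensor structure is inherited automatically and the classical flag decomposition yields the pointwise operator inequality $J(\tau)\le\left(1+\frac{1-p}{np}\right)J(\beta^{\otimes n})$. The $D_2$ bound then reduces to a clean algebraic estimate involving the Choi operator of $\beta$, and standard continuity arguments (from pure to mixed inputs and from R\'enyi divergence to trace distance) close the proof.
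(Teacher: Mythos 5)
First, a point of comparison: the paper does not prove this lemma at all --- it is quoted verbatim from Ref.~\cite{LiuWinter2019} and used as a black box --- so there is no in-paper argument to measure your proposal against; it has to stand on its own. Your skeleton (classical flag decomposition of $\beta=p\alpha+(1-p)\alpha'$, a collision-divergence bound, then a Pinsker-type conversion) is indeed the standard convex-split strategy, and your classical bookkeeping is correct: the likelihood ratio on a flag configuration with $\alpha$-support $S$ is $|S|/(np)$ and its second moment under the i.i.d.\ Bernoulli law is $1+\frac{1-p}{np}$. The genuine gap is the step by which you transfer this to the quantum objects. The claimed ``pointwise operator inequality'' $J(\tau)\le\bigl(1+\frac{1-p}{np}\bigr)J(\beta^{\otimes n})$ is false in general: it would amount to a $D_{\max}$ bound, whereas the factor $1+\frac{1-p}{np}$ is intrinsically a second-moment (averaged) quantity. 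Already for $n=2$ your inequality requires $J(\alpha')\otimes J(\beta)+J(\beta)\otimes J(\alpha')\ge J(\beta)\otimes J(\beta)$, which fails whenever $J(\alpha')$ lacks full support on the support of $J(\beta)$ (evaluate on $v\otimes v$ with $J(\alpha')v=0$ and $J(\beta)v\neq0$). The only pointwise domination actually available is $J(\tau)\le\frac{1}{p}J(\beta^{\otimes n})$, coming from $p\alpha\le\beta$, and that bound does not improve with $n$, so it cannot yield the lemma. To obtain the $1/(np)$ decay one must genuinely compute a R\'enyi-2 divergence of the outputs (or Choi states) and control the cross terms between distinct positions $i\neq j$; that computation is the real content of the convex-split lemma, and your ``so the second-order R\'enyi divergence satisfies\ldots'' elides it.

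Second, the non-trace-preserving issue is acknowledged but not actually resolved. For CP maps that are not trace-preserving, $(\tau\otimes\mathcal{I}_R)(\proj{\phi})$ and $(\beta^{\otimes n}\otimes\mathcal{I}_R)(\proj{\phi})$ generically have different traces, so the fidelity/Pinsker chain from $D_2$ to trace distance does not apply in its usual form; and retreating to Choi operators does not repair this, both because the diamond norm of a difference is not controlled by the trace norm of the Choi difference without dimension factors, and because the optimizing input is entangled across all $n$ slots and the reference, so the product structure you want to exploit is not present in the output $(\beta^{\otimes n}\otimes\mathcal{I}_R)(\proj{\phi})$. These two points are precisely where the proof in Ref.~\cite{LiuWinter2019} does its nontrivial work; as written, your outline reproduces the easy classical bookkeeping but not the argument.
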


Before the main proof, we note the following two facts. 
The first one has $\bar{P}_{D_{\rm max}} = L_R$ [Eq.~\eqref{Eq:Robustness}] as a direct consequence.
\begin{afact}\label{Fact:AlternativExpression}
Given two channels $\mathcal{E}$ and $\Lambda$.
Then we have
\begin{align}
\sup_{{\rm A};\rho_{\rm SA}}\inf\left\{\lambda\,|\,0\le[(\lambda\Lambda-\mE)\otimes\mathcal{I}_{\rm A}](\rho_{\rm SA})\right\} = \inf\left\{\lambda\,|\,0\le[(\lambda\Lambda-\mE)\otimes\mathcal{I}_{\rm A}](\rho_{\rm SA})\;\forall{\rm A}\,\&\,\rho_{\rm SA}\right\}.
\end{align}
\end{afact}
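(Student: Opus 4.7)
The plan is to reduce the statement to an elementary fact about intersections of closed upward rays in $\mathbb{R}$. For each pair $({\rm A},\rho_{\rm SA})$, I would introduce the set
\begin{align}
S({\rm A},\rho_{\rm SA})\defeq\{\lambda\in\mathbb{R}\,|\,[(\lambda\Lambda-\mE)\otimes\mathcal{I}_{\rm A}](\rho_{\rm SA})\ge 0\},
\end{align}
so that the inner infimum on the LHS is exactly $f({\rm A},\rho_{\rm SA})\defeq\inf S({\rm A},\rho_{\rm SA})$, while the RHS equals $\inf\bigcap_{({\rm A},\rho_{\rm SA})} S({\rm A},\rho_{\rm SA})$.

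The key intermediate step is to show that each $S({\rm A},\rho_{\rm SA})$ is a closed upward ray, i.e.\ $S({\rm A},\rho_{\rm SA})=[f({\rm A},\rho_{\rm SA}),\infty)$. For closure I would use that $\lambda\mapsto[(\lambda\Lambda-\mE)\otimes\mathcal{I}_{\rm A}](\rho_{\rm SA})$ is continuous and that the cone of positive semidefinite operators is closed. For upward-closure I would write, for $\lambda\ge\lambda_0\in S({\rm A},\rho_{\rm SA})$,
\begin{align}
[(\lambda\Lambda-\mE)\otimes\mathcal{I}_{\rm A}](\rho_{\rm SA})=[(\lambda_0\Lambda-\mE)\otimes\mathcal{I}_{\rm A}](\rho_{\rm SA})+(\lambda-\lambda_0)(\Lambda\otimes\mathcal{I}_{\rm A})(\rho_{\rm SA}),
\end{align}
where both summands are positive semidefinite (the first by assumption, the second because $\Lambda\otimes\mathcal{I}_{\rm A}$ is a quantum channel applied to a state).

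With $S({\rm A},\rho_{\rm SA})=[f({\rm A},\rho_{\rm SA}),\infty)$ established, the identity
\begin{align}
\bigcap_{({\rm A},\rho_{\rm SA})}[f({\rm A},\rho_{\rm SA}),\infty)=\Bigl[\sup_{({\rm A},\rho_{\rm SA})}f({\rm A},\rho_{\rm SA}),\infty\Bigr)
\end{align}
(which is a direct rewriting of ``$\lambda\ge f({\rm A},\rho_{\rm SA})$ for all $({\rm A},\rho_{\rm SA})$ iff $\lambda\ge\sup f$'') immediately gives
\begin{align}
{\rm RHS}=\inf\Bigl[\sup_{({\rm A},\rho_{\rm SA})}f({\rm A},\rho_{\rm SA}),\infty\Bigr)=\sup_{({\rm A},\rho_{\rm SA})}f({\rm A},\rho_{\rm SA})={\rm LHS}.
\end{align}

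There is no substantive obstacle here; the only thing to be careful about is the edge case where the infimum of the empty set or an unbounded supremum could appear. I would handle these by noting that, since $\Lambda$ is a channel, large enough $\lambda$ always makes the operator positive, so each $S({\rm A},\rho_{\rm SA})$ is nonempty and $f$ is well-defined; if $\sup f=+\infty$ both sides are $+\infty$ and the equality is trivial, while if $\sup f<\infty$ the argument above applies verbatim.
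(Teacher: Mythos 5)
Your proof is correct, and it starts from the same reduction the paper uses: encode each pair $({\rm A},\rho_{\rm SA})$ as a set $\mathcal{L}_{\bf A}$ of admissible $\lambda$'s, so that the left-hand side is $\sup_{\bf A}\inf\mathcal{L}_{\bf A}$ and the right-hand side is $\inf\bigcap_{\bf A}\mathcal{L}_{\bf A}$. Where you diverge is in how the nontrivial inequality is closed: the paper gets ``$\le$'' from the inclusion $\bigcap_{\bf A}\mathcal{L}_{\bf A}\subseteq\mathcal{L}_{{\bf A}'}$ and then proves ``$\ge$'' by a $1/k$ approximation argument, picking near-optimal ${\bf A}_k$ and $\lambda_k$ and showing $\lambda_k+\tfrac1k$ lies in every $\mathcal{L}_{\bf A}$ --- a step that silently relies on each $\mathcal{L}_{\bf A}$ being upward closed. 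You instead prove that structural fact explicitly (each set is a closed ray $[f,\infty)$, by continuity of $\lambda\mapsto[(\lambda\Lambda-\mE)\otimes\mathcal{I}_{\rm A}](\rho_{\rm SA})$, closedness of the positive cone, and positivity of $(\Lambda\otimes\mathcal{I}_{\rm A})(\rho_{\rm SA})$), after which the identity $\bigcap_{\bf A}[f_{\bf A},\infty)=[\sup_{\bf A}f_{\bf A},\infty)$ finishes the proof in one line; this is tighter and makes visible the hypothesis the paper's argument actually needs. One small inaccuracy in your edge-case discussion: it is not true that a large enough $\lambda$ always makes the operator positive --- if $(\Lambda\otimes\mathcal{I}_{\rm A})(\rho_{\rm SA})$ is rank-deficient on the support of $(\mE\otimes\mathcal{I}_{\rm A})(\rho_{\rm SA})$ then $S({\rm A},\rho_{\rm SA})=\emptyset$ --- but this does no harm, since then both sides equal $+\infty$ under the usual convention $\inf\emptyset=+\infty$ and your ray identity still holds.
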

\begin{proof}
Define the set $\mathcal{L}_{\bf A}\coloneqq\left\{\lambda\,|\,0\le[(\lambda\Lambda - \mE)\otimes\mathcal{I}_{\rm A}](\rho_{\rm SA})\right\}$ with ${\bf A}$ denotes a particular combination of an ancillary system ${\rm A}$ and a state $\rho_{\rm SA}$ on the system ${\rm SA}$.
Then the left-hand-side can be written as $\sup_{\rm\bf A}\inf\{\lambda\,|\,\lambda\in\mathcal{L}_{\bf A}\}$, and the right-hand-side can be written as $\inf\left\{\lambda\,|\,\lambda\in\bigcap_{\bf A}\mathcal{L}_{\bf A}\right\}$.
With the above notations, the inequality ``$\le$'' follows by the fact that $\bigcap_{\bf A}\mathcal{L}_{\bf A}\subseteq\mathcal{L}_{{\bf A}'}$ for all ${{\bf A}'}$.
On the other hand, consider a given $k\in\mathbb{N}$.
Then there exists an ${\bf A}_k$ such that $\inf\left\{\lambda\,|\,\lambda\in\mathcal{L}_{{\bf A}_k}\right\} + \frac{1}{k}>\sup_{\rm\bf A}\inf\{\lambda\,|\,\lambda\in\mathcal{L}_{\bf A}\}\ge\inf\left\{\lambda\,|\,\lambda\in\mathcal{L}_{{\bf A}_k}\right\}$.
Also, there exists $\lambda_k\in\mathcal{L}_{{\bf A}_k}$ such that $\lambda_k - \frac{1}{k}<\inf\left\{\lambda\,|\,\lambda\in\mathcal{L}_{{\bf A}_k}\right\}\le\lambda_k$.
This means $\lambda_k + \frac{1}{k} > \inf\{\lambda\,|\,\lambda\in\mathcal{L}_{\bf A}\}$ for all ${\bf A}$.
In other words, this means 
$\lambda_k + \frac{1}{k}\in\bigcap_{\bf A}\mathcal{L}_{\bf A}$, which also implies 
\begin{align}
\inf\left\{\lambda\,|\,\lambda\in\bigcap_{\bf A}\mathcal{L}_{\bf A}\right\}&\le\lambda_k + \frac{1}{k}\nonumber\\
&\le\inf\left\{\lambda\,|\,\lambda\in\mathcal{L}_{{\bf A}_k}\right\}+\frac{2}{k}\nonumber\\
&\le\sup_{\rm\bf A}\inf\{\lambda\,|\,\lambda\in\mathcal{L}_{\bf A}\}+\frac{2}{k}.
\end{align}
Since this is true for all $k\in\mathbb{N}$, the result follows.
\end{proof}

The second fact is a property similar to Eq.~\eqref{P:Tensor} for the smooth version of $\bar{P}_D$ defined similarly to Eq.~\eqref{Eq:SmoothP_Dmaxbar}.
\begin{afact}\label{fact:Smooth-Tensor}
Given an $R$-theory and a distance measure $D$ satisfying Eq.~\eqref{Def:ProperDistanceMeasure}.
Then for every $\delta\ge0$ and channels $\mE_{\rm S},\mE_{\rm S'}\in\mathcal{O}_R$, we have
\begin{align}
\bar{P}_D^\delta(\mE_{\rm S}\otimes\mE_{\rm S'})\ge\bar{P}_D^\delta(\mE_{\rm S}).
\end{align}
\end{afact}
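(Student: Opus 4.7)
The plan is to bootstrap from the tensor-monotonicity argument already carried out for $\bar{P}_D$ inside the proof of Theorem~\ref{Result:DistanceMonotone}. Given any smoothing witness $\mE'$ close to $\mE_{\rm S} \otimes \mE_{\rm S'}$ in diamond norm, I will engineer a single-system channel $\wt{\mE}$ that is (i) close to $\mE_{\rm S}$ in diamond norm, and (ii) satisfies $\bar{P}_D(\mE') \geq \bar{P}_D(\wt{\mE})$. Once both items are in place, taking an infimum over $\mE'$ immediately transfers the tensor-type inequality from $\bar{P}_D$ to its smoothed version.

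For the construction, I fix an absolutely free state $\wt{\eta}_{\rm S'} \in \wt{\mathcal{F}}_R$ (which exists by property~\ref{Def:Proper:Nonempty}) and, for any channel $\mE'$ with $\tfrac12\norm{\mE' - \mE_{\rm S}\otimes\mE_{\rm S'}}_\diamond \leq \delta$, define $\wt{\mE}(\cdot) \coloneqq {\rm tr}_{\rm S'}\{\mE'[(\cdot) \otimes \wt{\eta}_{\rm S'}]\}$. Properties~\ref{Def:Proper:FreeIdentity},~\ref{Def:ProperQR-Tensor}, and~\ref{Def:Proper:Tensor} immediately give $\wt{\mE} \in \mathcal{O}_R$. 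The difference $\wt{\mE} - \mE_{\rm S}$ is obtained by pre- and post-composing $\mE' - \mE_{\rm S}\otimes\mE_{\rm S'}$ with the CPTP maps ``tensor with $\wt{\eta}_{\rm S'}$'' and ``partial trace over ${\rm S'}$''; since CPTP maps are contractive in the diamond norm, item (i) follows, i.e.\,$\tfrac12\norm{\wt{\mE} - \mE_{\rm S}}_\diamond \leq \delta$.

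For item (ii) the argument mirrors the tensor-monotonicity step for $\bar{P}_D$ in Appendix~\ref{App:Proof_Result:DistanceMonotone}. Given $\Lambda_{\rm SS'} \in \mathcal{O}_R^N$, I set $\wt{\Lambda}_{\rm S}(\cdot) \coloneqq {\rm tr}_{\rm S'}\{\Lambda_{\rm SS'}[(\cdot) \otimes \wt{\eta}_{\rm S'}]\}$, which again belongs to $\mathcal{O}_R^N$ by the same closure properties. Restricting the supremum defining $\bar{P}_D(\mE')$ to input states of the product form $\rho_{\rm SA} \otimes \wt{\eta}_{\rm S'}$ and then applying Eq.~\eqref{Def:ProperDistanceMeasure} to the free operation ${\rm tr}_{\rm S'} \otimes \mathcal{I}_{\rm A}$ produces exactly $D[(\wt{\mE}\otimes\mathcal{I}_{\rm A})(\rho_{\rm SA}),(\wt{\Lambda}_{\rm S}\otimes\mathcal{I}_{\rm A})(\rho_{\rm SA})]$. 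Taking the infimum over $\Lambda_{\rm SS'}$ yields $\bar{P}_D(\mE')$ on the left and, since each corresponding $\wt{\Lambda}_{\rm S}$ lies in $\mathcal{O}_R^N$, the right-hand side is lower bounded by $\bar{P}_D(\wt{\mE})$. I do not anticipate a serious obstacle; the only care needed is the bookkeeping that the smoothing ball around $\mE_{\rm S}\otimes\mE_{\rm S'}$ maps into the smoothing ball around $\mE_{\rm S}$, and that each $\Lambda_{\rm SS'} \in \mathcal{O}_R^N$ produces a $\wt{\Lambda}_{\rm S} \in \mathcal{O}_R^N$, both of which are one-line consequences of the postulated closure of the $R$-theory.
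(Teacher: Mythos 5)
Your proposal is correct and follows essentially the same route as the paper's proof: both map a smoothing witness $\mE'$ of $\mE_{\rm S}\otimes\mE_{\rm S'}$ to the reduced channel ${\rm tr}_{\rm S'}\{\mE'[(\cdot)\otimes\wt{\eta}_{\rm S'}]\}$, show it lies in the $\delta$-ball around $\mE_{\rm S}$ by contractivity of the diamond norm, and lower-bound $\bar{P}_D(\mE')$ by restricting the supremum to product inputs $\rho_{\rm SA}\otimes\wt{\eta}_{\rm S'}$ and applying Eq.~\eqref{Def:ProperDistanceMeasure}. The only inessential slip is the claim that $\wt{\mE}\in\mathcal{O}_R$ — the smoothing witness $\mE'$ need not be a free operation — but this plays no role, since $\bar{P}_D$ and the smoothing ball are defined over arbitrary channels.
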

\begin{proof}
First, we have the following definition similar to Eq.~\eqref{Eq:SmoothP_Dmaxbar}:
\begin{align}
\bar{P}_D^\delta(\mE_{\rm S}\otimes\mE_{\rm S'}) \coloneqq \inf_{\frac{1}{2}\norm{\mathcal{C}_{\rm SS'} - \mE_{\rm S}\otimes\mE_{\rm S'}}_\diamond\le\delta}\bar{P}_D(\mathcal{C}_{\rm SS'}).
\end{align}
A direct computation shows
\begin{align}
\bar{P}_D(\mathcal{C}_{\rm SS'}) &= \inf_{\Lambda_{\rm SS'}\in\mathcal{O}_R^N}\sup_{\rm A;\rho_{\rm SS'A}}D\left[(\mathcal{C}_{\rm SS'}\otimes\mathcal{I}_{\rm A})(\rho_{\rm SS'A}),(\Lambda_{\rm SS'}\otimes\mathcal{I}_{\rm A})(\rho_{\rm SS'A})\right]\nonumber\\
&\ge\inf_{\Lambda_{\rm SS'}\in\mathcal{O}_R^N}\sup_{\rm A;\rho_{\rm SA}}D\left[(\mathcal{C}_{\rm SS'}\otimes\mathcal{I}_{\rm A})(\rho_{\rm SA}\otimes\wt{\eta}_{\rm S'}),(\Lambda_{\rm SS'}\otimes\mathcal{I}_{\rm A})(\rho_{\rm SA}\otimes\wt{\eta}_{\rm S'})\right]\nonumber\\
&\ge\inf_{\Lambda_{\rm SS'}\in\mathcal{O}_R^N}\sup_{\rm A;\rho_{\rm SA}}D\left[(\mathcal{C}_{\rm S}'\otimes\mathcal{I}_{\rm A})(\rho_{\rm SA}),(\Lambda_{\rm S}'\otimes\mathcal{I}_{\rm A})(\rho_{\rm SA})\right]\nonumber\\
&\ge\inf_{\Lambda_{\rm S}\in\mathcal{O}_R^N}\sup_{\rm A;\rho_{\rm SA}}D\left[(\mathcal{C}_{\rm S}'\otimes\mathcal{I}_{\rm A})(\rho_{\rm SA}),(\Lambda_{\rm S}\otimes\mathcal{I}_{\rm A})(\rho_{\rm SA})\right]\nonumber\\
&=\bar{P}_D(\mathcal{C}_{\rm S}').
\end{align}
The second line is because fixing an absolutely free state $\wt{\eta}_{\rm S'}$ [which is possible due to property~\ref{Def:Proper:Nonempty}] forms a sub-optimal range for the maximization.
The third line is a consequence of the data-processing inequality under partial trace [namely, Eq.~\eqref{Def:ProperDistanceMeasure} and property~\ref{Def:Proper:FreeIdentity}], where we define $\mathcal{C}_{\rm S}'(\cdot)\coloneqq{\rm tr}_{\rm S'}\circ\mathcal{C}_{\rm SS'}[(\cdot)\otimes\wt{\eta}_{\rm S'}]$ and $\Lambda_{\rm S}'(\cdot)\coloneqq{\rm tr}_{\rm S'}\circ\Lambda_{\rm SS'}[(\cdot)\otimes\wt{\eta}_{\rm S'}]$.
Using properties~\ref{Def:Proper:FreeIdentity},~\ref{Def:ProperQR-Tensor}, and~\ref{Def:Proper:Tensor}, we learn that $\Lambda_{\rm S}'\in\mathcal{O}_R^N$ and hence all possible such channels form a sub-optimal range for the minimization compared with all elements in $\mathcal{O}_R^N$.
This explains the fourth line.

Now we note that
\begin{align}\label{Eq:Sub-optimal-range01}
\norm{\mathcal{C}_{\rm SS'} - \mE_{\rm S}\otimes\mE_{\rm S'}}_\diamond&\coloneqq\sup_{\rm A;\rho_{\rm SS'A}}\norm{(\mathcal{C}_{\rm SS'}\otimes\mathcal{I}_{\rm A})(\rho_{\rm SS'A}) - ( \mE_{\rm S}\otimes\mE_{\rm S'}\otimes\mathcal{I}_{\rm A})(\rho_{\rm SS'A})}_1\nonumber\\
&\ge\sup_{\rm A;\rho_{\rm SA}}\norm{(\mathcal{C}_{\rm SS'}\otimes\mathcal{I}_{\rm A})(\rho_{\rm SA}\otimes\wt{\eta}_{\rm S'}) - ( \mE_{\rm S}\otimes\mE_{\rm S'}\otimes\mathcal{I}_{\rm A})(\rho_{\rm SA}\otimes\wt{\eta}_{\rm S'})}_1\nonumber\\
&\ge\sup_{\rm A;\rho_{\rm SA}}\norm{(\mathcal{C}_{\rm S}'\otimes\mathcal{I}_{\rm A})(\rho_{\rm SA}) - ( \mE_{\rm S}\otimes\mathcal{I}_{\rm A})(\rho_{\rm SA})}_1\nonumber\\
& = \norm{\mathcal{C}_{\rm S}' - \mE_{\rm S}}_\diamond.
\end{align}
Again, the second line is due to the sub-optimal range for the maximization when we fix an absolutely free state $\wt{\eta}_{\rm S'}$.
Also, the third line follows from the data-processing inequality (or equivalently, the contractivity) of trace norm under quantum channels.

Finally, we have
\begin{align}
\bar{P}_D^\delta(\E_{\rm S}\otimes\mE_{\rm S'})&\ge \inf_{\frac{1}{2}\norm{\mathcal{C}_{\rm SS'} - \mE_{\rm S}\otimes\mE_{\rm S'}}_\diamond\le\delta}\bar{P}_D(\mathcal{C}_{\rm S}')\nonumber\\
&\ge\inf_{\frac{1}{2}\norm{\mathcal{C}_{\rm S}' - \mE_{\rm S}}_\diamond\le\delta}\bar{P}_D(\mathcal{C}_{\rm S}')\nonumber\\
&\ge\inf_{\frac{1}{2}\norm{\mE' - \mE_{\rm S}}_\diamond\le\delta}\bar{P}_D(\mE')\nonumber\\
&=\bar{P}_D^\delta(\mE_{\rm S}).
\end{align}
The second line follows from Eq.~\eqref{Eq:Sub-optimal-range01}, which implies that all $\mathcal{C}_{\rm SS'}$ satisfying $\frac{1}{2}\norm{\mathcal{C}_{\rm SS'} - \mE_{\rm S}\otimes\mE_{\rm S'}}_\diamond\le\delta$ form a subset of the set of all $\mathcal{C}_{\rm SS'}$ satisfying $\frac{1}{2}\norm{\mathcal{C}_{\rm S}' - \mE_{\rm S}}_\diamond\le\delta$.
Since all possible $\mathcal{C}_{\rm S}'$ form a subset of all possible channels $\mE'$ satisfying $\frac{1}{2}\norm{\mE' - \mE_{\rm S}}_\diamond\le\delta$, we have the third line.
The proof is completed.
\end{proof}

Now we start the proof of Theorem~\ref{Coro:OperationalMeaning}:

\begin{proof}
We follow the same strategy in the proof of Theorem 10 in Ref.~\cite{LiuWinter2019}.
We will show the upper bound at first.

{\em Proof of the upper bound.}--
At the very beginning, consider an arbitrarily given positive integer $l\in\mathbb{N}$.
By definition, there exists a channel $\mE_l$ such that $\norm{\mE_l - \mE}_\diamond\le2(\epsilon-\eta)$ and 
\begin{align}
\bar{P}_{D_{\rm max}}^{\epsilon - \eta}(\mE) \le \bar{P}_{D_{\rm max}}(\mE_l)\le \bar{P}_{D_{\rm max}}^{\epsilon - \eta}(\mE) + \frac{1}{l}.
\end{align}
Also, because we have $\mathcal{O}_R^N = \wt{\mathcal{O}}_R^N$, there exists a channel $\bar{\Lambda}_l\in\wt{\mathcal{O}}_R^N$ such that
\begin{align}\label{Eq:P_DmaxEl-bar}
\bar{P}_{D_{\rm max}}(\mE_l)&\le\sup_{{\rm A};\rho_{\rm SA}} D_{\rm max}\left[(\mE_l\otimes\mathcal{I}_{\rm A})(\rho_{\rm SA})\|(\bar{\Lambda}_l\otimes\mathcal{I}_{\rm A})(\rho_{\rm SA})\right]\nonumber\\
&\coloneqq \sup_{{\rm A};\rho_{\rm SA}}\log_2\inf\left\{\lambda\,|\,0\le[(\lambda\bar{\Lambda}_l - \mE_l)\otimes\mathcal{I}_{\rm A}](\rho_{\rm SA})\right\}\nonumber\\
&=\log_2\inf\left\{\lambda\,|\,0\le[(\lambda\bar{\Lambda}_l - \mE_l)\otimes\mathcal{I}_{\rm A}](\rho_{\rm SA})\;\forall{\rm A}\,\&\,\rho_{\rm SA}\right\}\nonumber\\
&=-\log_2\sup\left\{q\in[0,1]\,|\,0 \le [(\bar{\Lambda}_l - q\mE_l)\otimes\mathcal{I}_{\rm A}](\rho_{\rm SA})\;\forall{\rm A}\,\&\,\rho_{\rm SA}\right\}\nonumber\\
&\le \bar{P}_{D_{\rm max}}(\mE_l) + \frac{1}{l},
\end{align}
where in the third line we use Fact~\ref{Fact:AlternativExpression} and the fourth line is due to the fact that $\lambda<1$ is forbidden in the minimization range (otherwise there exist quantum states $\sigma$ and $\sigma'$ such that $\lambda\sigma - \sigma'\ge0$ for some $\lambda<1$, which is impossible since this implies $0\le{\rm tr}(\lambda\sigma - \sigma') = \lambda - 1<0$).
Let $\mathcal{U}_i$ be the pair-wise permutation unitary channel between the first and the $i$th subsystems (that is, the swap unitary between the two subsystems).
Then we consider the destruction process with $\bar{\Lambda}_l^{\otimes (n-1)}$ and $\left\{\mathcal{U}_i,\mathcal{U}_i,p_i = \frac{1}{n}\right\}_{i=1}^{n}$ \footnote{Note that in general the two pair-wise permutation channels are different because they may act on different spaces. Here we simply use the same notation to stress the fact that both of them are permutation unitary channels between the first and the $i$th subsystems.
More precisely, if $\bar{\Lambda}_l,\mE_l:{\rm S}\to{\rm S'}$, then we have ${\rm S}^{\otimes n}\to{\rm S}^{\otimes n}$ for the pre-processing permutations and ${\rm S'}^{\otimes n}\to{\rm S'}^{\otimes n}$ for the post-processing permutations.}, which gives the following:
\begin{align}\label{Eq:eta-destruction-process}
\sum_{i=1}^n\frac{1}{n}\mathcal{U}_i\circ\left(\mE_l\otimes\bar{\Lambda}_l^{\otimes (n-1)}\right)\circ \mathcal{U}_i = \frac{1}{n}\sum_{i=1}^n\bar{\Lambda}_l^{\otimes (i-1)}\otimes\mE_l\otimes\bar{\Lambda}_l^{\otimes (n-i)}.
\end{align}
From Eq.~\eqref{Eq:P_DmaxEl-bar} we note that when $\log_2{n} > \bar{P}_{D_{\rm max}}(\mathcal{E}_l) + \frac{1}{l} + 2\log_2\frac{1}{2\eta}$ [which automatically implies $\bar{P}_{D_{\rm max}}(\mathcal{E}_l)<\infty$], there always exists an $p_l\in(0,1)$ such that
\begin{itemize}
\item $\log_2{n} > \log_2\frac{1}{p_l} + 2\log_2\frac{1}{2\eta}$.
\item $\bar{\Lambda}_l - p_l\mathcal{E}_l$ is completely-positive.
\end{itemize}
By defining $\alpha'_l\coloneqq\frac{1}{1-p_l}(\bar{\Lambda}_l - p_l\mathcal{E}_l)$, one can see that $\alpha'_l$ is completely-positive and trace-preserving [since both $\bar{\Lambda}_l$ and $\mathcal{E}_l$ are trace-preserving and we have $\bar{\Lambda}_l=p_l\mathcal{E}_l+(1-p_l)\alpha'_l$; note that $p_l<1$].
This means $\alpha'_l$ is also a quantum channel (i.e. a completely-positive trace-preserving map), thereby having $\norm{\alpha'_l}_\diamond = 1$.
Then Lemma~\ref{Lemma:CSL} (with $\alpha=\mathcal{E}_l$ and $\beta=\bar{\Lambda}_l$) implies that when $\log_2{n} > \bar{P}_{D_{\rm max}}(\mathcal{E}_l) + \frac{1}{l} + 2\log_2\frac{1}{2\eta}$ holds, then we have
\begin{align}\label{Eq:approx}
\norm{\frac{1}{n}\sum_{i=1}^n\bar{\Lambda}_l^{\otimes (i-1)}\otimes\mE_l\otimes\bar{\Lambda}_l^{\otimes (n-i)} - \bar{\Lambda}_l^{\otimes n}}_\diamond\le 2\eta;
\end{align}
in other words, Eq.~\eqref{Eq:eta-destruction-process} forms an $\eta$-destruction process for $\mathcal{E}_l$.
(Note that $\bar{\Lambda}_l^{\otimes n}\in\wt{\mathcal{O}}_R^N$ since we assume no activation property).
This also implies the existence of an $\epsilon$-destruction process for $\mE$ since
\begin{align}
\norm{\frac{1}{n}\sum_{i=1}^n\bar{\Lambda}_l^{\otimes (i-1)}\otimes\mE\otimes\bar{\Lambda}_l^{\otimes (n-i)} - \bar{\Lambda}_l^{\otimes n}}_\diamond&\le\norm{\mE - \mE_l}_\diamond + \norm{\frac{1}{n}\sum_{i=1}^n\bar{\Lambda}_l^{\otimes (i-1)}\otimes\mE_l\otimes\bar{\Lambda}_l^{\otimes (n-i)} - \bar{\Lambda}_l^{\otimes n}}_\diamond\nonumber\\
&\le2(\epsilon - \eta) + 2\eta = 2\epsilon,
\end{align}
where we use the relation $\norm{\mE - \mE_l}_\diamond\le2(\epsilon - \eta)$, data-processing inequality, and triangle inequality.

Finally, let $n' = \min\left\{n\in\mathbb{N}\,|\,\log_2{n}>\bar{P}_{D_{\rm max}}(\mE_l) + \frac{1}{l} + 2\log_2{\frac{1}{2\eta}}\right\}$.
Since $C_R^\epsilon(\mE)\coloneqq\min\log_2{n}$ and the minimization is taken over all $\epsilon$-destruction processes, we conclude the following
\begin{align}
C_R^\epsilon(\mE)&\le\log_2{n'}\nonumber\\
&\le \bar{P}_{D_{\rm max}}(\mE_l) + \frac{1}{l} + 2\log_2{\frac{1}{2\eta}} + \max_{x\in\mathbb{N}}\left[\log_2(x+1) - \log_2{x}\right]\nonumber\\
&\le \bar{P}_{D_{\rm max}}(\mE_l) + \frac{1}{l} + 2\log_2{\frac{1}{2\eta}} + 1\nonumber\\
&\le \bar{P}_{D_{\rm max}}^{\epsilon - \eta}(\mE) + \frac{2}{l}+ 2\log_2{\frac{1}{\eta}} - 1,
\end{align}
and the proof of the upper bound is completed since the above estimate works for all positive integer $l$.

{\em Proof of the lower bound.}--
The proof is completely the same with the proof of the lower bound of Theorem 10 in Ref.~\cite{LiuWinter2019}, and we briefly sketch it.
Consider a given $\mE_{\rm S}\in\mathcal{O}_R$.
Then for a given $\epsilon$-destruction process of $R$-preservability consisting of $\bar{\Lambda}_{\rm S'}\in\wt{\mathcal{O}}_R^N$ and $\{\mathcal{U}_i,\mathcal{V}_i,p_i\}_{i=1}^{K}$, we have 
\begin{align}
\norm{\sum_{i=1}^Kp_i\mathcal{N}_i - \Lambda_{\rm SS'}}_\diamond\le2\epsilon,
\end{align}
where $\Lambda_{\rm SS'}\in\mathcal{O}_R^N$ and $\mathcal{N}_i\coloneqq \mathcal{U}_i\circ(\mE_{\rm S}\otimes\bar{\Lambda}_{\rm S'})\circ \mathcal{V}_i$.
This $\epsilon$-destruction process of $R$-preservability can also be interpreted as an $\epsilon$-destruction process defined in Ref.~\cite{LiuWinter2019} by identifying $\mathcal{O}_R$ as free channels in their framework (that is, in the proof of the lower bound of Theorem 10 in Ref.~\cite{LiuWinter2019} we set $\mathcal{N} = \mathcal{E}_{\rm S}\in\mathcal{O}_R$, $\mathcal{F} = \bar{\Lambda}_{\rm S'}\in\wt{\mathcal{O}}_R^N$, $\mathcal{M} = \Lambda_{\rm SS'}\in\mathcal{O}_R^N$, and consider the channel resource theory with free channels as $\mathcal{O}_R$.).
The same proof applies until we reach the following inequality, which is the last inequality in the bottom of page 15 in Ref.~\cite{LiuWinter2019} (the assumptions made in Theorem~\ref{Coro:OperationalMeaning} plus the properties~\ref{Def:Proper:Nonempty},~\ref{Def:Proper:FreeIdentity},~\ref{Def:ProperQR-Tensor},~\ref{Def:Proper:Tensor} ensure the applicability of the proof of Theorem 10 in Ref.~\cite{LiuWinter2019} when we identify $\mathcal{O}_R$ as free channels in their setting):
\begin{align}
\norm{\mathcal{E}_{\rm S}\otimes\bar{\Lambda}_{\rm S'} - \sum_{i=1}^Kp_i\mathcal{U}_i^\dagger\circ\mathcal{M}_i\circ \mathcal{V}_i^\dagger}_\diamond\le\sqrt{\epsilon(2-\epsilon)},
\end{align}
where $\mathcal{M}_i$'s are completely-positive maps satisfying $\sum_{i=1}^Kp_i\mathcal{M}_i = \Lambda_{\rm SS'}$, which means $p_i\mathcal{M}_i \le \Lambda_{\rm SS'}$ for all $i$ (by writing $\mathcal{E}\le\mathcal{E}'$ for two channel $\mE$ and $\mE'$ we means $\mE'-\mE$ is completely-positive).
Hence, we have
\begin{align}
\sum_{i=1}^Kp_i\mathcal{U}_i^\dagger\circ\mathcal{M}_i\circ \mathcal{V}_i^\dagger\le\sum_{i=1}^K\mathcal{U}_i^\dagger\circ\Lambda_{\rm SS'}\circ \mathcal{V}_i^\dagger.
\end{align}
Note that the left-hand-side is a channel.
Because for any $R$-theory considered in this work the set $\mathcal{O}_R^N$ is by definition convex, we have $\frac{1}{K}\sum_{i=1}^K\mathcal{U}_i^\dagger\circ\Lambda_{\rm SS'}\circ \mathcal{V}_i^\dagger\in\mathcal{O}_R^N$.
Hence, we conclude
\begin{align}
\bar{P}_{D_{\rm max}}^{\sqrt{\epsilon(2-\epsilon)}}(\mE_{\rm S}\otimes\bar{\Lambda}_{\rm S'})\le\bar{P}_{D_{\rm max}}\left(\sum_{i=1}^Kp_i\mathcal{U}_i^\dagger\circ\mathcal{M}_i\circ \mathcal{V}_i^\dagger\right)\le \log_2{K}.
\end{align}
Using Fact~\ref{fact:Smooth-Tensor}, we conclude that $\bar{P}_{D_{\rm max}}^{\sqrt{\epsilon(2-\epsilon)}}(\mE_{\rm S})\le\log_2K$ for all possible $K$.
This completes the proof.
\end{proof}

\section{Proof of Theorem~\ref{Coro:BathSize}}\label{App:BathSizeProof}
Before the proof, we need to recap certain key ingredients in Ref.~\cite{Sparaciari2019}.
The first one is a central assumption called {\em energy subspace condition} (we use the notation ${\bf m} = (m_1,m_2,...,m_d)$ to denote a vector in $\mathbb{N}^d$):
\begin{adefinition}\label{Def:ESC}
{\rm (Energy Subspace Condition)~\cite{Sparaciari2019}}
A given Hamiltonian $H$ with energy levels $\{E_i\}_{i=1}^{d}$ is said to fulfill the {\em energy subspace condition} if for any positive integer $M$ and two different vectors $\{{\bf m}\neq{\bf m}'\}\subset\mathbb{N}^d$ satisfying $\sum_{i=1}^{d}m_i = \sum_{i=1}^{d}m'_i = M$, we have
\begin{align}
\sum_{i=1}^dm_iE_i\neq \sum_{i=1}^dm'_iE_i.
\end{align}
\end{adefinition}
Roughly speaking, Definition~\ref{Def:ESC} means energy levels cannot be integer multiples of each other.
This condition also forbids the possibility of degeneracy (otherwise one can simply switch the coefficients of a vector ${\bf m}$ in a subspace with degeneracy to construct a counterexample).

Before mentioning the main results in Ref.~\cite{Sparaciari2019}, we define the smooth max-relative entropy as [also recall Eq.~\eqref{Eq:max-relative-entropy}]
\begin{align}
D_{\rm max}^\epsilon(\rho||\sigma)\coloneqq\inf_{\frac{1}{2}\norm{\rho' - \rho}_1\le\epsilon}D_{\rm max}(\rho'||\sigma).
\end{align}
Note that there is a difference of $\frac{1}{2}$ factor compared with Eq.~(11) in Ref.~\cite{Sparaciari2019}.
Then we have~\cite{Sparaciari2019} ($\gamma$ is the thermal state associated with the given bath temperature $T$ and system Hamiltonian $H_{\rm S}$):
\begin{atheorem}\label{Thm:Carlo}
{\rm\cite{Sparaciari2019}} For a given state $\rho_{\rm S}$, we have
\begin{align}
n_\epsilon(\rho_{\rm S}) \le \frac{1}{\epsilon^2}2^{D_{\rm max}(\rho_{\rm S}||\gamma)} + 1.
\end{align}
Moreover, if the system Hamiltonian $H_{\rm S}$ satisfies the energy subspace condition and $\rho_{\rm S}$ is diagonal in the energy eigenbasis of $H_{\rm S}$, then we also have
\begin{align}\label{Eq:Carlo}
D_{\rm max}^{\sqrt{\epsilon}}(\rho_{\rm S}||\gamma)\le \log_2n_\epsilon(\rho_{\rm S}).
\end{align}
\end{atheorem}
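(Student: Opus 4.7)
The plan is to establish the two bounds separately, both of which hinge on the explicit convex-split structure of the state $\tau_n := \frac{1}{n}\sum_{i=1}^n \gamma^{\otimes(i-1)} \otimes \rho_{\rm S} \otimes \gamma^{\otimes(n-i)}$ and its trace-norm proximity to $\gamma^{\otimes n}$.

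For the upper bound I would exhibit an explicit channel in $\mathcal{C}_n$ that $\epsilon$-thermalizes $\rho_{\rm S}$ for $n$ of order $\epsilon^{-2}\,2^{D_{\rm max}(\rho_{\rm S}\|\gamma)}$. Because the total Hamiltonian $H_{\rm S} + H_{\rm B} = \sum_{i=0}^{n-1} H_{\rm S}^{(i)}$ is a sum of identical local Hamiltonians on $n$ equivalent subsystems, every pair-wise swap between two of these subsystems commutes with $H_{\rm S}+H_{\rm B}$ and is therefore an admissible energy-preserving unitary in the master equation~\eqref{Eq:Master-Equation}. Driving the master equation with all such swap unitaries at equal rates and letting the realization time $t \to \infty$ produces a channel in $\mathcal{C}_n$ whose action on $\rho_{\rm S}\otimes\gamma^{\otimes(n-1)}$ is precisely the uniform convex-split state $\tau_n$. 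The state-level convex-split lemma (the Anshu--Devabhaktuni--Jain inequality, of which Lemma~\ref{Lemma:CSL} is a channel-level analogue) yields
\begin{equation*}
\left\| \tau_n - \gamma^{\otimes n} \right\|_1 \le \sqrt{\frac{2^{D_{\rm max}(\rho_{\rm S}\|\gamma)} - 1}{n}},
\end{equation*}
which is at most $\epsilon$ as soon as $n \ge \epsilon^{-2}\,2^{D_{\rm max}(\rho_{\rm S}\|\gamma)}$, yielding the claimed bound on $n_\epsilon(\rho_{\rm S})$.

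For the lower bound, suppose $\mathcal{E}_{\rm SB}\in\mathcal{C}_n$ satisfies $\| \mathcal{E}_{\rm SB}(\rho_{\rm S}\otimes\gamma^{\otimes(n-1)}) - \gamma^{\otimes n} \|_1 \le \epsilon$. The key structural step is to exploit the energy subspace condition: the condition forbids any degeneracy of $H_{\rm S}+H_{\rm B}$ beyond the combinatorial degeneracy imposed by the permutation symmetry of the $n$ identical subsystems, so each eigenspace of $H_{\rm S}+H_{\rm B}$ is spanned by permutations of a single tensor-product local energy eigenvector. Consequently, every energy-preserving unitary $U_{\rm SB}^{(k)}$ acts on diagonal (in the local energy basis) populations only by permuting the $n$ sites. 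Combined with the assumption that $\rho_{\rm S}$ is diagonal in that basis, this is exactly Lemma 17 of Ref.~\cite{Sparaciari2019} and yields the representation
\begin{equation*}
\mathcal{E}_{\rm SB}(\rho_{\rm S}\otimes\gamma^{\otimes(n-1)}) = \sum_{\pi \in S_n} q_\pi\, V_\pi\bigl(\rho_{\rm S}\otimes\gamma^{\otimes(n-1)}\bigr)V_\pi^\dagger,
\end{equation*}
where $V_\pi$ is the permutation unitary associated with $\pi\in S_n$. Averaging both sides over cyclic shifts (again energy-preserving, so it does not leave $\mathcal{C}_n$ and does not increase trace-norm distance to the symmetric state $\gamma^{\otimes n}$), the symmetrized output places $\rho_{\rm S}$ uniformly at each of the $n$ sites. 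Its marginal on a single site is therefore $\frac{1}{n}\rho_{\rm S} + \frac{n-1}{n}\gamma$, which is within trace distance $\epsilon$ of $\gamma$ by contractivity of the partial trace. Rearranging gives $\rho_{\rm S} \le n\gamma + n(\tau-\gamma)$ in the sense that the positive part of the excess is controlled by $n\|\tau-\gamma\|_1$; converting this to purified distance via Fuchs--van de Graaf and smoothing out the excess produces a nearby state $\rho'$ with $P(\rho',\rho_{\rm S})\le\sqrt{\epsilon}$ and $\rho' \le n\gamma$, which is exactly the statement $D_{\rm max}^{\sqrt{\epsilon}}(\rho_{\rm S}\|\gamma) \le \log_2 n$.

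The principal difficulty is the structural reduction used in the lower bound: translating the dynamical constraint ``generated by energy-preserving collisions under the energy subspace condition'' into the clean permutation-mixture representation above. Without the energy subspace condition the unitaries $U_{\rm SB}^{(k)}$ enjoy unconstrained freedom within accidental total-energy degeneracies, and without the diagonality of $\rho_{\rm S}$ one cannot restrict attention to diagonal-population dynamics; both assumptions enter essentially at this single step, which is precisely why the lower bound is stated with these extra hypotheses while the upper bound is not. Once the mixture form is in hand, the remaining manipulations are routine: data-processing under partial trace, Fuchs--van de Graaf, and a one-shot smoothing of $D_{\rm max}$.
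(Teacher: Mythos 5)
First, a point of reference: the paper offers no proof of this statement. Theorem~\ref{Thm:Carlo} is imported verbatim from Ref.~\cite{Sparaciari2019} (the citation appears in its statement and it is immediately put to use in proving Theorem~\ref{Coro:BathSize}), so there is no in-paper argument to compare yours against; your proposal has to stand on its own. Your upper bound does: pairwise swaps commute with $H_{\rm S}+H_{\rm B}$, the Poissonized swap dynamics converges to the uniform permutation average, whose action on $\rho_{\rm S}\otimes\gamma^{\otimes(n-1)}$ is the convex-split state $\tau_n$, and the $\chi^2$-type estimate $\norm{\tau_n-\gamma^{\otimes n}}_1\le\sqrt{(2^{D_{\rm max}(\rho_{\rm S}\|\gamma)}-1)/n}$ delivers $n_\epsilon\le\epsilon^{-2}2^{D_{\rm max}(\rho_{\rm S}\|\gamma)}+1$ (modulo the minor caveat that $\mathcal{C}_n$ contains only finite-time channels, so one should take $t$ large but finite once the strict inequality is available).

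The lower bound, however, has a genuine gap at its final step. After symmetrizing and tracing down to a single site you hold $\tau\coloneqq\frac{1}{n}\rho_{\rm S}+\frac{n-1}{n}\gamma$ with $\norm{\tau-\gamma}_1\le\epsilon$, i.e.\ $\rho_{\rm S}\le n\gamma+n(\tau-\gamma)$. But the only control on the excess is ${\rm tr}\!\left[(\rho_{\rm S}-n\gamma)_+\right]\le n\,{\rm tr}\!\left[(\tau-\gamma)_+\right]\le n\epsilon/2$ --- the same $n$-amplification that turns $\norm{\tau-\gamma}_1\le\epsilon$ into the vacuous $\norm{\rho_{\rm S}-\gamma}_1\le n\epsilon$. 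Truncating that excess therefore moves $\rho_{\rm S}$ by up to order $n\epsilon$ in trace distance, which blows past the $\sqrt{\epsilon}$ smoothing budget whenever $n\gtrsim\epsilon^{-1/2}$, and that is exactly the regime of interest since $n$ scales like $2^{D_{\rm max}}/\epsilon^{2}$. So $D_{\rm max}^{\sqrt{\epsilon}}(\rho_{\rm S}\|\gamma)\le\log_2 n$ does not follow from the single-site marginal; the bound must be extracted from the full $n$-site state (e.g.\ a level-set argument on the event that some site lands in $\{x\,:\,p_x>n g_x\}$, or a purified-distance converse to convex split), which is also where Fuchs--van de Graaf actually produces the $\sqrt{\epsilon}$. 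A secondary weak point: the reduction to a mixture of global site permutations is stronger than what the energy subspace condition gives directly --- within a degenerate total-energy eigenspace the unitary is unconstrained, and Birkhoff only yields convex combinations of permutations of basis states separately within each type class --- so you are invoking Lemma 17 of Ref.~\cite{Sparaciari2019} rather than deriving it.
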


Now the idea is to use the above theorem to prove Theorem~\ref{Coro:BathSize}.
But before the proof, we still need to establish the following lemma regarding the continuity of the max-relative entropy (in a finite dimensional case, we say a quantum state is {\em full-rank} if it has only positive eigenvalues; in other words, its support coincides with the whole Hilbert space):
\begin{alemma}\label{Lemma:Continuity-D_max}
Given three states $\rho,\rho',\sigma$ and $\sigma$ is full-rank.
Then we have
\begin{align}
\left|2^{D_{\rm max}(\rho'||\sigma)} - 2^{D_{\rm max}(\rho||\sigma)}\right| \le \frac{\norm{\rho - \rho'}_1}{p_{\rm min}(\sigma)},
\end{align}
where $p_{\rm min}(\sigma)$ is the smallest eigenvalue of $\sigma$.
\end{alemma}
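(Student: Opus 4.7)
The plan is to translate the bound into a direct operator inequality on $\rho, \rho', \sigma$ and then exploit the full-rank assumption on $\sigma$. Write $\lambda \defeq 2^{D_{\rm max}(\rho\|\sigma)}$ and $\lambda' \defeq 2^{D_{\rm max}(\rho'\|\sigma)}$, so by Eq.~\eqref{Eq:max-relative-entropy} these are the infima of $\{\mu \ge 0 : \rho \le \mu\sigma\}$ and $\{\mu \ge 0 : \rho' \le \mu\sigma\}$, respectively. The goal is to show $|\lambda - \lambda'| \le \norm{\rho - \rho'}_1 / p_{\rm min}(\sigma)$. By symmetry between $\rho$ and $\rho'$, it suffices to prove the one-sided estimate $\lambda' \le \lambda + \norm{\rho - \rho'}_1 / p_{\rm min}(\sigma)$; swapping $\rho \leftrightarrow \rho'$ then yields the matching lower bound.

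The key step is a chain of operator inequalities. First, since $\sigma$ is full-rank in a finite-dimensional Hilbert space, $\sigma \ge p_{\rm min}(\sigma)\, \id$, equivalently $\id \le \sigma / p_{\rm min}(\sigma)$. Second, writing $\Delta \defeq \rho' - \rho$, which is Hermitian, one has $\Delta \le \norm{\Delta}_\infty \id$ (from the spectral decomposition), and for a Hermitian operator $\norm{\Delta}_\infty \le \norm{\Delta}_1$ (the largest absolute eigenvalue is dominated by the sum of absolute eigenvalues). Combining these with the definition of $\lambda$ yields
\begin{align}
\rho' = \rho + \Delta \;\le\; \lambda \sigma + \norm{\Delta}_1 \id \;\le\; \left(\lambda + \frac{\norm{\rho - \rho'}_1}{p_{\rm min}(\sigma)}\right) \sigma,
\end{align}
so by the infimum definition of $\lambda'$, we conclude $\lambda' \le \lambda + \norm{\rho - \rho'}_1 / p_{\rm min}(\sigma)$, as desired. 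The symmetric inequality follows by running the same argument with the roles of $\rho$ and $\rho'$ interchanged, and together they give the lemma.

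There is no real obstacle here: the argument reduces to the elementary fact that tracial and operator norms of a Hermitian difference are comparable, plus the lower bound $\sigma \ge p_{\rm min}(\sigma)\id$ that the full-rank hypothesis provides. The only point that merits a sentence of care is why $\lambda$ is actually attained (or approached) by a sequence, so that adding a perturbation to the inequality $\rho \le \lambda\sigma$ gives a valid witness for $\lambda'$; this is immediate since by lower semicontinuity of the order relation the infimum in Eq.~\eqref{Eq:max-relative-entropy} is achieved, and in any case one can argue with $\lambda + \varepsilon$ and let $\varepsilon \to 0^+$.
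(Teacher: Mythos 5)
Your proof is correct and is essentially the paper's argument: both rest on the two facts $\rho'-\rho\le\norm{\rho'-\rho}_\infty\id\le\norm{\rho'-\rho}_1\id$ and $\id\le\sigma/p_{\rm min}(\sigma)$, so that any feasible $\lambda$ for $\rho$ shifted by $\norm{\rho-\rho'}_1/p_{\rm min}(\sigma)$ is feasible for $\rho'$, with the reverse direction by symmetry. The paper merely phrases the same chain through the quadratic forms $\inf_{\ket{\phi}}\bra{\phi}(\lambda\sigma-\rho)\ket{\phi}$ rather than as a direct operator inequality, and your closing remark about the infimum being attained (or approached via $\lambda+\varepsilon$) covers the only point of care.
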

\begin{proof}
Define the set $\mathcal{L}(\rho||\sigma)\coloneqq\{\lambda\ge0\;|\;\rho\le\lambda\sigma\}$.
Then one can rewrite the definition of the max-relative entropy as 
$
D_{\rm max}(\rho||\sigma) = \inf_{\lambda\in\mathcal{L}(\rho||\sigma)}\log_2\lambda.
$
Now we note that $\lambda\in\mathcal{L}(\rho||\sigma)$ if and only if $\lambda\sigma - \rho\ge0$, which is true if and only if 
\begin{align}\label{Eq:IffCondition}
\inf_{\ket{\phi}}\bra{\phi}(\lambda\sigma - \rho)\ket{\phi}\ge0.
\end{align}
This implies the following estimate for any $\lambda\in\mathcal{L}(\rho||\sigma)$:
\begin{align}
\inf_{\ket{\phi}}\bra{\phi}(\lambda\sigma - \rho)\ket{\phi}& = \inf_{\ket{\phi}}\left[\bra{\phi}(\lambda\sigma - \rho')\ket{\phi} + \bra{\phi}(\rho'-\rho)\ket{\phi}\right]\nonumber\\
&\ge \inf_{\ket{\phi}}\bra{\phi}(\lambda\sigma - \rho')\ket{\phi} + \inf_{\ket{\phi}}\bra{\phi}(\rho'-\rho)\ket{\phi}\nonumber\\
&\ge \inf_{\ket{\phi}}\bra{\phi}(\lambda\sigma - \rho')\ket{\phi} - \norm{\rho - \rho'}_1,
\end{align}
where in the last line we use the relation $\inf_{\ket{\phi}}\bra{\phi}(\rho'-\rho)\ket{\phi} = -\sup_{\ket{\phi}}\bra{\phi}(\rho-\rho')\ket{\phi}\ge-\norm{\rho - \rho'}_\infty\ge-\norm{\rho - \rho'}_1$ (recall that $\norm{\cdot}_\infty\coloneqq\sup_{\ket{\psi}}|\bra{\psi}\cdot\ket{\psi}|$ and $\norm{\cdot}_\infty\le\norm{\cdot}_1$).
Since the argument also works when we exchange the roles of $\rho$ and $\rho'$, we conclude the following bound:
\begin{align}\label{Eq:EstimateRhoRho}
\left|\inf_{\ket{\phi}}\bra{\phi}(\lambda\sigma - \rho')\ket{\phi} - \inf_{\ket{\phi}}\bra{\phi}(\lambda\sigma - \rho)\ket{\phi}\right| \le \norm{\rho - \rho'}_1.
\end{align} 
With the help of the above bound, we have the following computation for a given $\lambda\in\mathcal{L}(\rho||\sigma)$:
\begin{align}
0\le\inf_{\ket{\phi}}\bra{\phi}(\lambda\sigma - \rho)\ket{\phi}&\le\inf_{\ket{\phi}}\bra{\phi}(\lambda\sigma - \rho')\ket{\phi} + \norm{\rho - \rho'}_1\nonumber\\
&=\inf_{\ket{\phi}}\bra{\phi}\left[\left(\lambda + \frac{\norm{\rho - \rho'}_1}{\bra{\phi}\sigma\ket{\phi}}\right)\sigma - \rho'\right]\ket{\phi}\nonumber\\
&\le\inf_{\ket{\phi}}\bra{\phi}\left[\left(\lambda + \frac{\norm{\rho - \rho'}_1}{p_{\rm min}(\sigma)}\right)\sigma - \rho'\right]\ket{\phi},
\end{align}
where in the second line we have $\bra{\phi}\sigma\ket{\phi}>0$ since $\sigma$ is full-rank and hence has no zero eigenvalue.
Also, $\bra{\phi}\sigma\ket{\phi}$ is lower bounded by $p_{\rm min}(\sigma)=\inf_{\ket{\psi}}\bra{\psi}\sigma\ket{\psi}$ for all $\ket{\phi}$.
This computation implies $\lambda + \frac{\norm{\rho - \rho'}_1}{p_{\rm min}(\sigma)}\in\mathcal{L}(\rho'||\sigma)$ whenever $\lambda\in\mathcal{L}(\rho||\sigma)$ [recall Eq.~\eqref{Eq:IffCondition}].
As a consequence, we have
\begin{align}
2^{D_{\rm max}(\rho'||\sigma)}& = \inf_{\lambda\in\mathcal{L}(\rho'||\sigma)}\lambda\nonumber\\
&\le\inf_{\lambda\in\mathcal{L}(\rho||\sigma)}\left(\lambda + \frac{\norm{\rho - \rho'}_1}{p_{\rm min}(\sigma)}\right)\nonumber\\
& = 2^{D_{\rm max}(\rho||\sigma)} + \frac{\norm{\rho - \rho'}_1}{p_{\rm min}(\sigma)}. 
\end{align}
Then the desired bound can be proved by exchanging the roles of $\rho$ and $\rho'$ and apply the same arguement again.
\end{proof}
As a remark, we note that the above lemma implies the Lipschitz continuity of the function $2^{D_{\rm max}(\cdot||\sigma)}$ when $\sigma$ is full-rank.
With Theorem~\ref{Thm:Carlo} and Lemma~\ref{Lemma:Continuity-D_max} in hand, we now start the proof of Theorem~\ref{Coro:BathSize}.
\begin{proof}
When we consider the $R$-theory of athermality equipped with Gibbs-preserving maps, the only absolutely $R$-annihilating channel (with output space ${\rm A}$ and output dimension $d^k$) is the full thermalization channel $\Phi_{\gamma_{\rm A}}:(\cdot)\mapsto\gamma_{\rm A}$, where $\gamma_{\rm A} = \gamma^{\otimes k}$ and $\gamma$ is the given thermal state in the system ${\rm S}$.
Then direct computation shows [recall Eqs.~\eqref{Eq:supA} and~\eqref{Eq:P_D} for notations]
\begin{align}
P_{D_{\rm max}}(\mathcal{N})&\coloneqq\inf_{\Lambda_{\rm S}\in\mathcal{O}_R^N}\overline{\sup_{\rm A}}D_{\rm max}[(\mathcal{N}\otimes\widetilde{\Lambda}_{\rm A})(\rho_{\rm SA})||(\Lambda_{\rm S}\otimes\widetilde{\Lambda}_{\rm A})(\rho_{\rm SA})]\nonumber\\
& = \sup_{{\rm A};\rho_{\rm SA}}D_{\rm max}[(\mathcal{N}\otimes\Phi_{\gamma_{\rm A}})(\rho_{\rm SA})||\Phi_{\gamma\otimes\gamma_{\rm A}}(\rho_{\rm SA})]\nonumber\\
& =\sup_{{\rm A};\rho_{\rm S}}D_{\rm max}[\mathcal{N}(\rho_{\rm S})\otimes\gamma_{\rm A}||\gamma\otimes\gamma_{\rm A}]\nonumber\\
&=\sup_{\rho} D_{\rm max}[\mathcal{N}(\rho)||\gamma],
\end{align}
where the last equality follows from the fact that for an operator $A$ and a positive operator $E$ we have $A\ge0$ if and only if $A\otimes E\ge0$, which implies the relation $D_{\rm max}(\rho\otimes\eta||\sigma\otimes\eta) = D_{\rm max}(\rho||\sigma)$ for all quantum states $\rho,\sigma,\eta$.
Together with Theorem~\ref{Thm:Carlo} and the quantity defined in Eq.~\eqref{Eq:BathSizeChannel}, we conclude that
\begin{align}
\mathcal{B}^\epsilon(\mathcal{N})&\coloneqq\sup_\rho n_\epsilon[\mathcal{N}(\rho)] - 1\nonumber\\
&\le\sup_\rho\frac{1}{\epsilon^2}2^{D_{\rm max}[\mathcal{N}(\rho)||\gamma]}\nonumber\\
&=\frac{1}{\epsilon^2}2^{P_{D_{\rm max}}(\mathcal{N})},
\end{align}
and the upper bound is proved.

To see the lower bound, first we note that being coherence-annihilating for the given Gibbs-preserving channel $\mathcal{N}$ means $\mathcal{N}(\rho)$ is diagonal in the energy eigenbasis for all inputs $\rho$.
Applying Theorem~\ref{Thm:Carlo} and Lemma~\ref{Lemma:Continuity-D_max}, one can conclude that
\begin{align}
1 + \mathcal{B}^{\epsilon}(\mathcal{N})&\coloneqq\sup_\rho n_{\epsilon}[\mathcal{N}(\rho)]\nonumber\\
&\ge\sup_\rho 2^{D_{\rm max}^{\sqrt{\epsilon}}[\mathcal{N}(\rho)||\gamma]}\nonumber\\
&=\sup_\rho\inf_{\frac{1}{2}\norm{\rho' - \mathcal{N}(\rho)}_1\le\sqrt{\epsilon}}2^{D_{\rm max}(\rho'||\gamma)}\nonumber\\
&\ge\sup_{\rho}\left(2^{D_{\rm max}[\mathcal{N}(\rho)|\gamma]} - \frac{2\sqrt{\epsilon}}{p_{\rm min}(\gamma)}\right)\nonumber\\
&=2^{P_{D_{\rm max}}(\mathcal{N})} - \frac{2\sqrt{\epsilon}}{p_{\rm min(\gamma)}},
\end{align}
and the proof is completed.
\end{proof}

\section{Non-Signalling Assisted Classical Communication Scenario}\label{App:CC}
Recently, the application of a channel resource theory to classical communication scenario has been addressed~\cite{Takagi2019-3}.
The central question is how much classical information (in terms of classical bits) can be transmitted noiselessly (or up to certain error) via a channel of the following form:
\begin{align}\label{Eq:Non-signaling}
\mathcal{E}_d\circ(\mathcal{N}\otimes\mathcal{I}_{\rm A})\circ\mathcal{E}_e,
\end{align}
where $\mathcal{N}$ is the noisy channel which is sending the information, and $\mathcal{E}_e,\mathcal{E}_d$ are the encoding and the decoding channels, respectively.
Note that we have $\mathcal{E}_e:{\rm C\to SA}$ and $\mathcal{E}_d:{\rm SA\to C}$, where ${\rm C}$ is the space for the classical bits, which are represented by the orthonormal basis $\{\ket{m}\}_{m=0}^{M-1}$.
As explained in Ref.~\cite{Takagi2019-3}, this structure gives the non-signalling assisted classical communication scenario: The transmission of the classical information is assisted by any possible non-signalling structure, which has the general form given by Eq.~\eqref{Eq:Non-signaling}.
In this case, both $\mathcal{E}_e,\mathcal{E}_d$ can be arbitrary channels.

To quantify how much classical information is transmitted successfully within a given error $\epsilon\in(0,1)$, we consider the following averaged error associated to a given combination of encoding channel $\mathcal{E}_e$, decoding channel $\mathcal{E}_d$, and transmitting channel $\mathcal{N}$~\cite{Takagi2019-3}:
\begin{align}
\varepsilon(\mathcal{E}_e,\mathcal{E}_d,\mathcal{N})\coloneqq 1 - \frac{1}{M}\sum_{m=0}^{M-1}\bra{m}\mathcal{E}_d\circ(\mathcal{N}\otimes\mathcal{I}_{\rm A})\circ\mathcal{E}_e(\proj{m})\ket{m}.
\end{align}
Then one can define the corresponding single-shot classical capacity with error $\epsilon$ as~\cite{Takagi2019-3}:
\begin{align}
C^\epsilon_{\rm NS, (1)}(\mathcal{N})\coloneqq\sup_{\mathcal{E}_e,\mathcal{E}_d}\{\log_2M\,|\,\varepsilon(\mathcal{E}_e,\mathcal{E}_d,\mathcal{N})\le\epsilon\}.
\end{align}
Note that in the non-signalling assisted scenario, the above optimization is taken over all possible channels $\mathcal{E}_e,\mathcal{E}_d$.
More precisely, the encoding map $\mE_e$ can be understood as an effectively ``classical-quantum'' channel because we only consider inputs of the form $\proj{m}$.
Similarly, the decoding map $\mE_d$ can be interpreted as an effectively ``quantum-classical'' channel. 
The above classical capacity indicates the optimal amount of classical bits that can be transmitted and recovered within the given error $\epsilon$ when the only constraint is the non-signalling condition.
From here one can observe that the setup in Sec.~\ref{Sec:CC} is equivalent to the above setup plus the two imposed thermodynamic constraints.

\section{Proof of Theorem~\ref{Result:ConverseBoundAthermality}}\label{App:ConverseBoundAthermality}
The strategy is to follow the spirit of the proof of Theorem 3 in Ref.~\cite{Takagi2019-3}. 
Before the main proof, we need to establish two facts.
First, recall from Eq.~\eqref{Eq:AveError} the following expression for a given combination $(\mathcal{N},\mathcal{E}_e,\mathcal{E}_d,\gamma_{\rm A})$:
\begin{align}
1 - \varepsilon(\mathcal{N},\mathcal{E}_e,\mathcal{E}_d,\gamma_{\rm A}) = \frac{1}{M}\sum_{m=0}^{M-1}\bra{m}\mathcal{E}_d\circ(\mathcal{N}\otimes\Phi_{\gamma_{\rm A}})\circ\mathcal{E}_e(\proj{m})\ket{m}
\end{align}
Here we note that $\gamma_{\rm A} = \gamma^{\otimes k}$ for a positive integer $k$, where $\gamma$ is the thermal state associated with the given $R$-theory of athermality in the main system ${\rm S}$.
Again, $\Phi_{\gamma_{\rm A}}:(\cdot)\mapsto\gamma_{\rm A}$ is the full thermalization (or, equivalently, the state preparation channel) with the target thermal state $\gamma_{\rm A}$.
We remark that throughout this section we assume the channels $\mathcal{N},\mathcal{N}'$ have the same output space ${\rm S}$.
Now, we note the following result, which is similar to Lemma 4 in Ref.~\cite{Takagi2019-3}:
\begin{afact}\label{Fact:0}
Given Gibbs-preserving channels $\mathcal{N},\mathcal{N}',\mathcal{E}_e,\mathcal{E}_d$, and a thermal state $\gamma_{\rm A}$.
Then we have
\begin{align}
\left|\sup_{\mathcal{E}\in\mathcal{O}_R}[1 - \varepsilon(\mathcal{N},\mathcal{E}_e,\mathcal{E},\gamma_{\rm A})] - \sup_{\mathcal{E}\in\mathcal{O}_R}[1 - \varepsilon(\mathcal{N}',\mathcal{E}_e,\mathcal{E},\gamma_{\rm A})]\right| \le \frac{1}{2}\norm{\mathcal{N} - \mathcal{N}'}_\diamond.
\end{align}
\end{afact}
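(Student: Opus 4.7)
My plan is to recast $1-\varepsilon(\mathcal{N},\mathcal{E}_e,\mathcal{E},\gamma_{\rm A})$ as the success probability of a single composite channel from ${\rm C}$ to ${\rm C}$ and then reduce the statement to a data-processing bound for the diamond norm. I would introduce $\mathcal{P}_\mathcal{N}\coloneqq\mathcal{E}\circ(\mathcal{N}\otimes\Phi_{\gamma_{\rm A}})\circ\mathcal{E}_e$ and, analogously, $\mathcal{P}_{\mathcal{N}'}$, so that
\begin{align}
1-\varepsilon(\mathcal{N},\mathcal{E}_e,\mathcal{E},\gamma_{\rm A})=\frac{1}{M}\sum_{m=0}^{M-1}\bra{m}\mathcal{P}_\mathcal{N}(\proj{m})\ket{m},
\end{align}
and similarly for $\mathcal{N}'$. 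Applying the elementary inequality $|\sup_x f(x)-\sup_x g(x)|\le\sup_x|f(x)-g(x)|$ to these two $\mathcal{E}$-dependent functions would reduce the claim to an $\mathcal{E}$-uniform bound of the pointwise difference by $\frac{1}{2}\norm{\mathcal{N}-\mathcal{N}'}_\diamond$.

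For each fixed $\mathcal{E}$, the next step is to apply the standard POVM trace-distance bound $|{\rm tr}[E(\rho-\sigma)]|\le\frac{1}{2}\norm{\rho-\sigma}_1$, valid for all $0\le E\le\id$, termwise with $E=\proj{m}$ and $\rho-\sigma=\mathcal{P}_\mathcal{N}(\proj{m})-\mathcal{P}_{\mathcal{N}'}(\proj{m})$. This yields an $m$-wise bound
\begin{align}
\bigl|\bra{m}(\mathcal{P}_\mathcal{N}-\mathcal{P}_{\mathcal{N}'})(\proj{m})\ket{m}\bigr|\le\tfrac{1}{2}\norm{\mathcal{P}_\mathcal{N}(\proj{m})-\mathcal{P}_{\mathcal{N}'}(\proj{m})}_1\le\tfrac{1}{2}\norm{\mathcal{P}_\mathcal{N}-\mathcal{P}_{\mathcal{N}'}}_\diamond,
\end{align}
which, after averaging over $m$ and using the triangle inequality, reduces the problem to proving the single data-processing estimate $\norm{\mathcal{P}_\mathcal{N}-\mathcal{P}_{\mathcal{N}'}}_\diamond\le\norm{\mathcal{N}-\mathcal{N}'}_\diamond$.

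The latter bound itself I would split into two pieces. Pre- and post-composition with the channels $\mathcal{E}_e,\mathcal{E}$ can only decrease the diamond norm, since $\mathcal{E}_e\otimes\mathcal{I}$ and $\mathcal{E}\otimes\mathcal{I}$ remain CPTP and are therefore trace-norm contractive on any reference. What remains is the identity $\norm{(\mathcal{N}-\mathcal{N}')\otimes\Phi_{\gamma_{\rm A}}}_\diamond=\norm{\mathcal{N}-\mathcal{N}'}_\diamond$. Here a short direct computation, using that $\Phi_{\gamma_{\rm A}}$ is a state-preparation channel, shows that for any reference ${\rm R}$ and any $\rho_{\rm SAR}$,
\begin{align}
\bigl((\mathcal{N}-\mathcal{N}')\otimes\Phi_{\gamma_{\rm A}}\otimes\mathcal{I}_{\rm R}\bigr)(\rho_{\rm SAR})=\bigl[(\mathcal{N}-\mathcal{N}')\otimes\mathcal{I}_{\rm R}\bigr]({\rm tr}_{\rm A}\rho_{\rm SAR})\otimes\gamma_{\rm A},
\end{align}
whose trace norm coincides with that of $[(\mathcal{N}-\mathcal{N}')\otimes\mathcal{I}_{\rm R}]({\rm tr}_{\rm A}\rho_{\rm SAR})$; since ${\rm tr}_{\rm A}\rho_{\rm SAR}$ ranges over all states on ${\rm SR}$, the supremum recovers $\norm{\mathcal{N}-\mathcal{N}'}_\diamond$ exactly.

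The only mildly delicate point I anticipate is this final computation, because the preparation channel $\Phi_{\gamma_{\rm A}}$ lives on an ancilla ${\rm A}$ that is separate from the diamond-norm reference ${\rm R}$, so one has to unpack carefully how the $\gamma_{\rm A}$ factor splits off and then verify that tracing over ${\rm A}$ does not shrink the variational class defining the diamond norm. Everything else is a routine assembly of well-known ingredients: the POVM inequality for trace distance, contractivity of the diamond norm under composition with channels, and the standard $|\sup-\sup|\le\sup|\cdot|$ continuity trick.
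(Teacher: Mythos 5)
Your proposal is correct and follows essentially the same route as the paper's proof: both reduce the claim to the POVM/trace-distance bound applied termwise to the difference $(\mathcal{N}-\mathcal{N}')\otimes\Phi_{\gamma_{\rm A}}$ and then invoke data processing to remove the encoder, decoder, and thermal ancilla. The only differences are organizational — the paper works with an explicit near-optimizing sequence $\mathcal{E}^{(k)}$ and pulls the measurement back through $\mathcal{E}^{(k),\dagger}$ to form a POVM, whereas you use the $|\sup f-\sup g|\le\sup|f-g|$ trick and contractivity of the diamond norm in the forward picture — and these are equivalent.
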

\begin{proof}
We assume $\sup_{\mathcal{E}\in\mathcal{O}_R}[1 - \varepsilon(\mathcal{N},\mathcal{E}_e,\mathcal{E},\gamma_{\rm A})] \ge \sup_{\mathcal{E}\in\mathcal{O}_R}[1 - \varepsilon(\mathcal{N}',\mathcal{E}_e,\mathcal{E},\gamma_{\rm A})]$ without loss of generality.
For every positive integer $k\in\mathbb{N}$, let $\mathcal{E}^{(k)}$ be the Gibbs-preserving map satisfying $\sup_{\mathcal{E}\in\mathcal{O}_R}[1 - \varepsilon(\mathcal{N},\mathcal{E}_e,\mathcal{E},\gamma_{\rm A})]\le 1 - \varepsilon(\mathcal{N},\mathcal{E}_e,\mathcal{E}^{(k)},\gamma_{\rm A}) + \frac{1}{k}$.
Then we have
\begin{align}
&\left|\sup_{\mathcal{E}\in\mathcal{O}_R}[1 - \varepsilon(\mathcal{N},\mathcal{E}_e,\mathcal{E},\gamma_{\rm A})] - \sup_{\mathcal{E}\in\mathcal{O}_R}[1 - \varepsilon(\mathcal{N}',\mathcal{E}_e,\mathcal{E},\gamma_{\rm A})]\right|\nonumber\\
&\le \frac{1}{k} + 1 - \varepsilon(\mathcal{N},\mathcal{E}_e,\mathcal{E}^{(k)},\gamma_{\rm A}) - [1 - \varepsilon(\mathcal{N}',\mathcal{E}_e,\mathcal{E}^{(k)},\gamma_{\rm A})]\nonumber\\
& = \frac{1}{k} + \frac{1}{M}\sum_{m=0}^{M-1}{\rm tr}[E^{(k)}_m ((\mathcal{N} - \mathcal{N}')\otimes\Phi_{\gamma_{\rm A}})\circ\mathcal{E}_e(\proj{m})],
\end{align}
where $\{E^{(k)}_m\}_{m=0}^{M-1}$ is a POVM defined by $E^{(k)}_m\coloneqq\mathcal{E}^{(k),\dagger}(\proj{m})$ since each $\mathcal{E}^{(k),\dagger}$ is a completely-positive unital map.
Following the proof of Lemma 4 in Ref.~\cite{Takagi2019-3}, we note the following estimate:
\begin{align}
\norm{\mathcal{N} - \mathcal{N}'}_\diamond \ge \norm{(\mathcal{N} - \mathcal{N}')(\rho)}_1 = \sup_{0\le E\le\id}2{\rm tr}[E(\mathcal{N} - \mathcal{N}')(\rho)].
\end{align}
Then we conclude that
\begin{align}
\left|\sup_{\mathcal{E}\in\mathcal{O}_R}[1 - \varepsilon(\mathcal{N},\mathcal{E}_e,\mathcal{E},\gamma_{\rm A})] - \sup_{\mathcal{E}\in\mathcal{O}_R}[1 - \varepsilon(\mathcal{N}',\mathcal{E}_e,\mathcal{E},\gamma_{\rm A})]\right|&\le\frac{1}{k} + \frac{1}{2M}\sum_{m=0}^{M-1}\norm{(\mathcal{N} - \mathcal{N}')\otimes\Phi_{\gamma_{\rm A}}}_\diamond\nonumber\\
&\le\frac{1}{k} + \frac{1}{2}\norm{\mathcal{N} - \mathcal{N}'}_\diamond,
\end{align}
where the last inequality follows from the data-processing inequality of the trace norm (or, equivalently, the contractivity under quantum channels).
Since this argument works for all positive integer $k$, the desired upper bound is proved.
\end{proof}
We still need to show another fact, which is similar to Theorem 5 in Ref.~\cite{Takagi2019}:
\begin{afact}\label{Fact:1}
\begin{align}
1 - \varepsilon(\mathcal{N},\mathcal{E}_e,\mathcal{E}_d,\gamma_{\rm A}) \le \frac{1}{M}\times2^{\bar{P}_{D_{\rm max}}(\mathcal{N})}.
\end{align}
\end{afact}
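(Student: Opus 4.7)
The plan is to exploit the observation that in the athermality theory the set $\mathcal{O}_R^N$ consists solely of the full thermalization $\Phi_\gamma:(\cdot)\mapsto\gamma$, which collapses the definition of $\bar{P}_{D_{\rm max}}(\mathcal{N})$ to a single max-relative-entropy quantity and, via Fact~\ref{Fact:AlternativExpression}, promotes it to a uniform operator inequality on $\mathcal{N}$.

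First I would note that with $\mathcal{O}_R^N=\{\Phi_\gamma\}$ the infimum in the definition of $\bar{P}_{D_{\rm max}}$ disappears and the quantity reduces to $\sup_{\rm A;\rho_{\rm SA}} D_{\rm max}[(\mathcal{N}\otimes\mathcal{I}_{\rm A})(\rho_{\rm SA})\,\|\,\gamma\otimes\rho_{\rm A}]$, where $\rho_{\rm A}={\rm tr}_{\rm S}(\rho_{\rm SA})$. Applying Fact~\ref{Fact:AlternativExpression} to this expression gives the uniform operator bound
$$(\mathcal{N}\otimes\mathcal{I}_{\rm A})(\rho_{\rm SA})\,\le\,2^{\bar{P}_{D_{\rm max}}(\mathcal{N})}\,\gamma\otimes\rho_{\rm A}\qquad\forall\,{\rm A},\rho_{\rm SA},$$
and specialising to a trivial ancillary system yields the state-independent estimate $\mathcal{N}(\rho_{\rm S})\le 2^{\bar{P}_{D_{\rm max}}(\mathcal{N})}\gamma$ for every input $\rho_{\rm S}$.

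Next I would use the identity $(\mathcal{N}\otimes\Phi_{\gamma_{\rm A}})(\rho_{\rm SA})=\mathcal{N}(\rho_{\rm S})\otimes\gamma_{\rm A}$ (immediate from $\Phi_{\gamma_{\rm A}}(X)=\gamma_{\rm A}\,{\rm tr}(X)$) to lift the previous bound to the full communication pipeline, giving
$$(\mathcal{N}\otimes\Phi_{\gamma_{\rm A}})\circ\mathcal{E}_e(\proj{m})\,\le\,2^{\bar{P}_{D_{\rm max}}(\mathcal{N})}\,\gamma\otimes\gamma_{\rm A}$$
for each $m$. Because $\mathcal{E}_d$ is completely positive the operator order is preserved under its action, so taking the $\bra{m}\cdot\ket{m}$ expectation value, averaging over $m=0,\ldots,M-1$, and using the elementary diagonal-sum bound $\sum_{m=0}^{M-1}\bra{m}\mathcal{E}_d(\gamma\otimes\gamma_{\rm A})\ket{m}\le{\rm tr}[\mathcal{E}_d(\gamma\otimes\gamma_{\rm A})]=1$ produces the claimed inequality $1-\varepsilon\le M^{-1}\,2^{\bar{P}_{D_{\rm max}}(\mathcal{N})}$.

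The argument is short once the opening reduction is in place, so I do not anticipate any real obstacle. The one step that deserves attention is the appeal to Fact~\ref{Fact:AlternativExpression}: it is precisely this exchange of sup and inf that converts a state-by-state max-relative-entropy estimate into the single operator inequality which can then survive the partial-trace-like action of $\Phi_{\gamma_{\rm A}}$ on the ancilla. Note that the Gibbs-preservation of $\mathcal{E}_d$ plays no role in the last step beyond trace preservation, so the same reasoning would actually work with any trace-preserving decoding map.
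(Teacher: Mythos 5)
Your proof is correct and follows essentially the same route as the paper: both arguments reduce, via the fact that $\Phi_\gamma$ is the unique $R$-annihilating channel with output ${\rm S}$, to the operator domination $\mathcal{N}(\rho)\le 2^{\bar{P}_{D_{\rm max}}(\mathcal{N})}\gamma$, push it through $\mathcal{E}_d$ by complete positivity, and bound the diagonal sum by the trace. The only presentational difference is that you extract this domination directly from the $D_{\rm max}$ definition via Fact~\ref{Fact:AlternativExpression}, whereas the paper obtains it from the equivalent convex-decomposition (robustness) form $q_k\mathcal{N}+(1-q_k)\mathcal{C}_k=\Phi_\gamma$ with a $1/k$ approximation handling non-attainment of the optimum.
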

\begin{proof}
Follow the proof of Theorem 5 in Ref.~\cite{Takagi2019}, we first recall from Eq.~\eqref{Eq:Robustness} that with a $R$-theory of athermality we have (note that the output space of the channel $\mathcal{N}$ is ${\rm S}$, and $\gamma$ is the corresponding thermal state)
\begin{align}
\bar{P}_{D_{\rm max}}(\mathcal{N}) = -\log_2\sup\{p\in[0,1]\;|\;p\mathcal{N} + (1-p)\mathcal{C} = \Phi_{\gamma}\},
\end{align}
where the maximization is taken over all channels $\mathcal{C}$.
Then for every $k\in\mathbb{N}$, there exists a channel $\mathcal{C}_k$ and a value $q_k$ such that 
\begin{align}
|\bar{P}_{D_{\rm max}}(\mathcal{N}) - \log_2{\frac{1}{q_k}}|\le\frac{1}{k}\quad\&\quad q_k\mathcal{N} + (1 - q_k)\mathcal{C}_k = \Phi_{\gamma}.
\end{align}
Then we have (note that $\mathcal{N} = \frac{1}{q_k}\Phi_{\gamma} - \frac{1 - q_k}{q_k}\mathcal{C}_k$)
\begin{align}
1 - \varepsilon(\mathcal{N},\mathcal{E}_e,\mathcal{E}_d,\gamma_{\rm A})&=\frac{1}{M}\sum_{m=0}^{M-1}\bra{m}\mathcal{E}_d\circ(\mathcal{N}\otimes\Phi_{\gamma_{\rm A}})\circ\mathcal{E}_e(\proj{m})\ket{m}\nonumber\\
&\le\frac{1}{Mq_k}\sum_{m=0}^{M-1}\bra{m}\mathcal{E}_d\circ\Phi_{\gamma\otimes\gamma_{\rm A}}\circ\mathcal{E}_e(\proj{m})\ket{m}\nonumber\\
&=\frac{1}{Mq_k}\sum_{m=0}^{M-1}\bra{m}\mathcal{E}_d(\gamma\otimes\gamma_{\rm A})\ket{m}\nonumber\\
&\le\frac{1}{M}\times2^{\left[\frac{1}{k} + \bar{P}_{D_{\rm max}}(\mathcal{N})\right]},
\end{align}
which is true for all $k\in\mathbb{N}$.
This means the desired upper bound.
\end{proof}

Now we are ready to prove Theorem~\ref{Result:ConverseBoundAthermality}.
\begin{proof}
For a channel $\mathcal{N}'$ satisfying $\norm{\mathcal{N}' - \mathcal{N}}_\diamond\le2\delta$ with $0<\delta<1$, we have the estimate $\norm{\mathcal{N}'\otimes\Phi_{\gamma_{\rm A}} - \mathcal{N}\otimes\Phi_{\gamma_{\rm A}}}_\diamond\le2\delta$ due to the data-processing inequality of the trace norm.
Suppose the combination $(\mathcal{N},\mathcal{E}_e,\mathcal{E}_d,\gamma_{\rm A})$ satisfies $\varepsilon(\mathcal{N},\mathcal{E}_e,\mathcal{E}_d,\gamma_{\rm A})\le\epsilon$ for a given $0<\epsilon<1$.
Together with Facts~\ref{Fact:0} and~\ref{Fact:1}, we have
\begin{align}
1-\epsilon\le1 - \varepsilon(\mathcal{N},\mathcal{E}_e,\mathcal{E}_d,\gamma_{\rm A}) &\le\sup_{\mathcal{E}\in\mathcal{O}_R}[1 - \varepsilon(\mathcal{N},\mathcal{E}_e,\mathcal{E},\gamma_{\rm A})]\nonumber\\
&\le \sup_{\mathcal{E}\in\mathcal{O}_R}[1 - \varepsilon(\mathcal{N}',\mathcal{E}_e,\mathcal{E},\gamma_{\rm A})] + \delta\nonumber\\
&\le\frac{1}{M}\times2^{\bar{P}_{D_{\rm max}}(\mathcal{N}')} + \delta.
\end{align}
This implies
$
\log_2M\le\bar{P}_{D_{\rm max}}(\mathcal{N}') + \log_2{\frac{1}{1 - \epsilon - \delta}}.
$
Since the argument works for all possible $M$ and $\mathcal{N}'$, this means the desired bound.
\end{proof}

\newpage
\section{Proof of Theorem~\ref{Result:NoCorrelationThreshold}}\label{App:closetoB}
In this appendix, we will show a proposition which has Theorem~\ref{Result:NoCorrelationThreshold} as a direct corollary.
Before the main proof, we first prove the following fact for the $R$-theory of entanglement (and we write $R={\rm E}$).
\begin{afact}\label{Fact}
With given input/output dimensions, $\mathcal{O}_{\rm E}^N$ is convex and compact in the topology induced by the diamond norm $\norm{\cdot}_\diamond$.
\end{afact}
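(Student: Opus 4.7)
The plan is to split the claim into convexity and compactness and handle each condition defining $\mathcal{O}_{\rm E}^N$ separately.

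Convexity is immediate. LOSR channels are closed under convex combinations (directly from the integral representation in Appendix~\ref{App:LOSR}, or from property~\ref{Def:Proper:Tensor}). If $\mathcal{E}_1,\mathcal{E}_2\in\mathcal{O}_{\rm E}^N$ and $p\in[0,1]$, then for any input $\rho$ the output $p\mathcal{E}_1(\rho)+(1-p)\mathcal{E}_2(\rho)$ is a convex combination of separable states and hence separable, so $p\mathcal{E}_1+(1-p)\mathcal{E}_2$ is again entanglement-annihilating.

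For compactness I would exploit finite dimensionality of the input and output spaces: the ambient space of linear maps is then finite-dimensional, the diamond norm is equivalent to every other norm on it, and compactness reduces via Heine--Borel to closedness (boundedness is immediate since every channel has $\norm{\cdot}_\diamond=1$). The closedness of the entanglement-annihilating condition is the easy piece: given a sequence $\{\mathcal{E}_n\}\subset\mathcal{O}_{\rm E}^N$ with $\mathcal{E}_n\to\mathcal{E}$ in diamond norm, the bound $\norm{\mathcal{E}_n(\rho)-\mathcal{E}(\rho)}_1\le\norm{\mathcal{E}_n-\mathcal{E}}_\diamond$ implies trace-norm convergence of the outputs; since each $\mathcal{E}_n(\rho)$ is separable and the set of separable states is closed in trace norm in finite dimensions, $\mathcal{E}(\rho)$ is separable as well.

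The main obstacle is closedness of the LOSR set itself, because the definition in Appendix~\ref{App:LOSR} uses a continuous probability distribution, so at first sight it is not clear that a diamond-norm limit of LOSR channels remains LOSR. My plan is to reduce this to a Carath\'eodory argument. First, the set $\mathcal{P}$ of product channels $\mathcal{E}_A\otimes\mathcal{E}_B$ is the image of the compact set of pairs of local channels under the continuous tensor-product map, and hence is compact. Every LOSR channel lies in the convex hull $\mathrm{conv}(\mathcal{P})$, and by Carath\'eodory's theorem in the finite-dimensional ambient space each element of $\mathrm{conv}(\mathcal{P})$ can be written as a convex combination of at most $D+1$ elements of $\mathcal{P}$, where $D$ is the ambient dimension. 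Since the convex hull of a compact subset of a finite-dimensional normed space is compact and in particular closed, the LOSR set is closed; intersecting it with the closed entanglement-annihilating condition yields $\mathcal{O}_{\rm E}^N$ closed, hence compact.
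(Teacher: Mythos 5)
Your proof is correct, and its skeleton coincides with the paper's: both reduce compactness to boundedness plus closedness via finite dimensionality and Heine--Borel, and both establish that the entanglement-annihilating condition survives diamond-norm limits by combining the bound $\norm{\mathcal{E}_n(\rho)-\mathcal{E}(\rho)}_1\le\norm{\mathcal{E}_n-\mathcal{E}}_\diamond$ with the trace-norm closedness of the set of separable states. The genuine difference is your treatment of the LOSR constraint. The paper's proof argues by contradiction that a limit point $\Lambda\notin\mathcal{O}_{\rm E}^N$ must have an entangled output for some input $\rho_0$, which tacitly assumes that the only way the limit can leave $\mathcal{O}_{\rm E}^N$ is by failing to annihilate entanglement; it never checks that a diamond-norm limit of LOSR channels is again LOSR, even though membership in $\mathcal{O}_R$ is part of the definition of $\mathcal{O}_R^N$. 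Your Carath\'eodory argument --- the product channels form a compact set $\mathcal{P}$, every LOSR channel lies in $\mathrm{conv}(\mathcal{P})$, and the convex hull of a compact subset of a finite-dimensional normed space is compact, hence closed --- supplies exactly this missing piece, so your write-up is in fact more complete than the one in the paper. One small point worth making explicit: an integral $\int(\mathcal{E}^{\rm A}_\lambda\otimes\mathcal{E}^{\rm B}_\lambda)p_\lambda\, d\lambda$ a priori lies only in the \emph{closed} convex hull of $\mathcal{P}$, but since $\mathrm{conv}(\mathcal{P})$ is already closed by your compactness argument, the two coincide and the identification of the LOSR set with $\mathrm{conv}(\mathcal{P})$ goes through.
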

\begin{proof}
By definition, $\mathcal{O}_{\rm E}^N$ is convex.
Because we only consider finite dimensional Hilbert spaces, being compact is equivalent to being bounded and closed.
Since $\norm{\Lambda - \mathcal{I}}_\diamond\le\norm{\Lambda}_\diamond + \norm{\mathcal{I}}_\diamond=2$, we learn that $\mathcal{O}_{\rm E}^N$ is bounded under the diamond norm.
Hence, it suffices to show that it is a closed set.

To prove $\mathcal{O}_{\rm E}^N$ is closed, let us suppose it was not.
Then there exists a map $\Lambda\in\overline{\mathcal{O}_{\rm E}^N}\backslash\mathcal{O}_{\rm E}^N$, where $\bar{A}$ is the closure of the set $A$.
This means there exists a sequence $\{\Lambda_k\}_{k=1}^\infty\subset\mathcal{O}_{\rm E}^N$ such that $\norm{\Lambda_k - \Lambda}_\diamond\to0$ when $k\to\infty$, and there exists an input state $\rho_0$ such that $\Lambda(\rho_0)$ is entangled.
In particular, this means
$
\norm{\Lambda_k(\rho_0) - \Lambda(\rho_0)}_1\to0
$
when $k\to\infty$; in other words, we can use the sequence $\{\Lambda_k(\rho_0)\}_{k=1}^\infty$ consisting of only separable states to approach $\Lambda(\rho_0)$ in the trace norm $\norm{\cdot}_1$.
Because the set of separable states is closed in $\norm{\cdot}_1$, we conclude that $\Lambda(\rho_0)$ is separable, which is a contradiction.
Hence, $\mathcal{O}_{\rm E}^N$ is closed in $\norm{\cdot}_\diamond$, and the proof is completed.
\end{proof}

Now we state the following result:

\begin{aproposition}
For a given pair of thermal states $(\gamma_{\rm A},\gamma_{\rm B})$, if there exists an entanglement preserving local thermalization to $(\gamma_{\rm A},\gamma_{\rm B})$, then for every $\delta>0$ there exists another entanglement preserving local thermalization $\mathcal{E}$ to $(\gamma_{\rm A},\gamma_{\rm B})$ such that
\begin{align}
\bar{P}_{\norm{\cdot}_1}(\mathcal{E})<\delta.
\end{align}
\end{aproposition}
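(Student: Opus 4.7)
My plan is to produce the required EPLT as a small perturbation of an entanglement-annihilating channel lying on the line segment between a given EPLT $\mathcal{E}_0$ and the product full-thermalization channel $\Lambda_0 := \Phi_{\gamma_{\rm A}} \otimes \Phi_{\gamma_{\rm B}}$, where $\Phi_{\gamma_{\rm X}}(\cdot) := \gamma_{\rm X}$. The key ingredients are the closedness and convexity of $\mathcal{O}_{\rm E}^N$ (Fact above), the convexity of the set of LOSR local thermalizations to $(\gamma_{\rm A},\gamma_{\rm B})$, and the direct bound $\bar{P}_{\norm{\cdot}_1}(\mathcal{E}) \le \norm{\mathcal{E} - \Lambda}_\diamond$ valid for any $\Lambda \in \mathcal{O}_{\rm E}^N$.

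First I would fix an EPLT $\mathcal{E}_0$ guaranteed by the hypothesis. The channel $\Lambda_0 := \Phi_{\gamma_{\rm A}} \otimes \Phi_{\gamma_{\rm B}}$ is a product of local channels (hence LOSR), is a local thermalization to $(\gamma_{\rm A},\gamma_{\rm B})$, and outputs the separable state $\gamma_{\rm A} \otimes \gamma_{\rm B}$ on every input, so $\Lambda_0 \in \mathcal{O}_{\rm E}^N$. Define the affine family
\begin{align}
\mathcal{E}_p := p\,\mathcal{E}_0 + (1-p)\,\Lambda_0, \qquad p \in [0,1].
\end{align}
Since both LOSR channels and local thermalizations to $(\gamma_{\rm A},\gamma_{\rm B})$ form convex sets, $\mathcal{E}_p$ is an LOSR local thermalization for every $p$.

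Next I would study $S := \{p \in [0,1] : \mathcal{E}_p \in \mathcal{O}_{\rm E}^N\}$. The identity $\mathcal{E}_{p'} = (p'/p)\,\mathcal{E}_p + (1 - p'/p)\,\Lambda_0$ for $0 < p' \le p$, combined with convexity of $\mathcal{O}_{\rm E}^N$ and $\Lambda_0 \in \mathcal{O}_{\rm E}^N$, shows $S$ is downward closed. Diamond-norm continuity of $p \mapsto \mathcal{E}_p$ together with the Fact that $\mathcal{O}_{\rm E}^N$ is closed in the diamond-norm topology makes $S$ closed in $[0,1]$. Hence $S = [0,p^*]$ with $p^* := \max S$ and $\mathcal{E}_{p^*} \in \mathcal{O}_{\rm E}^N$. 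Because $\mathcal{E}_1 = \mathcal{E}_0 \notin \mathcal{O}_{\rm E}^N$, we must have $p^* < 1$.

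Finally, given $\delta > 0$, I would pick $p \in (p^*,1]$ with $p - p^* < \delta/2$. By construction $p \notin S$, so $\mathcal{E}_p$ is not entanglement-annihilating and is therefore an EPLT to $(\gamma_{\rm A},\gamma_{\rm B})$. Plugging $\mathcal{E}_{p^*}$ into the infimum defining $\bar{P}_{\norm{\cdot}_1}$ and using $\norm{\mathcal{E}_0 - \Lambda_0}_\diamond \le 2$ yields
\begin{align}
\bar{P}_{\norm{\cdot}_1}(\mathcal{E}_p) \;\le\; \norm{\mathcal{E}_p - \mathcal{E}_{p^*}}_\diamond \;=\; (p - p^*)\,\norm{\mathcal{E}_0 - \Lambda_0}_\diamond \;\le\; 2(p - p^*) \;<\; \delta,
\end{align}
as required. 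The main subtlety (rather than a genuine obstacle) is the closedness of $S$: without the Fact above, one could not infer $\mathcal{E}_{p^*} \in \mathcal{O}_{\rm E}^N$, which is precisely what turns the geometric ``point just outside the boundary'' picture into a feasible upper bound on $\bar{P}_{\norm{\cdot}_1}$.
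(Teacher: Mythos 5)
Your proposal is correct and follows essentially the same route as the paper: the same one-parameter family $p\,\mathcal{E}_0 + (1-p)\,\Phi_{\gamma_{\rm A}}\otimes\Phi_{\gamma_{\rm B}}$, the same reliance on convexity and diamond-norm closedness of $\mathcal{O}_{\rm E}^N$ to identify the boundary point $p^*$, and the same final bound $\bar{P}_{\norm{\cdot}_1}(\mathcal{E}_p)\le\norm{\mathcal{E}_p-\mathcal{E}_{p^*}}_\diamond$. Your derivation of the interval structure of $S$ via downward closedness is a slightly cleaner bookkeeping of what the paper obtains through compactness of $\mathcal{L}([0,1])\cap\mathcal{O}_{\rm E}^N$ and continuity of $\mathcal{L}^{-1}$, but the argument is the same in substance.
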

\begin{proof}
Let $\mathcal{L}_0$ be an EPLT to $(\gamma_{\rm A},\gamma_{\rm B})$, and again let $\Phi_{\rho}:(\cdot)\mapsto\rho$ be the constant map with the output state $\rho$.
Then consider the following convex mixture
\begin{align}
\mathcal{L}(p)\coloneqq p\mathcal{L}_0 + (1-p)\Phi_{\gamma_{\rm A}}\otimes\Phi_{\gamma_{\rm B}},
\end{align}
where $p\in[0,1]$.
This map is by definition a local thermalization.
Then one can see that $\mathcal{L}(p)$ is continuous on $p$ with the diamond norm because of
\begin{align}\label{Eq:L-Estimate}
\norm{\mathcal{L}(p) - \mathcal{L}(q)}_\diamond = |p-q|\norm{\mathcal{L}_0 - \Phi_{\gamma_{\rm A}}\otimes\Phi_{\gamma_{\rm B}}}_\diamond,
\end{align}
where $\norm{\mathcal{L}_0 - \Phi_{\gamma_{\rm A}}\otimes\Phi_{\gamma_{\rm B}}}_\diamond$ is a finite positive constant independent of $p$.
Because $[0,1]$ is compact, we learn that $\mathcal{L}([0,1])$ is also compact.
Also, $\mathcal{L}([0,1])$ is by definition convex.
This means $\mathcal{L}([0,1])\cap\mathcal{O}_{\rm E}^N$ is convex and compact since $\mathcal{O}_{\rm E}^N$ is convex and compact (Fact.~\ref{Fact}).
Now we note that Eq.~\eqref{Eq:L-Estimate} also means $\mathcal{L}^{-1}$ exists and is continuous on $\mathcal{L}([0,1])$.
We therefore conclude that $\mathcal{L}^{-1}\left(\mathcal{L}([0,1])\cap\mathcal{O}_{\rm E}^N\right)$ is a connected closed sub-interval contained in $[0,1]$ and containing $0$.
This means there exists $p_0\in[0,1)$ such that
\begin{align}
\mathcal{L}^{-1}\left(\mathcal{L}([0,1])\cap\mathcal{O}_{\rm E}^N\right) = [0,p_0].
\end{align}
Note that $p_0<1$ because $\mathcal{L}_0$ is not in $\mathcal{O}_{\rm E}^N$.
Now we write 
\begin{align}
\inf_{\Lambda\in\mathcal{O}_{\rm E}^N}\norm{\mathcal{L}(p) - \Lambda}_\diamond\le\norm{\mathcal{L}(p) - \mathcal{L}(p_0)}_\diamond=|p-p_0|\norm{\mathcal{L}_0 - \Phi_{\gamma_{\rm A}}\otimes\Phi_{\gamma_{\rm B}}}_\diamond.
\end{align}
For a given $\delta>0$, by choosing 
\begin{align}
p_0<p< p_0 + \frac{\delta}{\norm{\mathcal{L}_0 - \Phi_{\gamma_{\rm A}}\otimes\Phi_{\gamma_{\rm B}}}_\diamond},
\end{align}
the corresponding $\mathcal{L}(p)$ will be an EPLT to $(\gamma_{\rm A},\gamma_{\rm B})$ due to the fact that this channel will not be in $\mathcal{O}_{\rm E}^N$, and satisfies the desired property 
\begin{align}
\bar{P}_{\norm{\cdot}_1}[\mathcal{L}(p)] = \inf_{\Lambda\in\mathcal{O}_{\rm E}^N}\norm{\mathcal{L}(p) - \Lambda}_\diamond<\delta.
\end{align}
\end{proof}
In Ref.~\cite{Hsieh2019} it was shown that bipartite EPLTs exist for every positive local temperature and finite-energy local Hamiltonian (i.e. for every pair of full-rank local thermal states).
Hence, we directly conclude that:
\begin{acorollary}
For every full-rank $\gamma_{\rm A},\gamma_{\rm B}$ and every $\delta>0$, there exists an entanglement preserving local thermalization $\mE$ to $(\gamma_{\rm A},\gamma_{\rm B})$ such that $\bar{P}_{\norm{\cdot}_1}(\mathcal{E})<\delta$.
\end{acorollary}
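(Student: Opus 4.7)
The plan is to combine the preceding Proposition with the EPLT existence result from Ref.~\cite{Hsieh2019}. The Corollary asks for an EPLT $\mathcal{E}$ to $(\gamma_{\rm A},\gamma_{\rm B})$ with $\bar{P}_{\norm{\cdot}_1}(\mathcal{E})<\delta$; the Proposition already guarantees such an $\mathcal{E}$, conditional on the existence of at least one EPLT to $(\gamma_{\rm A},\gamma_{\rm B})$. So the Corollary reduces to checking this hypothesis for every full-rank pair of local thermal states.

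First I would invoke the construction of Ref.~\cite{Hsieh2019}, which produces a bipartite EPLT to $(\gamma_{\rm A},\gamma_{\rm B})$ whenever both local thermal states are full-rank (equivalently, whenever the local temperatures are positive and the local Hamiltonians have finite energy). As remarked in the footnote accompanying Theorem~\ref{Result:LocalTher}, this construction can be read as a $(U\otimes U^*)$-twirling of the form \eqref{Eq:Twirling} composed with a generalized depolarizing channel tuned so that the local marginals are exactly $\gamma_{\rm A}$ and $\gamma_{\rm B}$. Nothing more than the bare fact of existence is required from this citation.

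Second, I would apply the Proposition with $\mathcal{L}_0$ taken to be any such EPLT furnished by Ref.~\cite{Hsieh2019}. The Proposition then outputs, for the prescribed $\delta>0$, an EPLT $\mathcal{E}$ to the same pair $(\gamma_{\rm A},\gamma_{\rm B})$ satisfying $\bar{P}_{\norm{\cdot}_1}(\mathcal{E})<\delta$, which is exactly the statement of the Corollary.

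There is no substantive obstacle at the level of the Corollary itself; the real work sits inside the Proposition, where the convexity and diamond-norm compactness of $\mathcal{O}_{\rm E}^N$ (Fact~\ref{Fact}) are used to identify a threshold parameter $p_0<1$ along the segment $p\mapsto p\mathcal{L}_0+(1-p)\Phi_{\gamma_{\rm A}}\otimes\Phi_{\gamma_{\rm B}}$ separating $R$-annihilating channels from EPLTs, after which continuity of this segment in the diamond norm lets one pick $p$ slightly above $p_0$ so as to push $\bar{P}_{\norm{\cdot}_1}$ below any prescribed $\delta$.
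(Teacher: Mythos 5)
Your proposal is correct and matches the paper's own proof exactly: the paper likewise combines the preceding Proposition with the existence of bipartite EPLTs for every full-rank pair $(\gamma_{\rm A},\gamma_{\rm B})$ established in Ref.~\cite{Hsieh2019}. Your summary of where the real work lies (inside the Proposition, via Fact~\ref{Fact} and the continuity of the segment $p\mapsto p\mathcal{L}_0+(1-p)\Phi_{\gamma_{\rm A}}\otimes\Phi_{\gamma_{\rm B}}$) is also accurate.
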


\section{Alternative Entanglement Preserving Local Thermalization}\label{App:AlternativeEPLT}
In this section, we provide a new family of EPLTs, which can be further proved to admit arbitrarily small entanglement preservability and preservation of free entanglement simultaneously at the finite temperatures (Theorem~\ref{Result:LocalTher}).

We construct this new family of EPLTs in the bipartite system ${\rm AB}$ with equal finite local dimensions indicated as $d$.
Given a positive value $\delta_i\in[0,1]$ with integer $i\in [0,d-2]$, we define the following map on the local system ${\rm X}$:
\begin{align}\label{Eq:element}
\wt{\mathcal{E}}_{\delta_i^{\rm X}}(\cdot) = (1-\delta_i^{\rm X})\tproj{i}(\cdot)\tproj{i} + \delta_i^{\rm X}\tket{i+1}\tbra{i}(\cdot)\tket{i}\tbra{i+1} +\sum_{j\neq i} \tproj{j}(\cdot)\tproj{j},
\end{align}
where we introduced the notation $\tket{n}\coloneqq\ket{d-1-n}$ and $\tE{n}^{\rm X}\coloneqq E_{d-1-n}^{\rm X}$, and the local Hamiltonians are given by $H_{\rm X} = \sum_{i=0}^{d-1} E^{\rm X}_i\proj{i}$ for ${\rm X=A,B}$.
Now we define the following family of channels (dependent of $\delta_i^{\rm X}$) acting on a local system: 
\begin{align}\label{Eq:LocalMap}
\wt{\mathcal{E}}_{\rm X}(\cdot)\coloneqq \wt{\mathcal{E}}_{\delta_{d-2}^{\rm X}}\circ\wt{\mathcal{E}}_{\delta_{d-3}^{\rm X}}\circ ...\circ\wt{\mathcal{E}}_{\delta_2^{\rm X}}\circ\wt{\mathcal{E}}_{\delta_1^{\rm X}}\circ\wt{\mathcal{E}}_{\delta_0^{\rm X}}(\cdot).
\end{align}
In Appendix~\ref{App} we prove that $\wt{\mathcal{E}}_{\rm X}$ induces a local thermalization for an appropriate choice of $\delta_i^{\rm X}$. More precisely, with the $(U\otimes U^*)$-twirling operation $\mathcal{T}$ defined in Eq.~\eqref{Eq:Twirling} we have:

\begin{alemma}\label{Prop}
For every pair $(\gamma_{\rm A},\gamma_{\rm B})$ there exists a unique vector $\{\delta_i^{\rm A};\delta_i^{\rm B}\}_{i=0}^{d-2}$ such that $(\wt{\mathcal{E}}_{\rm A}\otimes\wt{\mathcal{E}}_{\rm B})\circ\mathcal{T}$ is a local thermalization to $(\gamma_{\rm A},\gamma_{\rm B})$.
\end{alemma}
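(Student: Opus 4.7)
The plan is to exploit the dephasing structure of each $\wt{\mathcal{E}}_{\delta_i^{\rm X}}$ to reduce the problem to a classical stochastic map, invoke the fact that $\mathcal{T}$ outputs an isotropic state (and hence both marginals entering $\wt{\mathcal{E}}_{\rm A}\otimes\wt{\mathcal{E}}_{\rm B}$ are $\id/d$), and finally solve a triangular system to pin down the $\delta_i^{\rm X}$'s.

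First I would note that each elementary block $\wt{\mathcal{E}}_{\delta_i^{\rm X}}$ fully dephases in the basis $\{\tket{j}\}_{j=0}^{d-1}$ and, on diagonal states, shifts a fraction $\delta_i^{\rm X}$ of the population at $\tket{i}$ to $\tket{i+1}$ while leaving all other levels untouched. Consequently $\wt{\mathcal{E}}_{\rm X}$ always outputs a diagonal state, and on diagonal inputs it acts as a classical stochastic process completely described by the probability $p_i^{\rm X}$ sitting at $\tket{i}$ just before $\wt{\mathcal{E}}_{\delta_i^{\rm X}}$ is applied. Because $\mathcal{T}(\rho)$ is an isotropic state and hence has marginals $\id/d$ on both subsystems, it suffices to show that the $\delta_i^{\rm X}$'s can be chosen uniquely so that $\wt{\mathcal{E}}_{\rm X}(\id/d)=\gamma_{\rm X}$.

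Starting from $p_0^{\rm X}=1/d$, the recursion $p_{i+1}^{\rm X}=1/d+\delta_i^{\rm X}p_i^{\rm X}$ combined with the endpoint equation $(1-\delta_i^{\rm X})p_i^{\rm X}=\tilde{\gamma}_i^{\rm X}$, where $\tilde{\gamma}_i^{\rm X}$ denotes the thermal weight at level $\tket{i}$, yields the triangular system
\begin{align*}
\delta_i^{\rm X} \;=\; 1-\frac{\tilde{\gamma}_i^{\rm X}}{p_i^{\rm X}},\qquad p_i^{\rm X}\;=\;\frac{i+1}{d}-\sum_{j<i}\tilde{\gamma}_j^{\rm X},
\end{align*}
which determines each $\delta_i^{\rm X}$ uniquely in terms of the thermal populations of $\gamma_{\rm X}$. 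Taking partial traces then shows that the marginals of $(\wt{\mathcal{E}}_{\rm A}\otimes\wt{\mathcal{E}}_{\rm B})\circ\mathcal{T}(\rho)$ equal $(\gamma_{\rm A},\gamma_{\rm B})$ for every input $\rho$, as required.

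The main obstacle in the plan is the positivity check $\delta_i^{\rm X}\in[0,1]$. The upper bound follows at once from $\tilde{\gamma}_i^{\rm X}\ge 0$; the lower bound is equivalent to the partial-sum inequality $\sum_{j\le i}\tilde{\gamma}_j^{\rm X}\le(i+1)/d$, which is the Schur-majorization statement that the uniform distribution $\id/d$ is majorized by $\gamma_{\rm X}$, applied to the thermal weights listed from highest to lowest energy. Since $\id/d$ is majorized by every probability distribution, this holds automatically and closes the argument.
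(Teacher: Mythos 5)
Your proposal is correct and follows essentially the same route as the paper's proof in Appendix~\ref{App}: your population variable $p_i^{\rm X}$ is exactly the paper's $\Gamma_{i-1}^{\rm X}/d$, your formula $\delta_i^{\rm X}=1-\tilde{\gamma}_i^{\rm X}/p_i^{\rm X}$ is Eq.~\eqref{Eq:Uniquely-determined}, and your positivity check via the partial-sum inequality $\sum_{j\le i}\tilde{\gamma}_j^{\rm X}\le(i+1)/d$ is precisely the paper's Eq.~\eqref{SubFact}. The only cosmetic difference is that you phrase the construction as a forward population recursion while the paper proves the closed-form product expression by induction; your reduction of the marginal computation to $\wt{\mathcal{E}}_{\rm X}(\id/d)$ via trace preservation is also a slightly cleaner version of the paper's Fact~\ref{Fact:Commutative}.
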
 
We remark that the proof of the above lemma is constructive, hence $\wt{\mathcal{E}}_{\rm X}$ is explicitly known [Eq.~\eqref{Eq:GeneralFormula-delta}]. 
For a given pair of single party thermal states $(\gamma_{\rm A},\gamma_{\rm B})$, we then consider the following map:
\begin{align}\label{Eq:Small-epsilon-EPLT}
\wt{\mathcal{E}}^{\ep}_{(\gamma_{\rm A}, \gamma_{\rm B})}(\cdot)\coloneqq(1-\ep)(\wt{\mathcal{E}}_{\rm A}\otimes\wt{\mathcal{E}}_{\rm B})\circ\mathcal{T}(\cdot) + \ep\,\mathcal{T}(\cdot),
\end{align}
where $\ep\in[0,1]$ is a probability parameter whose value will be determined later.
By Lemma~\ref{Prop}, we let $(\wt{\mathcal{E}}_{\rm A}\otimes\wt{\mathcal{E}}_{\rm B})\circ\mathcal{T}$ locally thermalize the system ${\rm X}$ to the following state for ${\rm X=A,B}$~\cite{Hsieh2019}:
\begin{align}\label{Eq:eta}
\eta_{\rm X}^{\ep}\coloneqq\gamma_{\rm X} + \frac{\ep}{1 - \ep}\left(\gamma_{\rm X} - \frac{\id_{\rm X}}{d}\right).
\end{align}
One can then use exactly the same proof of Theorem 2 in Ref.~\cite{Hsieh2019} to show that $\wt{\mathcal{E}}^{\ep}_{(\gamma_{\rm A}, \gamma_{\rm B})}$ is a local thermalization to $(\gamma_{\rm A}, \gamma_{\rm B})$ when 
\begin{align}\label{Eq:range}
0\le\ep\le\ep_*\coloneqq dp_{\rm min},
\end{align}
where $p_{\rm min}$ is the smallest eigenvalue among $\gamma_{\rm A}$ and $\gamma_{\rm B}$.
Finally, a direct computation of fully entangled fraction defined in Eq.~\eqref{Eq:FEF} shows
\begin{align}
\mathcal{F}_{\rm max}[\wt{\mathcal{E}}^{\ep}_{(\gamma_{\rm A}, \gamma_{\rm B})}(\rho)]\ge(1-\ep)\bra{\Psi_d^+}(\wt{\mathcal{E}}_{\rm A}\otimes\wt{\mathcal{E}}_{\rm B})[\mathcal{T}(\rho)]\ket{\Psi_d^+} + \ep\bra{\Psi_d^+}\rho\ket{\Psi_d^+}.
\end{align}
Since $\mathcal{F}_{\rm max}(\rho)>\frac{1}{d}$ implies $\rho$ is entangled~\cite{Ent-RMP}, we conclude:
\begin{atheorem}
$\wt{\mathcal{E}}^{\ep_*}_{(\gamma_{\rm A}, \gamma_{\rm B})}$ is an EPLT when $p_{\rm min}>\frac{1}{d^2}$.
\end{atheorem}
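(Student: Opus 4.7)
The plan is to invoke the given fully-entangled-fraction (FEF) bound on a single carefully chosen test input, namely the maximally entangled state $\proj{\Psi_d^+}$, and exploit the fact that it is a fixed point of the twirling operation $\mathcal{T}$.

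First, I would recall that by construction $\mathcal{T}(\proj{\Psi_d^+}) = \proj{\Psi_d^+}$, since $\ket{\Psi_d^+}$ is invariant (up to a global phase) under all operators of the form $U\otimes U^*$. Substituting $\rho=\proj{\Psi_d^+}$ into the FEF inequality stated just before the theorem gives
\begin{align}
\mathcal{F}_{\rm max}\!\left[\wt{\mathcal{E}}^{\ep_*}_{(\gamma_{\rm A},\gamma_{\rm B})}(\proj{\Psi_d^+})\right] \ge (1-\ep_*)\bra{\Psi_d^+}(\wt{\mathcal{E}}_{\rm A}\otimes\wt{\mathcal{E}}_{\rm B})(\proj{\Psi_d^+})\ket{\Psi_d^+} + \ep_*,
\end{align}
where I also used $\bra{\Psi_d^+}\proj{\Psi_d^+}\ket{\Psi_d^+}=1$ for the second term.

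Second, I would discard the first summand on the right-hand side. This is legitimate because it is a nonnegative expectation value of a positive semidefinite operator, and because $1-\ep_*\ge 0$: indeed $\ep_*=dp_{\rm min}\le 1$, since $p_{\rm min}$ is the minimum eigenvalue of a $d$-dimensional state and must satisfy $p_{\rm min}\le 1/d$. The resulting clean bound is
\begin{align}
\mathcal{F}_{\rm max}\!\left[\wt{\mathcal{E}}^{\ep_*}_{(\gamma_{\rm A},\gamma_{\rm B})}(\proj{\Psi_d^+})\right]\ge \ep_* = dp_{\rm min}.
\end{align}

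Third, I would combine this with the hypothesis $p_{\rm min}>1/d^2$, which yields $\mathcal{F}_{\rm max}>1/d$. By the standard criterion that any bipartite state with FEF strictly greater than $1/d$ is entangled (cited repeatedly in the paper), the output state is entangled. Together with the fact established immediately before the theorem, namely that $\wt{\mathcal{E}}^{\ep}_{(\gamma_{\rm A},\gamma_{\rm B})}$ is a local thermalization to $(\gamma_{\rm A},\gamma_{\rm B})$ for all $\ep\in[0,\ep_*]$, this shows that $\wt{\mathcal{E}}^{\ep_*}_{(\gamma_{\rm A},\gamma_{\rm B})}$ both locally thermalizes for every input and preserves entanglement on at least one input, so it is an EPLT.

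There is essentially no obstacle beyond choosing the right test input. All the heavy lifting, such as constructing the local maps $\wt{\mathcal{E}}_{\rm X}$ and verifying that the convex mixture with $\mathcal{T}$ produces the correct single-party marginals up to $\ep=\ep_*$, is already absorbed in Lemma~\ref{Prop} and the local thermalization argument borrowed from Theorem~2 of Ref.~\cite{Hsieh2019}. The only slight subtlety worth flagging is that the argument gives a \emph{universal} witness: the same maximally entangled input $\ket{\Psi_d^+}$ certifies the EPLT property for every pair $(\gamma_{\rm A},\gamma_{\rm B})$ with $p_{\rm min}>1/d^2$, a feature that will matter when this family of EPLTs is used in the proof of Theorem~\ref{Result:LocalTher}.
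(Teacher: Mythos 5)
Your proof is correct and follows essentially the same route as the paper: evaluate the FEF inequality at the test input $\proj{\Psi_d^+}$, drop the nonnegative first term, and use $\ep_*=dp_{\rm min}>1/d$ together with the FEF entanglement criterion and the already-established local thermalization property. The only thing you add is making the choice of test input and the bound $\ep_*\le 1$ explicit, which the paper leaves implicit.
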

This shows Eq.~\eqref{Eq:Small-epsilon-EPLT} admits EPLTs when we select the highest $\ep$ value.
It turns out that Eq.~\eqref{Eq:Small-epsilon-EPLT} can achieve EPLTs even with arbitrarily small $\ep$ value. 
We will use this property to prove the main result in Theorem~\ref{Result:LocalTher}.

\subsection{Proof of Lemma~\ref{Prop}}\label{App}
Because we will apply mathematical induction several times in the proof, it is convenient for us to adapt the following inverse energy representation.
Let $\{\ket{n}\}_{n=0}^{d-1}$ be the energy basis for the given local system Hamiltonian, and we assume the corresponding energies $E_n$ satisfies $0\le E_0\le E_1\le...\le E_{d-1}$.
Define $\tket{n}\coloneqq\ket{d-1-n}$ and $\tE{n}\coloneqq E_{d-1-n}$, which means now the ground state is $\tket{d-1}$, and the corresponding energy is $\tE{d-1}$.
In particular, we have the hierarchy $\tE{0}\ge \tE{1}\ge...\ge \tE{d-1}\ge0$.
In what follows, we also adapt the notations $\Delta^{\rm X}_{d-2}\coloneqq\{\delta_i^{\rm X}\}_{i=0}^{d-2}$ and $\Delta^{\rm AB}_{d-2}\coloneqq\{\delta_i^{\rm A};\delta_i^{\rm B}\}_{i=0}^{d-2}$, which are regarded as vectors in $[0,1]^{(d-1)}$ and $[0,1]^{2(d-1)}$, respectively.
In this line, we further define $\wt{\mathcal{E}}_{\Delta^{\rm AB}_{d-2}} = \wt{\mathcal{E}}_{\rm A}\otimes\wt{\mathcal{E}}_{\rm B}$ and $\wt{\mathcal{E}}_{\Delta^{\rm X}_{d-2}} = \wt{\mathcal{E}}_{\rm X}$, where $\wt{\mathcal{E}}_{\rm X}$ is induced by $\{\delta_i^{\rm X}\}_{i=0}^{d-2}$ via Eq.~\eqref{Eq:LocalMap}.

In this appendix we use AB to emphasize the bipartition, and we always consider equal finite local dimensions indicated as $d$; that is, the global system can be written as $\bip$.
Now we prove the following result, which has Lemma~\ref{Prop} as a direct corollary: 
\begin{alemma}\label{lemma}
Given a pair of two single party states $(\eta_{\rm A},\eta_{\rm B})$ of the form $\eta_{\rm X} = \sum_{n=0}^{d-1}\tqx{n}\tproj{n}$ with $0\le\tqx{0}\le\tqx{1}\le...\le\tqx{d-1}\le 1$ (${\rm X=A,B}$).
Then there exist a vector $\Delta^{\rm AB}_{d-2}$ whose components are given by
\begin{align}\label{Eq:Uniquely-determined}
\delta_n^{\rm X} = 1 - \frac{d\tqx{n}}{\Gamma_{n-1}^{\rm X}},
\end{align}
where $\Gamma_{n-1}^{\rm X}\coloneqq 1 + \sum_{i=0}^{n-1} \prod_{j=i}^{n-1} \delta_j$ if $n>0$ and $\Gamma_{-1}^{\rm X}\coloneqq 1$,
such that for all $\rho$ we have
\begin{align}
{\rm tr}_{\rm B}\left[\left(\wt{\mE}_{\Delta^{\rm AB}_{d-2}}\circ\mathcal{T}\right)(\rho)\right]=\eta_{\rm A}\quad;\quad{\rm tr}_{\rm A}\left[\left(\wt{\mE}_{\Delta^{\rm AB}_{d-2}}\circ\mathcal{T}\right)(\rho)\right]=\eta_{\rm B}.
\end{align}
\end{alemma}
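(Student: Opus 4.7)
My plan is to reduce the bipartite condition to two independent single-party equations and then invert a triangular system by induction. First, since $\mathcal{T}(\rho)$ is always an isotropic state (Eq.~\eqref{Eq:rIso}), its reduced density matrix on either subsystem equals $\id/d$ for every input $\rho$. Because $\wt{\mathcal{E}}_{\Delta_{d-2}^{\rm AB}}$ factorizes as $\wt{\mathcal{E}}_{\rm A}\otimes\wt{\mathcal{E}}_{\rm B}$, the two partial traces in the claim reduce to $\wt{\mathcal{E}}_{\rm A}(\id_{\rm A}/d)$ and $\wt{\mathcal{E}}_{\rm B}(\id_{\rm B}/d)$ respectively, independently of $\rho$. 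It therefore suffices to show that, for each ${\rm X}\in\{{\rm A},{\rm B}\}$, the vector $\Delta^{\rm X}_{d-2}$ from Eq.~\eqref{Eq:Uniquely-determined} has entries in $[0,1]$ and satisfies $\wt{\mathcal{E}}_{\rm X}(\id_{\rm X}/d)=\eta_{\rm X}$.

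Next, I would analyze how the local composition~\eqref{Eq:LocalMap} propagates populations. Each elementary map $\wt{\mathcal{E}}_{\delta_i^{\rm X}}$ kills off-diagonal entries in the energy basis and moves a fraction $\delta_i^{\rm X}$ of the weight at $\tket{i}$ to $\tket{i+1}$, leaving every other population untouched. Starting from $\id_{\rm X}/d$, a short induction on $n$ shows that after applying $\wt{\mathcal{E}}_{\delta_n^{\rm X}}\circ\cdots\circ\wt{\mathcal{E}}_{\delta_0^{\rm X}}$ the diagonal weight at $\tket{k}$ equals $(1-\delta_k^{\rm X})\Gamma_{k-1}^{\rm X}/d$ for $k\le n$, equals $\Gamma_n^{\rm X}/d$ at $k=n+1$, and remains $1/d$ for $k\ge n+2$. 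The recursion $\Gamma_n^{\rm X}=1+\delta_n^{\rm X}\Gamma_{n-1}^{\rm X}$, with $\Gamma_{-1}^{\rm X}=1$, unfolds exactly to the closed form used in the Lemma.

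Setting the final weights equal to the targets $\tqx{n}$ gives $(1-\delta_n^{\rm X})\Gamma_{n-1}^{\rm X}/d=\tqx{n}$ for $n\le d-2$, a triangular system in the $\delta$'s that inverts directly to Eq.~\eqref{Eq:Uniquely-determined}. The residual identity $\Gamma_{d-2}^{\rm X}/d=\tqx{d-1}$ at the top level is then automatic, since $\wt{\mathcal{E}}_{\rm X}$ is trace preserving and $\sum_n\tqx{n}=1$.

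The subtle step is verifying $\delta_n^{\rm X}\in[0,1]$. The bound $\delta_n^{\rm X}\le 1$ reduces to $\tqx{n}\ge 0$. For $\delta_n^{\rm X}\ge 0$ I would combine the matching equations with the recursion to get $\Gamma_n^{\rm X}-\Gamma_{n-1}^{\rm X}=1-d\tqx{n}$, telescope to obtain $\Gamma_{n-1}^{\rm X}=n+1-d\sum_{k=0}^{n-1}\tqx{k}$, and then observe that $d\tqx{n}\le\Gamma_{n-1}^{\rm X}$ is equivalent to $\sum_{k=0}^{n}\tqx{k}\le(n+1)/d$. This last inequality is where the hypothesis $\tqx{0}\le\cdots\le\tqx{d-1}$ finally enters: among $d$ non-negative numbers summing to one, the smallest $n+1$ of them have sum at most $(n+1)/d$, which closes the proof.
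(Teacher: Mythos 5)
Your proposal is correct and follows essentially the same route as the paper: reduce to the maximally mixed input, track the diagonal populations through the cascade $\wt{\mathcal{E}}_{\delta_{d-2}^{\rm X}}\circ\cdots\circ\wt{\mathcal{E}}_{\delta_0^{\rm X}}$ by induction using the recursion $\Gamma_n^{\rm X}=1+\delta_n^{\rm X}\Gamma_{n-1}^{\rm X}$, invert the resulting triangular system, and verify $\delta_n^{\rm X}\in[0,1]$ via the telescoped identity $\Gamma_{n-1}^{\rm X}=n+1-d\sum_{k=0}^{n-1}\tqx{k}$ together with the fact that the smallest $n+1$ of $d$ nonnegative numbers summing to one have sum at most $(n+1)/d$ (the paper's Eq.~\eqref{SubFact}). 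The one genuine divergence is the first step: you invoke the general identity ${\rm tr}_{\rm B}\circ(\mathcal{E}_{\rm A}\otimes\mathcal{E}_{\rm B})=\mathcal{E}_{\rm A}\circ{\rm tr}_{\rm B}$, valid for \emph{every} bipartite input because $\mathcal{E}_{\rm B}$ is trace preserving, whereas the paper proves this commutation only for separable inputs (Fact~\ref{Fact:Commutative}) and then has to observe that $\wt{\mathcal{E}}_{\Delta_{d-2}^{\rm A}}\otimes\mathcal{I}_{\rm B}$ maps isotropic states to separable ones. Your version is cleaner and removes an unnecessary hypothesis; it also makes transparent that the marginals of $\bigl(\wt{\mE}_{\Delta^{\rm AB}_{d-2}}\circ\mathcal{T}\bigr)(\rho)$ are independent of $\rho$ simply because every isotropic state has marginal $\id/d$. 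Your range check is likewise slightly more direct than the paper's intermediate inequality $\Gamma_{n-1}^{\rm X}/d\ge(1-A_n^{\rm X})/(d-n)$, though both ultimately rest on the same ordering argument.
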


As a remark, we note that $\Delta^{\rm AB}_{d-2}$ is {\em uniquely} determined by $\eta_{\rm X}$ due to Eq.~\eqref{Eq:Uniquely-determined}.

\begin{proof}
({\em Proof of Lemma~\ref{lemma}.}) 

Recall that $\mathcal{T}(\rho) = \rIso (p)$ for some $p$ value [Eq.~\eqref{Eq:rIso}].
We first prove the case when $p=0$.
By using the property of isotropic state, we can prove the result for arbitrary $p$ value.
Let us start with the following fact: 
\begin{afact}\label{Fact:Formula}
For the local system ${\rm X}$, we have
\begin{align}
\wt{\mathcal{E}}_{\Delta_{d-2}^{\rm X}}\left(\frac{\id}{d}\right)= \sum_{i=0}^{d-2}\frac{\Gamma_{i-1}^{\rm X}}{d}(1-\delta_i^{\rm X})\tproj{i} + \frac{\Gamma_{d-2}^{\rm X}}{d}\tproj{d-1},
\end{align}
where $\Gamma_i^{\rm X}\coloneqq 1 + \sum_{n=0}^i \prod_{j=n}^i \delta_j^{\rm X}$ and we define $\Gamma_{-1}^{\rm X}\coloneqq1$.
\end{afact}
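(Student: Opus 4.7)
The plan is to prove this by induction on the number of channels composed. I would define, for $k=0,1,\ldots,d-1$, the intermediate state $\rho^{(k)}\coloneqq(\wt{\mE}_{\delta_{k-1}^{\rm X}}\circ\cdots\circ\wt{\mE}_{\delta_0^{\rm X}})(\id/d)$ with the convention $\rho^{(0)}=\id/d$, and establish by induction on $k$ that
\begin{align}
\rho^{(k)}=\sum_{i=0}^{k-1}\frac{\Gamma_{i-1}^{\rm X}(1-\delta_i^{\rm X})}{d}\tproj{i}+\frac{\Gamma_{k-1}^{\rm X}}{d}\tproj{k}+\sum_{n=k+1}^{d-1}\frac{1}{d}\tproj{n}.
\end{align}
Specializing to $k=d-1$ (for which the last sum is empty) would then yield the claimed identity, since $\wt{\mathcal{E}}_{\Delta_{d-2}^{\rm X}}(\id/d)=\rho^{(d-1)}$.

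For the base case $k=1$, direct evaluation using Eq.~\eqref{Eq:element} shows that $\wt{\mE}_{\delta_0^{\rm X}}(\id/d)$ has diagonal weight $(1-\delta_0^{\rm X})/d$ on $\tket{0}$, weight $1/d+\delta_0^{\rm X}/d$ on $\tket{1}$, and weight $1/d$ on each $\tket{n}$ with $n\geq 2$; this matches the formula above with $\Gamma_{-1}^{\rm X}=1$ and $\Gamma_0^{\rm X}=1+\delta_0^{\rm X}$. For the inductive step, I would use that $\wt{\mE}_{\delta_k^{\rm X}}$ acts as the identity on $\tproj{i}$ for $i\neq k$ and sends $\tproj{k}\mapsto(1-\delta_k^{\rm X})\tproj{k}+\delta_k^{\rm X}\tproj{k+1}$ (again reading off Eq.~\eqref{Eq:element}). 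Applying this to the inductive expression for $\rho^{(k)}$ leaves the $i<k$ terms untouched, multiplies the coefficient of $\tproj{k}$ by $(1-\delta_k^{\rm X})$ giving $\Gamma_{k-1}^{\rm X}(1-\delta_k^{\rm X})/d$, and changes the coefficient of $\tproj{k+1}$ from $1/d$ to $(1+\Gamma_{k-1}^{\rm X}\delta_k^{\rm X})/d$.

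The only algebraic point to verify is the recurrence $\Gamma_k^{\rm X}=1+\delta_k^{\rm X}\Gamma_{k-1}^{\rm X}$, which closes the induction on the coefficient of $\tproj{k+1}$. This is immediate by factoring $\delta_k^{\rm X}$ out of every summand in the definition $\Gamma_k^{\rm X}=1+\sum_{n=0}^{k}\prod_{j=n}^{k}\delta_j^{\rm X}$, leaving $\sum_{n=0}^{k-1}\prod_{j=n}^{k-1}\delta_j^{\rm X}+1=\Gamma_{k-1}^{\rm X}$ inside the parentheses. There is no substantive obstacle to this proof; the argument is pure bookkeeping, and the only mild subtlety is handling the boundary convention $\Gamma_{-1}^{\rm X}=1$ at the base case and checking that the ground state $\tket{d-1}$, which is never the source of any population transfer in the composition, correctly inherits the cumulative weight $\Gamma_{d-2}^{\rm X}/d$ at the final step.
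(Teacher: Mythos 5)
Your proof is correct and follows essentially the same route as the paper: an induction over the composed maps using the intermediate diagonal formula (the paper's $\wt{\mathcal{E}}_{\Delta_n^{\rm X}}(\id/d)$ is your $\rho^{(n+1)}$) and closing the induction with the recursion $\Gamma_k^{\rm X}=\delta_k^{\rm X}\Gamma_{k-1}^{\rm X}+1$. The only cosmetic difference is that you carry the induction through the final step $k=d-1$, whereas the paper applies $\wt{\mathcal{E}}_{\delta_{d-2}^{\rm X}}$ separately at the end.
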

\begin{proof}
Let us use mathematical induction to prove the following formula for all $n\in\mathbb{Z}_{d-2}$:
\begin{align}
\wt{\mathcal{E}}_{\Delta_n^{\rm X}}\left(\frac{\id}{d}\right) =\sum_{i=0}^{n} \frac{\Gamma_{i-1}^{\rm X}}{d}(1-\delta_i^{\rm X})\tproj{i} + \frac{\Gamma_{n}^{\rm X}}{d}\tproj{n+1} + \frac{1}{d}\sum_{j=n+2}^{d-1}\tproj{j}.\quad
\end{align}
First, direct computation can prove the case for $n=0,1$.
Now, let us assume the correctness of the above formula for $n$ in $\mathbb{Z}_{d-3}$ and compute the result for $n+1$:
\begin{align}
\wt{\mathcal{E}}_{\Delta_{n+1}^{\rm X}}\left(\frac{\id}{d}\right) &= \wt{\mathcal{E}}_{\delta_{n+1}^{\rm X}}\circ\wt{\mathcal{E}}_{\Delta_n^{\rm X}}\left(\frac{\id}{d}\right)\nonumber\\
&=\sum_{i=0}^{n+1} \frac{\Gamma_{i-1}^{\rm X}}{d}(1-\delta_i^{\rm X})\tproj{i} + \frac{1}{d}\left[\Gamma_n^{\rm X}\delta_{n+1}^{\rm X} + 1\right]\tproj{n+2} + \frac{1}{d}\sum_{j=n+3}^{d-1}\tproj{j}.
\end{align}\label{Eq:AppRecursion}
The result follows by observing the following recursion relation:
\begin{align}\label{Eq:GammaRecursionRelation}
\Gamma_i^{\rm X} = \Gamma_{i-1}^{\rm X}\times\delta_i^{\rm X} +1.
\end{align}
Hence, by mathematical induction, the formula works for all $n\in\mathbb{Z}_{d-2}$. 
Finally, one can apply $\wt{\mathcal{E}}_{\delta_{d-2}^{\rm X}}$ on $\wt{\mathcal{E}}_{\Delta_{d-3}^{\rm X}}\left(\frac{\id}{d}\right)$ and obtain the desired result.
\end{proof}

From the above fact, we know the final state is diagonal in the predefined energy eigenbasis.
Now we need to make sure this final state, $\wt{\mathcal{E}}_{\Delta_{d-2}^{\rm X}}\left(\frac{\id}{d}\right)$, can {\em always} be the desired state $\eta_{\rm X}$.
Intuitively, this may be achievable by tuning $\Delta_{d-2}^{\rm X}$.
Formally, we prove the following result:

\begin{afact}\label{Fact:Well-defined}
Given a single party state in the local system ${\rm X}$ of the form $\eta_{\rm X} = \sum_{n=0}^{d-1}\tqx{n}\tproj{n}$ with $0\le\tqx{0}\le\tqx{1}\le...\le\tqx{d-1}\le 1$.
Then there exists a vector $\Delta_{d-2}^{\rm X}$ such that $\tilde{\mathcal{E}}_{\Delta_{d-2}^{\rm X}}\left(\frac{\id}{d}\right) = \eta_{\rm X}$.
\end{afact}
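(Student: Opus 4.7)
My plan is to exploit the closed-form expression for $\wt{\mathcal{E}}_{\Delta^{\rm X}_{d-2}}(\id/d)$ obtained in Fact~\ref{Fact:Formula}, which is already diagonal in the basis $\{\tket{n}\}_{n=0}^{d-1}$. Matching the $n$th diagonal entry against the target coefficient $\tqx{n}$ produces $d-1$ equations for $n=0,1,\dots,d-2$ (one for each $\delta_n^{\rm X}$), plus one consistency equation at the top index $n=d-1$. Solving the first $d-1$ equations iteratively from $n=0$ upward yields the explicit formula $\delta_n^{\rm X}=1-d\tqx{n}/\Gamma_{n-1}^{\rm X}$, which is precisely Eq.~\eqref{Eq:Uniquely-determined}. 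What remains is to verify that (i) each $\delta_n^{\rm X}$ so defined lies in $[0,1]$, and (ii) the top-index equation $\tqx{d-1}=\Gamma_{d-2}^{\rm X}/d$ is automatically satisfied.

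For the consistency check (ii), I would substitute the chosen $\delta_n^{\rm X}$ into the recursion $\Gamma_n^{\rm X}=\Gamma_{n-1}^{\rm X}\delta_n^{\rm X}+1$ established in the proof of Fact~\ref{Fact:Formula}, reducing it to $\Gamma_n^{\rm X}=\Gamma_{n-1}^{\rm X}-d\tqx{n}+1$, and then prove by induction the closed form $\Gamma_n^{\rm X}=n+2-d\sum_{k=0}^n\tqx{k}$ starting from $\Gamma_{-1}^{\rm X}=1$. Evaluating at $n=d-2$ and using the normalization $\sum_k\tqx{k}=1$ immediately gives $\Gamma_{d-2}^{\rm X}=d\tqx{d-1}$, so condition (ii) follows for free.

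The main obstacle is (i), the range condition $\delta_n^{\rm X}\in[0,1]$. The upper bound $\delta_n^{\rm X}\le 1$ is immediate from $\tqx{n}\ge 0$ together with $\Gamma_{n-1}^{\rm X}>0$ (which itself follows inductively from $\Gamma_{-1}^{\rm X}=1$ and each previously constructed $\delta_k^{\rm X}\ge 0$). The lower bound $\delta_n^{\rm X}\ge 0$ rewrites, via the closed form for $\Gamma_{n-1}^{\rm X}$, as the partial-sum inequality $\sum_{k=0}^n\tqx{k}\le(n+1)/d$ for every $n\le d-2$. I would prove this by contradiction: if the sum of the $n+1$ smallest eigenvalues strictly exceeded $(n+1)/d$, then their maximum $\tqx{n}$ would strictly exceed $1/d$; but because the sequence is non-decreasing, each of the remaining $d-n-1$ eigenvalues would also satisfy $\tqx{k}\ge\tqx{n}>1/d$, so summing all $d$ eigenvalues would give a value strictly greater than $1$, contradicting ${\rm tr}\,\eta_{\rm X}=1$. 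This bound, substituted back into the formula defining $\delta_n^{\rm X}$, closes the induction and delivers a well-defined vector $\Delta_{d-2}^{\rm X}\in[0,1]^{d-1}$ realizing $\wt{\mathcal{E}}_{\Delta_{d-2}^{\rm X}}(\id/d)=\eta_{\rm X}$, as required.
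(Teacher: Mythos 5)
Your proposal is correct and follows essentially the same route as the paper's proof: it rests on the closed form from Fact~\ref{Fact:Formula}, the explicit solution $\delta_n^{\rm X}=1-d\tqx{n}/\Gamma_{n-1}^{\rm X}$, the closed form $\Gamma_n^{\rm X}=n+2-d\sum_{k=0}^{n}\tqx{k}$, and the key partial-sum inequality $\sum_{k=0}^{n}\tqx{k}\le(n+1)/d$ established by the same ordering-plus-normalization contradiction. Your version is marginally more streamlined in that it reduces the range condition $\delta_n^{\rm X}\ge 0$ directly to that partial-sum bound rather than passing through the intermediate inequality $\Gamma_{n-1}^{\rm X}/d\ge(1-A_n^{\rm X})/(d-n)$, and it checks the top-index consistency explicitly instead of appealing to normalization, but these are cosmetic differences.
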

\begin{proof}
For $\tilde{\mathcal{E}}_{\Delta_{d-2}^{\rm X}}\left(\frac{\id}{d}\right)$ to be able to describe $\eta_{\rm X}$ with some vector $\Delta_{d-2}^{\rm X}$, Fact~\ref{Fact:Formula} tells us it is sufficient to have 
\begin{align}
\max_{\delta_n^{\rm X}}\frac{\Gamma_{n-1}^{\rm X}(1-\delta_n^{\rm X})}{d} \ge \tqx{n}
\end{align}
for all $n\in\mathbb{Z}_{d-1}$ (i.e. $0\le n\le d-2$).
Note that we do not need to deal with the state $\tket{d-1}$ because normalization will do the job.
Now we observe that for any given number $n\in\mathbb{Z}_{d-1}$, we have
$
1 = \sum_{i=0}^{d-1}\tqx{i}\ge (d-n)\tqx{n} +A_n^{\rm X},
$
where $A_{n}^{\rm X}\coloneqq\sum_{i=0}^{n-1}\tqx{i}$ for $n\neq 0$ and $A_0^{\rm X}\coloneqq 0$.
This means
\begin{align}
\tqx{n}\le\frac{1 - A_{n}^{\rm X}}{d-n}.
\end{align}
Together with $\max_{\delta_n^{\rm X}}\frac{\Gamma_{n-1}^{\rm X}(1-\delta_n^{\rm X})}{d} = \frac{\Gamma_{n-1}^{\rm X}}{d}$, we will use mathematical induction to prove the following statement: 
{\em Given a number $n\in\mathbb{Z}_{d-1}$, then for all $0\le i\le n$ there exists a $\delta_i^{\rm X}$ achieving $\tqx{i} = \frac{\Gamma_{i-1}^{\rm X}(1 - \delta_i^{\rm X})}{d}$, and we have
\begin{align}
\frac{\Gamma_{n-1}^{\rm X}}{d}\ge\frac{1 - A_n^{\rm X}}{d-n}.
\end{align}
}

To begin with, we first notice that it is true for $n=0,1$.
To prove this, one can see that when $n=0$, both sides are equal to $\frac{1}{d}$.
This means one can always choose $\tqx{0} = \frac{\Gamma_{-1}^{\rm X}(1-\delta_0^{\rm X})}{d}$ by choosing a proper $\delta_0^{\rm X}$.
This proves the statement for $n=0$.

When $n=1$, recall that we have
$
\frac{\Gamma_{0}^{\rm X}}{d} = \frac{\delta_0^{\rm X} + 1}{d}.
$
Now we note that because the formula works for $n=0$, which means we can choose $\delta_0^{\rm X}$ such that $\tqx{0} = \frac{\Gamma_{-1}^{\rm X}(1-\delta_0^{\rm X})}{d} = \frac{1 - \delta_0^{\rm X}}{d}$.
Together with the fact $A_1^{\rm X} = \tqx{0}$, we have
\begin{align}
\frac{1-A_1^{\rm X}}{d-1} = \frac{d - 1 + \delta_0^{\rm X}}{(d-1)d}\le\frac{d - 1 + (d - 1)\delta_0^{\rm X}}{(d-1)d}= \frac{\Gamma_{0}^{\rm X}}{d}.
\end{align}
This means one is able to choose a proper $\delta_1^{\rm X}$ to achieve $\tqx{1} = \frac{\Gamma_0^{\rm X}(1-\delta_1^{\rm X})}{d}$.
This completes the proof of $n=1$.

Now we assume the correctness of the statement for a given $n\le d-3$, and then we try to prove the case for $n+1$.
To do so, we note that the recursion relation Eq.~\eqref{Eq:GammaRecursionRelation} implies
$
\Gamma_i^{\rm X} = \Gamma_{i-1}^{\rm X}(\delta_i^{\rm X} - 1) +\Gamma_{i-1}^{\rm X} + 1.
$
Due to the correctness of the statement, we are allowed to choose $\Gamma_{i-1}^{\rm X}(\delta_i^{\rm X} - 1) = -d\tqx{i}$ for all $0\le i \le n$.
This means
\begin{align}
\Gamma_i^{\rm X} = -d\tqx{i} + \Gamma_{i-1}^{\rm X} + 1
\end{align}
for all $0\le i \le n$.
Using this new recursion relation, one can use mathematical induction to obtain (recall that $\Gamma_{-1}^{\rm X}\coloneqq1$)
\begin{align}
\frac{\Gamma_i^{\rm X}}{d} = \frac{2 + i}{d} - \sum_{j=0}^i \tqx{j} = \frac{2 + i}{d} - A_{i+1}^{\rm X}.
\end{align}
This means
\begin{align}
\frac{\Gamma_{n}^{\rm X}}{d} -  \frac{1-A_{n+1}^{\rm X}}{d-(n+1)} =  \frac{2 + n}{d} - \frac{1}{d-n-1} + \frac{ n+2- d}{d-n-1}A_{n+1}^{\rm X}.
\end{align}
Now we recall the hierarchy $0\le\tqx{0}\le\tqx{1}\le...\le\tqx{d-1}$.
This means the following fact:
\begin{align}\label{SubFact}
A_{n+1}^{\rm X} = \sum_{i=0}^n \tqx{i}\le\frac{n+1}{d}.
\end{align}
One can prove the above inequality by contradiction.
Assume the converse, which means $\sum_{i=0}^n \tqx{i}>\frac{n+1}{d}$ and $\sum_{j=n+1}^{d-1} \tqx{j}<1 - \frac{n+1}{d} = \frac{d-n-1}{d}$.
Then there exists $0\le i\le n$ and $n+1\le j\le d-1$ such that $\tqx{i} > \frac{1}{d} > \tqx{j}$, which is a contradiction because $\tqx{j}\ge \tqx{i}\ge0$.

Because $n\le d-3$, we have $\frac{n+2-d}{d-n-1}<0$, which is the pre-factor of the term $A_{n+1}^{\rm X}$.
This means
\begin{align}
\frac{\Gamma_{n}^{\rm X}}{d} -  \frac{1-A_{n+1}^{\rm X}}{d-n-1} \ge \frac{n + 2}{d} - \frac{1}{d-n-1} + \frac{n+2  - d}{d-n-1}\times\frac{n+1}{d}= 0.
\end{align}
This proves the formula for $n+1$, which consequently implies it is always possible to choose a $\delta_{n+1}^{\rm X}$ such that $\tqx{n+1} = \frac{\Gamma_{n}^{\rm X}(1-\delta_{n+1}^{\rm X})}{d}$.
This completes the proof of the statement by using mathematical induction.

Since the statement implies it is always possible to choose a $\Delta_{d-2}^{\rm X}$ to fit $\{\tqx{i}\}_{i=0}^{d-2}$ and since the normalization condition will fix the value for the component of $\tproj{d-1}$, the proof is completed.
\end{proof}
Using Fact~\ref{Fact:Formula} and Fact~\ref{Fact:Well-defined}, we learn the following result (recall that $d$ is the common finite local dimension for both subsystems ${\rm A}$ and ${\rm B}$):
\begin{acorollary}
Given a pair of single party states $(\eta_{\rm A},\eta_{\rm B})$ of the form $\eta_{\rm X} = \sum_{n=0}^{d-1}\tqx{n}\tproj{n}$ with $0\le\tqx{0}\le\tqx{1}\le...\le\tqx{d-1}\le 1$ ({\rm X=A,B}), then there exists a vector $\Delta_{d-2}^{\rm AB}$ such that the channel $\wt{\mE}_{\Delta^{\rm AB}_{d-2}}$ achieves
\begin{align}
\wt{\mE}_{\Delta^{\rm AB}_{d-2}}\left(\frac{\id}{d^2}\right) = \eta_{\rm A}\otimes\eta_{\rm B}.
\end{align}
\end{acorollary}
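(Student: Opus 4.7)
The plan is to derive this corollary as an immediate assembly of the two preceding facts, exploiting the tensor-product structure of $\wt{\mE}_{\Delta^{\rm AB}_{d-2}}$. By the notational setup at the beginning of this appendix, $\wt{\mE}_{\Delta^{\rm AB}_{d-2}}$ is defined to be the product $\wt{\mathcal{E}}_{\Delta^{\rm A}_{d-2}}\otimes\wt{\mathcal{E}}_{\Delta^{\rm B}_{d-2}}$, where each local factor is built by the composition in Eq.~\eqref{Eq:LocalMap}. Since the global maximally mixed state factorizes as $\id/d^2=(\id/d)\otimes(\id/d)$ and each local channel acts on a single subsystem, the image under $\wt{\mE}_{\Delta^{\rm AB}_{d-2}}$ factors as
$$\wt{\mE}_{\Delta^{\rm AB}_{d-2}}\!\left(\frac{\id}{d^2}\right)=\wt{\mathcal{E}}_{\Delta^{\rm A}_{d-2}}\!\left(\frac{\id}{d}\right)\otimes\wt{\mathcal{E}}_{\Delta^{\rm B}_{d-2}}\!\left(\frac{\id}{d}\right).$$

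Next, I would apply Fact~\ref{Fact:Well-defined} separately to each subsystem. Because the hypothesis of the corollary gives the non-decreasing eigenvalue hierarchy $\tqx{0}\le\tqx{1}\le\dots\le\tqx{d-1}$ for both ${\rm X=A}$ and ${\rm X=B}$, Fact~\ref{Fact:Well-defined} supplies vectors $\Delta^{\rm A}_{d-2}\in[0,1]^{d-1}$ and $\Delta^{\rm B}_{d-2}\in[0,1]^{d-1}$ (with components determined by the formula $\delta_n^{\rm X}=1-d\tqx{n}/\Gamma_{n-1}^{\rm X}$ from Fact~\ref{Fact:Formula}) such that $\wt{\mathcal{E}}_{\Delta^{\rm X}_{d-2}}(\id/d)=\eta_{\rm X}$. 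The A-choice and B-choice are independent since the underlying channels operate on disjoint subsystems.

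Concatenating the two component vectors into $\Delta^{\rm AB}_{d-2}$ and substituting into the tensor-product factorization above yields $\wt{\mE}_{\Delta^{\rm AB}_{d-2}}(\id/d^2)=\eta_{\rm A}\otimes\eta_{\rm B}$, which is the claim. There is no genuine obstacle at this step: the real difficulty — the inductive verification that the required parameters $\delta_n^{\rm X}$ all fall in the admissible range $[0,1]$, which hinges on the eigenvalue ordering together with the bound $A^{\rm X}_{n+1}\le(n+1)/d$ in Eq.~\eqref{SubFact} — was already handled in the proof of Fact~\ref{Fact:Well-defined}. The corollary is simply the statement that local single-party fits extend, by locality of the channel, to a simultaneous product fit on the bipartite system.
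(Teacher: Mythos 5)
Your proposal is correct and follows essentially the same route as the paper, which presents this corollary as an immediate consequence of Fact~\ref{Fact:Formula} and Fact~\ref{Fact:Well-defined} via the product structure $\wt{\mE}_{\Delta^{\rm AB}_{d-2}}=\wt{\mathcal{E}}_{\Delta^{\rm A}_{d-2}}\otimes\wt{\mathcal{E}}_{\Delta^{\rm B}_{d-2}}$ and the factorization $\id/d^2=(\id/d)\otimes(\id/d)$. Your explicit spelling-out of the tensor factorization and the independent application of Fact~\ref{Fact:Well-defined} to each subsystem is exactly the (implicit) argument the paper intends.
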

This describes the behavior when the input is a maximally mixed state.
Now, it remains to show the {\em same} output can occur when the input state is an isotropic state given by Eq.~\eqref{Eq:rIso}.
This can be done by the following relation between partial trace and local channel when acting on separable states:
\begin{afact}\label{Fact:Commutative}
Given a separable state $\rho=\sum_i f_i \rho^{\rm A}_i \otimes\rho^{\rm B}_i$ and two single party channels $\mathcal{E}_{\rm X}$ acting on the ${\rm X}$ system.
Then
\begin{align}
{\rm tr}_{\rm B}\left[(\mathcal{E}_{\rm A}\otimes\mathcal{E}_{\rm B})(\rho)\right] = \mathcal{E}_{\rm A}[{\rm tr}_{\rm B}(\rho)]\quad;\quad{\rm tr}_{\rm A}\left[(\mathcal{E}_{\rm A}\otimes\mathcal{E}_{\rm B})(\rho)\right] = \mathcal{E}_{\rm B}[{\rm tr}_{\rm A}(\rho)].
\end{align}
\end{afact}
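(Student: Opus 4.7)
The plan is to exploit bilinearity and reduce to the case of product states. First I would observe that both sides of each asserted identity are linear in $\rho$, so by expanding $\rho=\sum_i f_i\,\rho^{\rm A}_i\otimes\rho^{\rm B}_i$ and pulling the sum out front, it suffices to verify the two equalities on an arbitrary product state $\rho^{\rm A}\otimes\rho^{\rm B}$. This reduction is where the separability hypothesis enters: without it there is no guarantee that $\rho$ can be written as a (convex) combination of simple tensors.

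On a product state the computation is elementary. Since $(\mathcal{E}_{\rm A}\otimes\mathcal{E}_{\rm B})(\rho^{\rm A}\otimes\rho^{\rm B})=\mathcal{E}_{\rm A}(\rho^{\rm A})\otimes\mathcal{E}_{\rm B}(\rho^{\rm B})$, applying ${\rm tr}_{\rm B}$ and using the basic identity ${\rm tr}_{\rm B}(X\otimes Y)=X\cdot{\rm tr}(Y)$ gives $\mathcal{E}_{\rm A}(\rho^{\rm A})\cdot{\rm tr}[\mathcal{E}_{\rm B}(\rho^{\rm B})]=\mathcal{E}_{\rm A}(\rho^{\rm A})$, where trace-preservation of the channel $\mathcal{E}_{\rm B}$ is used in the last step. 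On the other hand, ${\rm tr}_{\rm B}(\rho^{\rm A}\otimes\rho^{\rm B})=\rho^{\rm A}$ and hence $\mathcal{E}_{\rm A}[{\rm tr}_{\rm B}(\rho^{\rm A}\otimes\rho^{\rm B})]=\mathcal{E}_{\rm A}(\rho^{\rm A})$ as well. The two expressions coincide, and the second identity follows by the same argument after interchanging the roles of ${\rm A}$ and ${\rm B}$. Putting the sum back in via linearity then yields the claim for the full separable $\rho$.

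I do not expect any genuine obstacle here: the statement is really a direct consequence of trace-preservation of a quantum channel combined with the tensor structure of product maps. In fact, the separability assumption is stronger than necessary, since the identity ${\rm tr}_{\rm B}\circ(\mathcal{E}_{\rm A}\otimes\mathcal{E}_{\rm B})=\mathcal{E}_{\rm A}\circ{\rm tr}_{\rm B}$ holds on every bipartite state whenever $\mathcal{E}_{\rm B}$ is trace-preserving (one can see this on any basis of product operators and extend by linearity). I would therefore write the proof in just a few lines and add a brief remark that separability is retained only to match how Fact~\ref{Fact:Commutative} is invoked inside the proof of Lemma~\ref{lemma}, where the input to $\wt{\mathcal{E}}_{\Delta^{\rm AB}_{d-2}}$ is the manifestly separable isotropic state of Eq.~\eqref{Eq:rIso}.
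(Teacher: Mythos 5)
Your proof is correct and follows essentially the same route as the paper's: expand over the separable decomposition, use ${\rm tr}_{\rm B}(X\otimes Y)=X\,{\rm tr}(Y)$ together with trace preservation of $\mathcal{E}_{\rm B}$, and reassemble by linearity. Your side remark that separability is not actually needed (the identity ${\rm tr}_{\rm B}\circ(\mathcal{E}_{\rm A}\otimes\mathcal{E}_{\rm B})=\mathcal{E}_{\rm A}\circ{\rm tr}_{\rm B}$ holds on all bipartite inputs by linearity over a product-operator basis) is also correct, though the paper does not make this point.
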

\begin{proof}
Due to separability, one have
\begin{align}
{\rm tr}_{\rm B}\left[(\mathcal{E}_{\rm A}\otimes\mathcal{E}_{\rm B})(\rho)\right] &= {\rm tr}_{\rm B}\left[\sum_i f_i \mathcal{E}_{\rm A}(\rho^{\rm A}_i) \otimes \mathcal{E}_{\rm B}(\rho^{\rm B}_i)\right]\nonumber\\
& = \sum_i f_i \mathcal{E}_{\rm A}(\rho^{\rm A}_i) {\rm tr}\left[\mathcal{E}_{\rm B}(\rho^{\rm B}_i)\right]\nonumber\\
& = \mathcal{E}_{\rm A}\left(\sum_i f_i \rho^{\rm A}_i\right)\nonumber\\
& = \mathcal{E}_{\rm A}\left[{\rm tr}_{\rm B}(\rho)\right].
\end{align}
Similar calculation proves the other case.
\end{proof}
Since both $(\wt{\mathcal{E}}_{\Delta_{d-2}^{\rm A}}\otimes\mathcal{I}_{\rm B})$ and $(\mathcal{I}_{\rm A}\otimes\wt{\mathcal{E}}_{\Delta_{d-2}^{\rm B}})$ map an isotropic state to a separable state (this can be seen by the fact that they will map $\ket{\Psi_d^+}$ to a separable state), the above fact means
\begin{align}
{\rm tr}_{\rm A}\left\{\wt{\mE}_{\Delta_{d-2}^{\rm AB}}[\rIso(p)]\right\}= {\rm tr}_{\rm A}\left\{\right(\mathcal{I}_{\rm A}\otimes\wt{\mathcal{E}}_{\Delta_{d-2}^{\rm B}})[\rIso(p)]\} =\wt{\mathcal{E}}_{\Delta_{d-2}^{\rm B}}\left(\frac{\id}{d}\right) 
\end{align}
for all $p$.
Similar result can be shown for Bob's local system by replacing ${\rm A}$ and ${\rm B}$.
In particular, this means
\begin{align}
{\rm tr}_{\rm B}\left\{\wt{\mE}_{\Delta_{d-2}^{\rm AB}}[\rIso(p)]\right\} = {\rm tr}_{\rm B}\left\{\wt{\mE}_{\Delta_{d-2}^{\rm AB}}[\rIso(0)]\right\} = {\rm tr}_{\rm B}\left[\wt{\mE}_{\Delta_{d-2}^{\rm AB}}\left(\frac{\id}{d^2}\right)\right] = \eta_{\rm A}
\end{align}
for all $p$.
Similar argument proves
\begin{align}
{\rm tr}_{\rm A}\left\{\wt{\mE}_{\Delta_{d-2}^{\rm AB}}[\rIso(p)]\right\} = \eta_{\rm B}
\end{align}
Finally, because $\mathcal{T}(\rho)$ will be an isotropic state for {\em any} state $\rho$, the result follows.
({\em End of Proof of Lemma~\ref{lemma}}.)
\end{proof}

\subsection{Remarks}
Here we make some remarks.
First, note that Fact~\ref{Fact:Well-defined} can apply on arbitrary single party thermal state.
As another remark, we note that for a given $\eta_{\rm X}=\sum_{n=0}^{d-1}\tqx{n}\tproj{n}$, there is a {\em uniquely} determined vector $\Delta_{d-2}^{\rm X}$ which can realize it.
To find this vector $\Delta_{d-2}^{\rm X}$, one can start from $\delta_0^{\rm X}$, which is given by
\begin{align}
\delta_0^{\rm X} = 1 - d\tqx{0}.
\end{align}
After determining $\delta_0^{\rm X}$, one can determine $\delta_1^{\rm X}$, which is given by
\begin{align}
\delta_1^{\rm X} = 1 - \frac{d\tqx{1}}{\Gamma_0^{\rm X}} = 1 - \frac{d\tqx{1}}{2 - d\tqx{0}}.
\end{align}
In general, one can determine $\delta_n^{\rm X}$ by the following formula:
\begin{align}\label{Eq:GeneralFormula-delta}
\delta_n^{\rm X} = 1 - \frac{d\tqx{n}}{\Gamma_{n-1}^{\rm X}},
\end{align}
this is because after knowing $\delta_i^{\rm X}$ for $0\le i\le n-1$, one can directly compute $\Gamma_{n-1}^{\rm X}$.

\section{Proof of Theorem~\ref{Result:LocalTher}}\label{App:Proof_Result:LocalTher}
To prove Theorem~\ref{Result:LocalTher}, first we prove Eq.~\eqref{Thm:distance} in Appendix~\ref{App:LongDistance}.
As the next step in Appendix~\ref{App:Prop3} we prove a lemma, which is a preliminary result for the proof of Eq.~\eqref{Result:LocalTherEq} given in Appendix~\ref{App:ShortDistance}.

In the proof of Eq.~\eqref{Thm:distance}, we will use the EPLT candidate constructed in Ref.~\cite{Hsieh2019}, which is given by:
\begin{align}\label{Eq:TheMap}
\mE_{(\gamma_{\rm A},\gamma_{\rm B})}^{\ep}\coloneqq (1 - \ep)\Phi_{\eta_{\rm A}^{\ep}\otimes\eta_{\rm B}^{\ep}}\circ\mathcal{T} + \ep\mathcal{T},
\end{align}
where the $(U\otimes U^*)$-twirling $\mathcal{T}$ is defined in Eq.~\eqref{Eq:Twirling}, $\Phi_\rho(\cdot) = \rho$ is the constant map, and $\eta_{\rm X}^{\ep}$ is defined in Eq.~\eqref{Eq:eta}.
$\mE_{(\gamma_{\rm A},\gamma_{\rm B})}^{\ep}$ is proved to be an EPLT to $(\gamma_{\rm A},\gamma_{\rm B})$ for all full-rank $\gamma_{\rm A}$ and $\gamma_{\rm B}$~\cite{Hsieh2019}.
As a remark, we note that Eq.~\eqref{Eq:TheMap} can be interpreted as the twirling operation $\mathcal{T}$ followed by a partial thermalization channel $(\cdot)\mapsto(1 - \ep)\eta_{\rm A}^{\ep}\otimes\eta_{\rm B}^{\ep} + \ep(\cdot)$, which can be thought as a generalized depolarizing channel with finite local temperatures (captured by the thermal states $\eta_{\rm A}^{\ep}$ and $\eta_{\rm B}^{\ep}$).

\subsection{Proof of Eq.~(\ref{Thm:distance})}\label{App:LongDistance}
\begin{proof}
We compute the lower bound for the map $\mE_{(\gamma_{\rm A},\gamma_{\rm B})}^{\ep}$ defined in Eq.~\eqref{Eq:TheMap} (for a channel $\mE$ we write $\norm{\mE}_\infty\coloneqq\sup_\rho\norm{\mE(\rho)}_\infty$):
\begin{align}
\inf_{\Lambda\in\mathcal{O}_{\rm FE}^N}\norm{\mathcal{E}^{\ep}_{(\gamma_{\rm A},\gamma_{\rm B})} - \Lambda}_\diamond&\ge\inf_{\Lambda\in\mathcal{O}_{\rm FE}^N}\norm{\mathcal{E}^{\ep}_{(\gamma_{\rm A},\gamma_{\rm B})} - \Lambda}_1\nonumber\\
&=\inf_{\Lambda\in\mathcal{O}_{\rm FE}^N}\norm{\left[(1 - \ep)\Phi_{\eta_{\rm A}^{\ep}\otimes\eta_{\rm B}^{\ep}}\circ\mathcal{T} + \ep\mathcal{T}\right] - \Lambda}_1\nonumber\\
&=\inf_{\Lambda\in\mathcal{O}_{\rm FE}^N}\norm{(1-\ep)\left(\Phi_{\eta_{\rm A}^{\ep}\otimes\eta_{\rm B}^{\ep}}\circ\mathcal{T} - \Lambda\right) + \ep\left(\mathcal{T} - \Lambda\right)}_1\nonumber\\
&\ge\inf_{\Lambda\in\mathcal{O}_{\rm FE}^N}\left|(1-\ep)\norm{\Phi_{\eta_{\rm A}^{\ep}\otimes\eta_{\rm B}^{\ep}}\circ\mathcal{T} - \Lambda}_1 - \ep\norm{\mathcal{T} - \Lambda}_1\right|,\nonumber \\
&\ge\inf_{\Lambda\in\mathcal{O}_{\rm FE}^N}\left[\ep\norm{\mathcal{T} - \Lambda}_1 - (1-\ep)\norm{\Phi_{\eta_{\rm A}^{\ep}\otimes\eta_{\rm B}^{\ep}}\circ\mathcal{T} - \Lambda}_1\right],
\end{align}
where the fourth line follows from the inverse triangle inequality of the trace norm. 
Using the estimate $\norm{\Phi_{\eta_{\rm A}^{\ep}\otimes\eta_{\rm B}^{\ep}}\circ\mathcal{T} - \Lambda}_1\le\norm{\Phi_{\eta_{\rm A}^{\ep}\otimes\eta_{\rm B}^{\ep}}\circ\mathcal{T}}_1+\norm{\Lambda}_1=2$, we have
\begin{align}
\inf_{\Lambda\in\mathcal{O}_{\rm FE}^N}\norm{\mathcal{E}^{\ep}_{(\gamma_{\rm A},\gamma_{\rm B})} - \Lambda}_\diamond\ge\ep\inf_{\Lambda\in\mathcal{O}_{\rm FE}^N}\norm{\mathcal{T} - \Lambda}_1 - 2(1-\ep).
\end{align}
Now we bound $\inf_{\Lambda\in\mathcal{O}_{\rm FE}^N}\norm{\mathcal{T} - \Lambda}_1$. 
Denoting by $\rho_{\rm b}$ an arbitrary state which is not free entangled and by $\rIso$ an arbitrary isotropic state [defined in Eq.~\eqref{Eq:rIso}], we have
\begin{align}\label{Eq:App_T_lowerbound}
\inf_{\Lambda\in\mathcal{O}_{\rm FE}^N}\norm{\mathcal{T} - \Lambda}_1&\coloneqq\inf_{\Lambda\in\mathcal{O}_{\rm FE}^N}\sup_{\rho}\norm{\mathcal{T}(\rho) - \Lambda(\rho)}_1\nonumber\\
&\ge\inf_{\rho_{\rm b}}\sup_{\rIso}\norm{\rIso - \rho_{\rm b}}_1\nonumber\\
&\ge\inf_{\rho_{\rm b}}\sup_{\rIso}\left|\bra{\Psi_d^+}\rIso\ket{\Psi_d^+} - \bra{\Psi_d^+}\rho_{\rm b}\ket{\Psi_d^+}\right|\nonumber\\
&=\inf_{\rho_{\rm b}}\left|1 - \bra{\Psi_d^+}\rho_{\rm b}\ket{\Psi_d^+}\right| \nonumber\\
&=1 - \frac{1}{d},
\end{align}
where we note that $\norm{\cdot}_1\coloneqq{\rm tr}|\cdot|\ge \norm{\cdot}_\infty \coloneqq \sup_{\ket{\psi}}|\bra{\psi}(\cdot)\ket{\psi}|$ and the last equality is due to the sufficient condition $\bra{\Psi_d^+}\rho\ket{\Psi_d^+}>\frac{1}{d}$ of distillability~\cite{Horodecki1998} for a quantum state $\rho$.
Hence,
\begin{align}
\inf_{\Lambda\in\mathcal{O}_{\rm FE}^N}\norm{\mathcal{E}^{\ep}_{(\gamma_{\rm A},\gamma_{\rm B})} - \Lambda}_\diamond\ge\ep\left(1 - \frac{1}{d}\right) - 2(1-\ep).
\end{align}
The strongest bound is achieved by taking the largest $\ep$ allowed by Eq.~\eqref{Eq:range}, giving
\begin{align}
\inf_{\Lambda\in\mathcal{O}_{\rm FE}^N}\norm{\mathcal{E}^{\ep}_{(\gamma_{\rm A},\gamma_{\rm B})} - \Lambda}_\diamond\ge (3d-1) p_{\rm min} -2.
\end{align}
The proof is completed.
\end{proof}

\subsection{Preliminary for the Proof of Eq~(\ref{Result:LocalTherEq})}\label{App:Prop3}
As the first step, we show the following lemma:
\begin{alemma}\label{Prop3}
For every $\ep\in\left(0,\frac{1}{2}\right]$, there exists $\tau_{\ep}\in(0,+\infty)$ such that if $\min_{\rm X=A,B}\tau_{\rm X}>\tau_{\ep}$ then $\wt{\mathcal{E}}^{\ep}_{(\gamma_{\rm A},\gamma_{\rm B})}$ is a local thermalization to $(\gamma_{\rm A},\gamma_{\rm B})$ with $\mathcal{F}_{\rm max}[\wt{\mathcal{E}}^{\ep}_{(\gamma_{\rm A}, \gamma_{\rm B})}(\rho)] > \frac{1}{d}$ for some $\rho$.
\end{alemma}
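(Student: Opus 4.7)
The plan is to verify two conditions for $\tau_{\ep}$ chosen suitably large: (i) the map $\wt{\mE}^{\ep}_{(\gamma_{\rm A},\gamma_{\rm B})}$ satisfies the range condition Eq.~\eqref{Eq:range}, which by the discussion following that equation makes it a local thermalization to $(\gamma_{\rm A},\gamma_{\rm B})$; (ii) its output has fully entangled fraction strictly exceeding $1/d$ for at least one input state.

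For (i), the requirement from Eq.~\eqref{Eq:range} is $\ep\le d\,p_{\min}$. Using the elementary bound $Z_{\rm X}\coloneqq{\rm tr}(e^{-\beta_{\rm X}H_{\rm X}})\le d$ (since $H_{\rm X}\ge 0$), I would derive
\begin{align}
p_{\min}^{\rm X}\ge\frac{e^{-\beta_{\rm X}\norm{H_{\rm X}}_\infty}}{d}=\frac{e^{-1/\tau_{\rm X}}}{d},
\end{align}
hence $d\,p_{\min}\ge e^{-1/\min_{\rm X}\tau_{\rm X}}$. This bound depends only on $\tau_{\rm X}$ and $d$, not on the shape of $H_{\rm X}$, so $\ep\le d\,p_{\min}$ holds whenever $\min_{\rm X}\tau_{\rm X}\ge 1/\ln(1/\ep)$.

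For (ii), I would take the input $\rho=\proj{\Psi_d^+}$. Since $\mathcal{T}(\proj{\Psi_d^+})=\proj{\Psi_d^+}$, Eq.~\eqref{Eq:Small-epsilon-EPLT} immediately gives
\begin{align}
\mathcal{F}_{\rm max}[\wt{\mE}^{\ep}_{(\gamma_{\rm A},\gamma_{\rm B})}(\proj{\Psi_d^+})]\ge(1-\ep)\bra{\Psi_d^+}(\wt{\mE}_{\rm A}\otimes\wt{\mE}_{\rm B})(\proj{\Psi_d^+})\ket{\Psi_d^+}+\ep.
\end{align}
The key step is to show the overlap on the right tends to $1/d$ as $\min_{\rm X}\tau_{\rm X}\to\infty$. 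When $\tau_{\rm X}\to\infty$, $\gamma_{\rm X}\to\id/d$; since $\ep\le 1/2$ keeps $\ep/(1-\ep)$ bounded, Eq.~\eqref{Eq:eta} gives $\eta^{\ep}_{\rm X}\to\id/d$, so its eigenvalues $\tilde q_n^{\rm X}\to 1/d$. Applying Eq.~\eqref{Eq:GeneralFormula-delta} by induction on $n$ (using that $\Gamma^{\rm X}_n\in[1,d]$, so no denominator vanishes) yields $\delta_n^{\rm X}\to 0$. When $\delta_n^{\rm X}=0$, the building block in Eq.~\eqref{Eq:element} reduces to the full energy-basis dephasing $\Delta_{\rm X}(\cdot)\coloneqq\sum_j\tproj{j}(\cdot)\tproj{j}$, which is idempotent, so $\wt{\mE}_{\rm X}\to\Delta_{\rm X}$. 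A direct computation gives $(\Delta_{\rm A}\otimes\Delta_{\rm B})(\proj{\Psi_d^+})=\frac{1}{d}\sum_n\tproj{n}\otimes\tproj{n}$, whose overlap with $\proj{\Psi_d^+}$ equals $1/d$. Therefore the lower bound above tends to $\frac{1-\ep}{d}+\ep=\frac{1}{d}+\ep\bigl(1-\frac{1}{d}\bigr)>\frac{1}{d}$, using $\ep>0$ and $d\ge 2$.

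The main obstacle is to make this limit quantitative so that $\tau_{\ep}$ depends only on $\ep$ and $d$, not on the individual Hamiltonians. I would achieve this via the elementary expansion $e^{-\beta E}=1-\beta E+O((\beta E)^2)$ valid for $\beta E\le 1/\tau_{\rm X}$, which yields $|\tilde q_n^{\rm X}-1/d|=O(1/\tau_{\rm X})$ with a constant depending only on $d$. Propagating this estimate through Eq.~\eqref{Eq:GeneralFormula-delta}, and using $\Gamma^{\rm X}_{n-1}\ge 1$, gives $\delta_n^{\rm X}=O(1/\min_{\rm X}\tau_{\rm X})$ uniformly in $H_{\rm X}$; hence $\norm{\wt{\mE}_{\rm X}-\Delta_{\rm X}}_\diamond=O(1/\min_{\rm X}\tau_{\rm X})$ by expanding Eq.~\eqref{Eq:LocalMap} around $\delta_n^{\rm X}=0$. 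Combining with the FEF bound above yields
\begin{align}
\mathcal{F}_{\rm max}[\wt{\mE}^{\ep}_{(\gamma_{\rm A},\gamma_{\rm B})}(\proj{\Psi_d^+})]\ge\frac{1}{d}+\ep\bigl(1-\tfrac{1}{d}\bigr)-O(1/\min_{\rm X}\tau_{\rm X}),
\end{align}
so choosing $\tau_{\ep}$ larger than both $1/\ln(1/\ep)$ and the threshold at which the remainder falls below $\ep(1-1/d)/2$ completes the proof.
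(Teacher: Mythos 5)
Your proposal is correct and follows essentially the same route as the paper's proof: feed in $\proj{\Psi_d^+}$, use the FEF lower bound $\mathcal{F}_{\rm max}\ge(1-\ep)\bra{\Psi_d^+}(\wt{\mathcal{E}}_{\rm A}\otimes\wt{\mathcal{E}}_{\rm B})(\proj{\Psi_d^+})\ket{\Psi_d^+}+\ep$, and argue by continuity that as $\min_{\rm X}\tau_{\rm X}\to\infty$ one has $\gamma_{\rm X}\to\id/d$, hence $\Delta^{\rm AB}_{d-2}\to 0$ and the overlap tends to $1/d$, so the total exceeds $1/d$ by roughly $\ep(1-1/d)$. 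Your version is slightly more careful than the paper's on two points it leaves implicit — the explicit verification that $\ep\le d\,p_{\rm min}$ holds above the threshold (needed for $\eta^{\ep}_{\rm X}\ge 0$ and hence for the local-thermalization property), and the uniformity of the continuity estimates over the local Hamiltonians — but these are refinements of the same argument rather than a different proof.
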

\begin{proof}
First of all, we try to argue that $\bra{\Psi_d^+}\wt{\mE}_{\Delta^{\rm AB}_{d-2}}(\proj{\Psi_d^+})\ket{\Psi_d^+}$ can be arbitrarily close to $\frac{1}{d}$, even though $\wt{\mE}_{\Delta^{\rm AB}_{d-2}}(\proj{\Psi_d^+})$ is a separable state.
Write $\max_{\rm X=A,B}\norm{\gamma_{\rm X}- \frac{\id}{d}}_\infty<\delta_0$ for a given small positive value $\delta_0$, which implies $\max_{\rm X=A,B}\norm{\eta_{\rm X}^{\ep}-\frac{\id}{d}}_\infty<2\delta_0$ for $\ep\in\left(0,\frac{1}{2}\right]$.
By Lemma~\ref{Prop} and Appendix~\ref{App}, since the vector $\Delta^{\rm AB}_{d-2}$ is uniquely determined by $(\eta_{\rm A}^{\ep},\eta_{\rm B}^{\ep})$, $\max_{\rm X=A,B}\norm{\eta_{\rm X}^{\ep}-\frac{\id}{d}}_\infty<2\delta_0$ will imply the existence of a small value $\delta_1$ such that $\norm{\Delta^{\rm AB}_{d-2}}<\delta_1$ (one can see this by the structure of $\wt{\mE}_{\Delta^{\rm X}_{d-2}}$ given in Appendix~\ref{App}).
Hence, the continuity implies the existence of a small positive value $\delta = \delta(\delta_0)$ such that $\bra{\Psi_d^+}\wt{\mE}_{\Delta^{\rm AB}_{d-2}}(\proj{\Psi_d^+})\ket{\Psi_d^+} \ge \frac{1}{d}-\delta$, and $\delta$ can be as small as we want by choosing a proper $\delta_0$.

Now we note the following property of normalized temperature: $\gamma_{\rm X}\to\frac{\id}{d}$ if $\tau_{\rm X}\to\infty$; in other words, for a given value $\Delta$, there exists a normalized temperature threshold $\tau_\Delta$ such that $\max_{\rm X=A,B}\norm{\gamma_{\rm X}-\frac{\id}{d}}_\infty<\Delta$ if $\min_{\rm X=A,B}\tau_{\rm X}>\tau_\Delta$. 

Together with this property of normalized temperature, for a given $k\in\mathbb{N}$, there exists a $\Delta_k$ such that $\bra{\Psi_d^+}\wt{\mE}_{\Delta^{\rm AB}_{d-2}}(\proj{\Psi_d^+})\ket{\Psi_d^+} \ge \frac{1}{d}-\frac{1}{k}$ if $\min_{\rm X=A,B}\tau_{\rm X}>\Delta_k$.
Finally, for a given $\ep\in\left[0,dp_{\rm min}\right]$ we have the following estimate if $\min_{\rm X=A,B}\tau_{\rm X}>\Delta_k$:
\begin{align}
\mathcal{F}_{\rm max}[\wt{\mathcal{E}}^{\ep}_{(\gamma_{\rm A}, \gamma_{\rm B})}(\proj{\Psi_d^+})]\ge(1-\ep)\left(\frac{1}{d} - \frac{1}{k}\right) + \ep,
\end{align}
which is strictly larger than $\frac{1}{d}$ if $\ep>\frac{d}{k(d-1)+d}$.
Then for the given $\ep$, there exits a $k_{\ep}\coloneqq1+\left[\frac{(1-\ep)d}{\ep(d-1)}\right]$, where $[\cdot]$ is the Gauss' notion, such that $\mathcal{F}_{\rm max}[\wt{\mathcal{E}}^{\ep}_{(\gamma_{\rm A}, \gamma_{\rm B})}(\proj{\Psi_d^+})]>\frac{1}{d}$, thereby being free entangled, if $\min_{\rm X=A,B}\tau_{\rm X}>\tau_{\ep}\coloneqq\Delta_{k_{\ep}}$.
\end{proof}

\subsection{Proof of Eq~(\ref{Result:LocalTherEq})}\label{App:ShortDistance}
\begin{proof}
We show that $\wt{\mathcal{E}}^{\ep}_{(\gamma_{\rm A}, \gamma_{\rm B})}$ given in Eq.~\eqref{Eq:Small-epsilon-EPLT} can be arbitrarily close to the set $\mathcal{O}_{\rm E}^N$ (here E denotes entanglement) while preserving free entanglement for certain entangled input states. 
For any given $\delta>0$, there exists an \mbox{$\ep\in(0,1]$} small enough such that $\ep\times \norm{\mathcal{T} - \wt{\mathcal{E}}_{\rm A}\otimes\wt{\mathcal{E}}_{\rm B}\circ\mathcal{T}}_\diamond<\delta$. 
Lemma~\ref{Prop3} implies there exists $\tau_{\ep}\in(0,+\infty)$ such that for every pair $(\gamma_{\rm A},\gamma_{\rm B})$ with $\min_{\rm X=A,B}\tau_{\rm X}>\tau_{\ep}$, $\wt{\mathcal{E}}^{\ep}_{(\gamma_{\rm A},\gamma_{\rm B})}$ is an EPLT to $(\gamma_{\rm A},\gamma_{\rm B})$ that can preserve free entanglement and achieves
\begin{align}
\bar{P}_{\norm{\cdot}_1}\left[\wt{\mathcal{E}}^{\ep}_{(\gamma_{\rm A},\gamma_{\rm B})}\right]&\coloneqq\inf_{\Lambda_{\rm S}\in\mathcal{O}_{\rm E}^N}\sup_{{\rm A};\rho_{\rm SA}}\norm{\left[\wt{\mathcal{E}}^{\ep}_{(\gamma_{\rm A},\gamma_{\rm B})}\otimes\mathcal{I}_{\rm A}\right](\rho_{\rm SA}) - (\Lambda_{\rm S}\otimes\mathcal{I}_{\rm A})(\rho_{\rm SA})}_1\nonumber\\
&=\inf_{\Lambda_{\rm S}\in\mathcal{O}_{\rm E}^N}\norm{\wt{\mathcal{E}}^{\ep}_{(\gamma_{\rm A},\gamma_{\rm B})} - \Lambda_{\rm S}}_\diamond\nonumber\\
&\le\norm{\wt{\mathcal{E}}^{\ep}_{(\gamma_{\rm A},\gamma_{\rm B})} - \wt{\mathcal{E}}_{\rm A}\otimes\wt{\mathcal{E}}_{\rm B}\circ\mathcal{T}}_\diamond\nonumber\\
& = \ep\norm{\mathcal{T} - \wt{\mathcal{E}}_{\rm A}\otimes\wt{\mathcal{E}}_{\rm B}\circ\mathcal{T}}_\diamond\nonumber\\
&<\delta,
\end{align}
where $\sup_{{\rm A};\rho_{\rm SA}}$ is optimizing over all the ancillary system ${\rm A}$ and states $\rho_{\rm SA}$ on the system ${\rm SA}$.
By redefining $\tau_{\ep}$ to be the $\tau_\delta$ given in the statement of the theorem, the proof is completed.
\end{proof}

\newpage

\end{document}